\def\complexNumbers{\mathbb{C}}
\def\integers{\mathbb{Z}}
\def\integersPositive{\mathbb{Z}^{+}}
\def\integersNonnegative{\mathbb{Z}_{0}^{+}}
\def\constante{{\rm e}}
\def\constantj{{\rm j}}
\def\exponentialBase{\xi}
\def\lengthPillar{L}
\def\peakCrossCorrelatation[#1]{\rho_{\text{peak}}#1}
\def\constante{{\rm e}}
\def\constantj{{\rm j}}
\def\exponentialBase{\xi}
\def\constanti{{\rm i}}
\def\constantMinusi{{\rm j}}
\def\constantOne{{\rm +}}
\def\constantMinusOne{{\rm -}}
\def\upsampleVarA{k}
\def\upsampleVarB{l}
\def\paddingVar{d}
\def\lagForCorrelation{\tau}
\def\indexEleOfSeq{\iota}
\def\indexIteration{n}
\def\indexIterationANF{l}
\def\indexFirstOrderMonomial{j}
\def\orderMonomial[#1]{k_{#1}}
\def\coeffientsANF[#1]{c_{#1}}
\def\polyVariable{z}
\def\polyVariableIt{w}
\def\numberOfIterations{m}
\def\numberOfPointsForPSK{H}
\def\modulationSymbolF[#1]{m_{#1}}
\def\cardinalitySetOfOperators[#1]{{H}_{#1}}
\def\numberSequencesDifferentThanLEngthOne{A}
\def\numberSequencesNotColinear{B}
\def\varMonomial{x}
\def\monomial[#1]{x_{#1}}
\def\varUpsample{U}
\def\lengthGaGb{N}
\def\lengthGcGd{M}
\def\lengthGcGdIterative[#1]{M_{#1}}
\def\lengthGcGdContIterative[#1]{\dot{M}_{#1}}
\def\scaleAexp[#1]{a_{#1}}
\def\scaleBexp[#1]{b_{#1}}
\def\scaleEexp[#1]{e_{#1}}
\def\angleexp[#1]{c_{#1}}
\def\angleexpAll[#1]{k_{#1}}
\def\angleScaleAexp[#1]{\dot{c}_{#1}}
\def\angleScaleBexp[#1]{\ddot{c}_{#1}}
\def\arbitraryPhaseA{c'}
\def\arbitraryPhaseB{c''}
\def\separationGolay[#1]{d_{#1}}
\def\eleGa[#1]{{a}_{#1}}
\def\eleGb[#1]{{b}_{#1}}
\def\elex[#1]{{x}_{#1}}
\def\eles[#1]{{s}_{#1}}
\def\eley[#1]{{y}_{#1}}
\def\apac[#1][#2]{\rho_{#1}(#2)}
\def\apacPositive[#1][#2]{\rho^{+}_{#1}(#2)}
\def\binaryAsignment[#1][#2]{b_{#1}^{(#2)}}
\def\eleSeqf[#1]{{f}_{#1}}
\def\eleSeqg[#1]{{g}_{#1}}
\def\eleSeqcf[#1]{{c}_{f,#1}}
\def\eleSeqcg[#1]{{c}_{g,#1}}
\def\scaleA[#1]{\alpha_{#1}}
\def\scaleB[#1]{\beta_{#1}}
\def\angleGolay[#1]{\omega_{#1}}
\def\angleScaleA[#1][#2]{\dot{\omega}_{#1}^{#2}}
\def\angleScaleB[#1][#2]{\ddot{\omega}_{#1}^{#2}}
\def\permutationShift[#1]{{\psi_{#1}}}
\def\permutationMono[#1]{{\pi_{#1}}}
\def\permutationUpsample[#1]{{\kappa_{#1}}}
\def\permutationOrderEle[#1]{{\phi_{#1}}}
\def\symbolDuration{T_{\rm s}}
\def\timeVar{t}
\def\cubicMetric{CM}
\def\vNorm{v_{\rm norm}(\timeVar)}
\def\vNormCubic{v^3_{\rm norm}(\timeVar)}
\def\thershold{\beta}
\def\upsampleValueU{u}
\def\seedA{\mathcal{S}_{\rm c}''}
\def\seedB{\mathcal{S}_{\rm d}''}
\def\seedAp{\mathcal{S}_{\rm c}'}
\def\seedBp{{\mathcal{S}_{\rm d}'}}
\def\seedSize{V}
\def\seedSizeEq{W}
\def\indexForFirst{\dot{u}}
\def\indexForSecond{\dot{v}}
\def\setC{\mathcal{C}}
\def\setD{\mathcal{D}}
\def\setQ{\mathcal{Q}_1}
\def\numberOfSequences{K}
\def\indexSequence{i}
\def\funczArbitrary[#1]{z(#1)}
\def\funcaArbitrary[#1]{a(#1)}
\def\funcbArbitrary[#1]{b(#1)}
\def\coefficientArbitrary[#1]{k_{#1}}
\def\indexRB{k}
\def\distanceToPoint[#1]{d_{#1}}
\def\ratioBetweenDistanceAndInner[#1]{l_{#1}}
\def\angleBetweenPointAndXaxis[#1]{\theta_{#1}}
\def\angleBetweenPointAndXYDiagonal[#1]{\psi_{#1}}
\def\angleBetweenPointAndSymPoint[#1]{\xi_{#1}}
\def\numberOfClusters{N_{\rm rb}}
\def\RBsize{N_{\rm sc}}
\def\numberOfNulls{N_{\rm null}}
\def\dftSize{N_{\rm IDFT}}
\def\vecArrangement[#1]{\textbf{b}_{#1}}
\def\seqGa{\textit{\textbf{a}}}
\def\seqGb{\textit{\textbf{b}}}
\def\seqGaIt[#1]{\textit{\textbf{a}}^{(#1)}}
\def\seqGbIt[#1]{\textit{\textbf{b}}^{(#1)}}
\def\seqGc{\textit{\textbf{c}}}
\def\seqGd{\textit{\textbf{d}}}
\def\seedSeqGcCandidate[#1]{{\seqGc_{#1}''}}
\def\seedSeqGdCandidate[#1]{{\seqGd_{#1}''}}
\def\seqGcCandidate[#1]{{\seqGc_{#1}'}}
\def\seqGdCandidate[#1]{{\seqGd_{#1}'}}
\def\seqGcRecursion[#1]{{\seqGc_{#1}}}
\def\seqGdRecursion[#1]{{\seqGd_{#1}}}
\def\seqGcContRecursion[#1]{{\dot{\seqGc}_{#1}}}
\def\seqGdContRecursion[#1]{{\dot{\seqGd}_{#1}}}
\def\seqGcTildeRecursion[#1]{{\tilde{\seqGc}_{#1}}}
\def\seqGdTildeRecursion[#1]{{\tilde{\seqGd}_{#1}}}
\def\seqPermutationShift{\bm{\psi}}
\def\seqPermutationCompShift{\bm{\pi}}
\def\seqPermutationOrder{\bm{\phi}}
\def\seqPermutationUpsample{\bm{\kappa}}
\def\seqGf{\textit{\textbf{t}}}
\def\seqGg{\textit{\textbf{r}}}
\def\seqGx{\textit{\textbf{x}}}
\def\seqGs{\textit{\textbf{s}}}
\def\seqGsShift[#1]{{\textit{\textbf{s}}}_{#1}}
\def\seqf{\textit{\textbf{f}}}
\def\seqSub[#1]{\textit{\textbf{h}}_{#1}}
\def\seqFirstOrderMonomial[#1]{\textit{\textbf{m}}_{#1}}
\def\seqx{\textit{\textbf{x}}}
\def\seqToBeModulated[#1]{\textit{\textbf{s}}_{#1}}
\def\seqCelep[#1][#2]{\phi_{#1#2}}
\def\seqDelep[#1][#2]{\theta_{#1#2}}
\def\flipConjugate[#1]{{{\tilde{#1}}}}
\def\expectationOperator[#1]{{\mathbb{E}}[#1]}
\def\operator[#1][#2]{\mathcal{O}_{#1}^{(#2)}}
\def\operatordot[#1][#2]{\bar{\mathcal{O}}_{#1}^{(#2)}}
\def\compositeOperatorF[#1][#2]{{F}_{#1}{(#2)}}
\def\compositeOperatorG[#1][#2]{{G}_{#1}{(#2)}}
\def\compositeOperatorFdot[#1][#2]{\bar{F}_{#1}{(#2)}}
\def\compositeOperatorGdot[#1][#2]{\bar{G}_{#1}{(#2)}}
\def\compositeOperatorFcd[#1][#2]{{F}_{{\rm cd\tilde{c}\tilde{d}},#1}{(#2)}}
\def\compositeOperatorGcd[#1][#2]{{G}_{{\rm cd\tilde{c}\tilde{d}},#1}{(#2)}}
\def\compositeOperatorFangle[#1][#2]{{F}_{{\rm comp},#1}{(#2)}}
\def\compositeOperatorGangle[#1][#2]{{G}_{{\rm comp},#1}{(#2)}}
\def\setOfOperators[#1]{{\mathfrak{J}}_{#1}}
\def\operatorBinary[#1][#2]{O_{#1}^{(#2)}}
\def\operatorSign[#1][#2]{{\rm S}_{#1}^{(#2)}}
\def\operatorScaleA[#1][#2]{{\rm{A}}_{#1}^{(#2)}}
\def\operatorScaleB[#1][#2]{{\rm{B}}_{#1}^{(#2)}}
\def\operatorAngle[#1][#2]{\Omega_{#1}^{(#2)}}
\def\operatorSeparation[#1][#2]{\Delta_{#1}^{(#2)}}
\def\operatorOrderA[#1][#2]{\dot{\rm O}_{#1}^{(#2)}}
\def\operatorOrderB[#1][#2]{\ddot{\rm O}_{#1}^{(#2)}}
\def\operatorAngleScaleA[#1][#2]{\dot{\Omega}_{#1}^{(#2)}}
\def\operatorAngleScaleB[#1][#2]{{\Omega}_{#1}^{(#2)}}
\def\operatorAngleConjScaleA[#1][#2]{\dot{\Upsilon}_{#1}^{(#2)}}
\def\operatorAngleConjScaleB[#1][#2]{\ddot{\Upsilon}_{#1}^{(#2)}}
\def\operatorOrderC[#1][#2]{{\rm C}_{#1}^{(#2)}}
\def\operatorOrderD[#1][#2]{{\rm D}_{#1}^{(#2)}}
\def\operatorOrderCtilde[#1][#2]{{\rm \tilde{C}}_{#1}^{(#2)}}
\def\operatorOrderDtilde[#1][#2]{{\rm \tilde{D}}_{#1}^{(#2)}}
\def\upsampleOp[#1][#2]{{\uparrow_{#1}\{#2\}}}
\def\shift[#1]{r_{#1}}
\def\operatorRMS[#1]{{\rm rms}\{#1\}}
\def\operationIDFT[#1][#2]{{\rm IDFT}\{#1,#2\}}
\def\operationMAX[#1]{{\rm max}\{#1\}}
\def\functionSpace{\mathcal{F}}
\def\functionf[#1]{p^{(#1)}}
\def\functiong[#1]{q^{(#1)}}
\def\functionh{r}
\def\functionfdot[#1]{\bar{p}_{\indexIterationANF}^{(#1)}}
\def\functiongdot[#1]{\bar{q}_{\indexIterationANF}^{(#1)}}
\def\funcfForCommonPhase{c_{\rm i}}
\def\funcfForCommonPhaseA{f_{\rm i}}
\def\funcfForCommonPhaseB{g_{\rm i}}
\def\funcfForCommonShift{f_{\rm s}}
\def\funcForCommonOrder{p_{\rm o}}
\def\funcfForCommonOrder{f_{\rm o}}
\def\funcgForCommonOrder{g_{\rm o}}
\def\funcfForTurynPart[#1]{f_{{\rm f},#1}}
\def\funcgForTurynPart[#1]{g_{{\rm f},#1}}
\def\funcfForFinalPhase{f_{\rm i}}
\def\funcfForANF{f}
\def\funcGfForANF[#1]{f_{#1}}
\def\funcGgForANF[#1]{g_{#1}}
\def\funcGfForC[#1]{f_{{\rm c}, #1}}
\def\funcGgForC[#1]{g_{{\rm c}, #1}}
\def\funcGfForD[#1]{f_{{\rm d}, #1}}
\def\funcGgForD[#1]{g_{{\rm d}, #1}}
\def\funcGfForCtilde[#1]{f_{{\rm \tilde{c}}, #1}}
\def\funcGgForCtilde[#1]{g_{{\rm \tilde{c}}, #1}}
\def\funcGfForDtilde[#1]{f_{{\rm \tilde{d}}, #1}}
\def\funcGgForDtilde[#1]{g_{{\rm \tilde{d}}, #1}}
\def\polySeq[#1][#2]{p_{#1}(#2)}
\newcommand\mydots{\hbox to 1em{.\hss.\hss.}}
\def\IEEEsubmission{0}
\def\reviewColor{black}
\def\BState{\State\hskip-\ALG@thistlm}
\newcolumntype{L}[1]{>{\raggedright\let\newline\\\arraybackslash\hspace{0pt}}m{#1}}
\newcolumntype{C}[1]{>{\centering\let\newline\\\arraybackslash\hspace{0pt}}m{#1}}
\newcolumntype{R}[1]{>{\raggedleft\let\newline\\\arraybackslash\hspace{0pt}}m{#1}}
\newif\ifAC@uppercase@first%
\def\Aclp#1{\AC@uppercase@firsttrue\aclp{#1}\AC@uppercase@firstfalse}%
\def\AC@aclp#1{%
	\ifcsname fn@#1@PL\endcsname%
	\ifAC@uppercase@first%
	\expandafter\expandafter\expandafter\MakeUppercase\csname fn@#1@PL\endcsname%
	\else%
	\csname fn@#1@PL\endcsname%
	\fi%
	\else%
	\AC@acl{#1}s%
	\fi%
}%
\def\Acp#1{\AC@uppercase@firsttrue\acp{#1}\AC@uppercase@firstfalse}%
\def\AC@acp#1{%
	\ifcsname fn@#1@PL\endcsname%
	\ifAC@uppercase@first%
	\expandafter\expandafter\expandafter\MakeUppercase\csname fn@#1@PL\endcsname%
	\else%
	\csname fn@#1@PL\endcsname%
	\fi%
	\else%
	\AC@ac{#1}s%
	\fi%
}%
\def\Acfp#1{\AC@uppercase@firsttrue\acfp{#1}\AC@uppercase@firstfalse}%
\def\AC@acfp#1{%
	\ifcsname fn@#1@PL\endcsname%
	\ifAC@uppercase@first%
	\expandafter\expandafter\expandafter\MakeUppercase\csname fn@#1@PL\endcsname%
	\else%
	\csname fn@#1@PL\endcsname%
	\fi%
	\else%
	\AC@acf{#1}s%
	\fi%
}%
\def\Acsp#1{\AC@uppercase@firsttrue\acsp{#1}\AC@uppercase@firstfalse}%
\def\AC@acsp#1{%
	\ifcsname fn@#1@PL\endcsname%
	\ifAC@uppercase@first%
	\expandafter\expandafter\expandafter\MakeUppercase\csname fn@#1@PL\endcsname%
	\else%
	\csname fn@#1@PL\endcsname%
	\fi%
	\else%
	\AC@acs{#1}s%
	\fi%
}%
\edef\AC@uppercase@write{\string\ifAC@uppercase@first\string\expandafter\string\MakeUppercase\string\fi\space}%
\def\AC@acrodef#1[#2]#3{%
	\@bsphack%
	\protected@write\@auxout{}{%
		\string\newacro{#1}[#2]{\AC@uppercase@write #3}%
	}\@esphack%
}%
\def\Acl#1{\AC@uppercase@firsttrue\acl{#1}\AC@uppercase@firstfalse}
\def\Acf#1{\AC@uppercase@firsttrue\acf{#1}\AC@uppercase@firstfalse}
\def\Ac#1{\AC@uppercase@firsttrue\ac{#1}\AC@uppercase@firstfalse}
\def\Acs#1{\AC@uppercase@firsttrue\acs{#1}\AC@uppercase@firstfalse}
\newtheorem{theorem}{Theorem}
\acrodef{AWGN}{additive white Gaussian noise}
\acrodef{PAPR}{peak-to-average-power ratio}
\acrodef{APAC}{aperiodic auto correlation}
\acrodef{OFDM}{orthogonal frequency division multiplexing}
\acrodef{OFDMA}{orthogonal frequency division multiple access}
\acrodef{DFT}{discrete Fourier transform}
\acrodef{IDFT}{inverse discrete Fourier transform}
\acrodef{DC}{direct current}
\acrodef{CS}{complementary sequence}
\acrodef{GCP}{Golay complementary pair}
\acrodef{ANF}{algebraic normal form}
\acrodef{PSK}{phase-shift keying}
\acrodef{QAM}{quadrature-amplitude modulation}
\acrodef{QPSK}{quadrature phase-shift keying}
\acrodef{BPSK}{binary phase-shift keying}
\acrodef{GDJ}{Golay-Davis-Jedwab}
\acrodef{PMEPR}{peak-to-mean envelope power ratios}
\acrodef{FFT}{fast Fourier transform}
\acrodef{BER}{bit-error rate}
\acrodef{SNR}{signal-to-noise ratio}
\acrodef{4G}{Fourth Generation}
\acrodef{5G}{Fifth Generation}
\acrodef{NR}{New Radio}
\acrodef{LTE}{Long-Term Evolution}
\acrodef{PTS}{partial transmit sequences}
\acrodef{PSD}{power spectral density}
\acrodef{LDPC}{low-density parity check}
\acrodef{OCB}{occupied channel bandwidth}
\acrodef{CP}{cyclic prefix}
\acrodef{CM}{cubic metric}
\acrodef{DAC}{digital-to-analog converter}
\acrodef{RS}{reference symbol}
\acrodef{LAA}{license-assisted access}
\acrodef{eLAA}{enhanced licensed-assisted access}
\acrodef{PRB}{physical resource block}
\acrodef{OCB}{occupied channel bandwidth}
\acrodef{NR-U}{NR in unlicensed spectrum}
\acrodef{IMD}{inter-modulation distortion}
\acrodef{ZC}{Zadoff-Chu}
\acrodef{SR}{scheduling request}
\acrodef{i.i.d.}{independent-and-identically distributed}
\acrodef{NC}{non-contiguous}
\acrodef{RM}{Reed-Muller}
\acrodef{OCC}{orthogonal cover code}
\acrodef{MMSE}{minimum mean square error}
\acrodef{ML}{maximum-likelihood}
\acrodef{BLER}{block-error rate}
\acrodef{UCI}{uplink control information}
\acrodef{CSI}{channel state information}
\acrodef{PUCCH}{physical uplink control channel}
\acrodef{UL}{uplink}
\acrodef{IFDMA}{IFDMA}
\acrodef{DTX}{discontinuous transmission}
\acrodef{ACK}{acknowlegment}
\acrodef{NACK}{negative acknowlegment}
\acrodef{MAI}{multiple-access interference}
\acrodef{CCI}{co-channel interference}
\acrodef{MRC}{maximum-ratio combining}
\acrodef{FDE}{frequency-domain equalization}
\begin{document}
\title{ 
An Uplink Control Channel Design with Complementary Sequences for Unlicensed Bands
}
\author{Alphan~\c{S}ahin,~\IEEEmembership{Member,~IEEE,}
	and~Rui~Yang,~\IEEEmembership{Member,~IEEE}
	\thanks{{\color{\reviewColor}This paper  was presented in part at the IEEE International Conference on Communications (ICC) 2019 \cite{sahin_2019icc}.}
		
		 Alphan~\c{S}ahin and Rui~Yang are affiliated with University of South Carolina, Columbia, SC and InterDigital, Huntington Quadrangle, Melville, NY, respectively. E-mail: asahin@mailbox.sc.edu, rui.yang@interdigital.com
}}
\maketitle

\begin{abstract}
\color{\reviewColor}
In this paper, two modulation schemes based on \acp{CS} are proposed for uplink control channels in unlicensed bands. These schemes address high \ac{PAPR} under non-contiguous resource allocation in the frequency domain and reduce the maximum \ac{PAPR} to 3 dB. The first scheme allows the users to transmit a small amount of \ac{UCI} such as acknowledgment signals and does not introduce a trade-off between \ac{PAPR} and \ac{CCI}. The second scheme, which enables up to 21 \ac{UCI} bits for a single user or 11 \ac{UCI} bits for three users in an interlace, is based on a new theorem introduced in this paper. This theorem leads distinct \acp{CS} compatible with a wide variety of resource allocations while capturing the inherent relationship between \acp{CS} and \ac{RM}  codes, which makes \acp{CS} more useful for practical systems. The numerical results show that the proposed schemes maintain the low-\ac{PAPR} benefits without increasing the error rate for non-contiguous resource allocations in the frequency domain.

\end{abstract}

\begin{IEEEkeywords} Control channels, complementary sequences, PAPR, Reed-Muller code,  OFDM, unlicensed spectrum \end{IEEEkeywords}

\acresetall

\section{Introduction}

{\color{\reviewColor}
To improve overall network efficiency and address the rapid increase in data demand, the wireless industry has started to extend 3GPP \ac{LTE} and \ac{5G} \ac{NR}  for the operation in unlicensed bands \cite{Lagen_2020, Labib_2017, Aijaz_2013}. However, the communication protocols designed for licensed bands need major changes as coexistence assurance is required in the unlicensed bands. To ensure fairness of channel access and usage among different radio access technologies},  stringent regulatory requirements are imposed on unlicensed bands. For example, according to the ETSI regulations \cite{etsi_2017},  in the 5 GHz band, the \ac{PSD} of the transmitted signal should be less than $10$~dBm/MHz and the \ac{OCB} should be larger than  $80\%$ of the nominal channel bandwidth. Therefore, a narrow bandwidth transmission (e.g., a single \ac{PRB} is $180$~kHz in \ac{LTE}) in $20$~MHz channel in a $5$~GHz unlicensed band does not meet the \ac{OCB} requirement and limits the coverage range due to the \ac{PSD} requirement. To be able to increase the transmit power under the \ac{PSD} constraint while complying with the \ac{OCB} requirement, 3GPP \ac{LTE} \ac{eLAA} and \ac{NR-U} have adopted {\em interlaced transmission} which allocates  disperse  and non-contiguous \acp{PRB} {\color{\reviewColor} as shown in \figurename~\ref{fig:interlace}, called {\em interlace}, in the \ac{UL} \cite{nr_phy_2020}. This major change on the baseline resource allocation prohibits the use of single-carrier waveform (e.g., \ac{DFT}-spread \ac{OFDM}) and the corresponding physical channels that benefit from low \ac{PAPR} in the licensed bands. In addition, the number of utilized \acp{PRB} in an interlace in \ac{NR} is a function of subcarrier spacing and  bandwidth, which makes the problem more challenging. 

\begin{figure}[t]
	\centering
	{\includegraphics[width =2.8in]{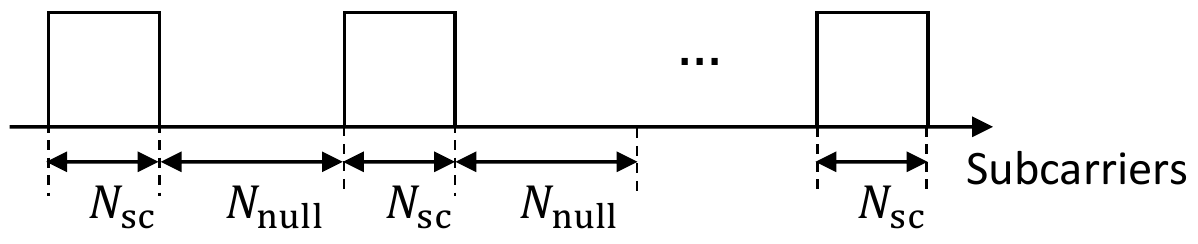}
	}
	\caption{Interlace model.}
	\label{fig:interlace}
\end{figure}

In 3GPP  5G \ac{NR} R15 \cite{nr_phy_2017}, the \ac{PUCCH} is meticulously designed to ensure the link reliability while handling multiple users with very limited resources for licensed bands. It consists of five different formats. Format~0 is based on sequence selection and designed for 1 or 2 \ac{UCI} bits such as acknowledgment (i.e., \ac{ACK} and \ac{NACK} signals) or \ac{SR}. It does not include reference symbols and shares the same structural properties of non-coherent orthogonal signaling \cite{Kundu_2018}. It occupies a single \ac{PRB} while allowing 6 users to share the same \ac{PRB}.  In \cite{Kundu_2018}, an alternative design to Format 0 with reference symbols is discussed. However, no benefit of using reference symbols is observed. Format~1 extends Format~0 to 4-14 \ac{OFDM} symbols  with an \ac{OCC}. It is based on sequence modulation where a \ac{BPSK} or \ac{QPSK} symbol is multiplied with a sequence and includes reference symbols. It supports up to 84 users and provides an enhanced coverage range. For Format~0-1,  the reliability is primarily ensured by a set of low-cross correlation seed sequences, in which each sequence results in an \ac{OFDM} symbol with low \ac{PAPR}. Orthogonal sequences are generated through cyclic shifts in time by exploiting the properties of unimodular sequences \cite{Benedetto_2009}. Format~2-4 support moderate and large \ac{UCI} payloads. In Format~2, the data symbols and reference symbols are directly mapped to the subcarriers. It occupies 1-2 \ac{OFDM} symbols and 1-16 \acp{PRB}. For Format~3-4, the waveform is based on \ac{DFT}-spread~\ac{OFDM} to reduce \ac{PAPR}. Format~3 supports a large payload with 1-16 \acp{PRB} with no user multiplexing capacity in the same \ac{PRB}. On the other hand,  Format~4 is limited to a single \ac{PRB}, but it supports up to 4 users in the same \ac{PRB} with pre-DFT \ac{OCC} (See Figure 11 in \cite{qualcommNRUproposal}). In 3GPP 5G \ac{NR} R15, which format is used is determined by the number of assigned symbols and the number of \ac{UCI} bits to be transmitted. The channel coding is also determined based on the number of \ac{UCI} bits. While a polar code is adopted for more than 11 bits, a  (32,11) linear block code defined in Table~5.3.3.3-1 in \cite{nr_coding_2020} is utilized for 3-11 \ac{UCI} bits.

The  \ac{PUCCH} formats are extended to the interlaced transmission for unlicensed bands in 3GPP  5G \ac{NR} R16 \cite{nr_phy_2020}. To avoid major modifications in the standard, \ac{PAPR} reduction methods relying on randomization are adopted. For Format 0 and 1, a resource-block dependent sequence generation is utilized, called {\em cycle-shift hopping}. The cyclic shift used for each \ac{PRB} in the interlace is determined as a function of the \ac{PRB} index \cite{ericssonNRUneedPf0}. Therefore, the coherent additions of the peak samples in time for the same signal component on different \acp{PRB} are avoided. Similarly, Format 2 is extended with \ac{OCC}-cycling across \acp{PRB} of an interlace, i.e., a user uses different spreading coefficients for different \acp{PRB}. By capturing the user multiplexing feature of Format 4,  Format 3 is extended with a pre-\ac{DFT} \ac{OCC} with block-wise repetition followed by mapping over the whole interlace in the frequency domain \cite{qualcommNRUproposal}. We refer the reader to \cite{ericssonNRUproposal} and \cite{featureLead} for several other non-standard solutions and discussions for \ac{NR} \ac{PUCCH} in the unlicensed band.

The literature is rich with \ac{PAPR} reduction methods for \ac{OFDM} \cite{Rahmatallah_2013}, \cite{Wunder_2013}. However, low-complexity methods which do not require optimization for each \ac{OFDM} symbol and tailored for non-contiguous allocation are limited. With \ac{DFT} precoding and an interleaved subcarrier mapping, an \ac{OFDM} symbol is converted to a low-\ac{PAPR} single-carrier waveform with repetitions in time \cite{Myung_2006}. However, an interleaved subcarrier mapping is not compatible with the interlaces in \ac{NR} and \ac{LTE}. In \cite{davis_1999}, Davis and Jedwab showed that there exists a joint coding and modulation scheme guaranteeing a maximum $3$ dB \ac{PAPR} for \ac{OFDM} symbols by exploiting \acp{CS} \cite{Golay_1961}, which utilizes $2^\numberOfIterations$ subcarriers in a contiguous manner, where $\numberOfIterations$ is an integer. In \cite{Hori_2018}, a multiple-access scheme based on super-orthogonal convolutional codes utilizing \acp{CS} is proposed. By using an interleaved subcarrier mapping, the low-\ac{PAPR} property of \acp{CS} is kept and frequency diversity is achieved. In \cite{Sahin_2018}, a theoretical framework is proposed to synthesize \acp{CS} with null symbols, i.e., non-contiguous \acp{CS} by extending Davis and Jedwab's framework. It can introduce zero elements in \acp{CS} of length $2^\numberOfIterations\cdot\lengthGaGb$  for non-negative integer $\numberOfIterations$ and $\lengthGaGb$ and the number of non-zero clusters is $2^\numberOfIterations$. Thus, it does not address \ac{NR-U} interlaces as  the number of \acp{PRB} in \ac{NR-U} interlace (e.g., $10$ \ac{PRB} for $15$~kHz) cannot be factorized as $2^\numberOfIterations$. To the best our knowledge, a systematic design of low-\ac{PAPR} communication schemes for flexible interlaced transmission is not available in the literature.

In this study, we propose two modulation schemes for uplink control channel based on non-contiguous \acp{CS}. 
We focus on reliable low data rate communication schemes with resources shared by multiple users for a single \ac{OFDM} symbol
The first modulation scheme is for 1 or 2 \ac{UCI} bits and an alternative to \ac{NR} \ac{PUCCH} Format 0 for the interlaced transmission. The second one is a joint coding-and-modulation scheme allows users to transmit moderate payloads in an interlace.  

Our main contributions are as follows:
\begin{itemize}
\item {\bf Theoretical framework:} To derive the proposed methods, we introduce Theorem \ref{th:golayIterative} and Theorem \ref{th:reduced} which allows synthesizing a large number of distinct non-contiguous \acp{CS} by  permuting the multiple seed \acp{GCP} systematically. 
\item {\bf High reliability:} We propose schemes that lead to \ac{OFDM} symbols with a maximum $3$~dB \acp{PAPR} while exploiting the frequency diversity. Approximately $3$~dB and $4$~dB PAPR gains are obtained as compared to the approaches used for \ac{NR} \ac{PUCCH} Format 0 and Format~3 without sacrificing the error rate, respectively. A \ac{GCP} set with low peak cross-correlation is also proposed.
\item {\bf Flexible interlace:} We show that there exist low-\ac{PAPR} modulation schemes for a flexible interlaced transmission. Even if the number of \acp{PRB} or the number of  zeros between the \acp{PRB} in an interlace is changed, the  \ac{PAPR} does not exceed  $3$~dB with the proposed schemes.
\item {\bf Low-complexity design:}  The introduced modulation schemes do not rely on symbol-based optimization. Hence, it is suitable for practical systems. 	

\item {\bf Multi-user support:} While the first scheme supports up to $6$ users, the second scheme enables 21 \ac{UCI} bits for a single user or 11 \ac{UCI} bits for three users on the same interlace of a single \ac{OFDM} symbol.

\end{itemize}
}

The rest of the paper is organized as follows. In Section~\ref{sec:prelim}, we provide the notation and preliminary discussions. In Section~\ref{sec:theorems}, we obtain Theorem \ref{th:golayIterative}  and Theorem~\ref{th:reduced}. In Section~\ref{sec:csPucch}, we derive the proposed schemes. In Section~\ref{sec:numerical}, we  compare the proposed schemes with the other state-of-the-art approaches, numerically. We conclude the paper in Section~\ref{sec:conclusion}.

{\em Notation:} The field of complex numbers,  the set of integers, the set of positive integers, and the set of non-negative integers are denoted by $\complexNumbers$, $\integers$, $\integersPositive$, and $\integersNonnegative$ respectively. 
The symbols $\constanti$, $\constantMinusi$, $\constantOne$, and $\constantMinusOne$ denote $\sqrt{-1}$, $-\sqrt{-1}$, $1$, and $-1$, respectively. 
A sequence of length $\lengthGaGb$ is represented by $\seqGa=(\eleGa[\indexIteration])_{\indexIteration=0}^{\lengthGaGb-1}= (\eleGa[0],\eleGa[1],\dots, \eleGa[\lengthGaGb-1])$. The element-wise complex conjugation and the element-wise absolute operation are denoted by  $(\cdot)^*$ and $|\cdot|$, respectively. 
The sequence $\flipConjugate[\seqGa]$ is the conjugate of the element-wise reversed sequence $\seqGa$.
The operation $\upsampleOp[\upsampleVarA][\seqGa]$ introduces $\upsampleVarA-1$ zero symbols between the elements of $\seqGa$.
The operations $\seqGa+\seqGb$, $\seqGa-\seqGb$, $\seqGa \odot \seqGb$, $\seqGa*\seqGb$, and $\langle\seqGa,\seqGb\rangle$ are the element-wise summation, the element-wise subtraction, the element-wise multiplication, linear convolution, and the inner product of $\seqGa$ and $\seqGb$, respectively.

\section{Preliminaries and Further Notation}
\label{sec:prelim}
We model an interlace as a non-contiguous resource allocation which consists of $\numberOfClusters$ \acp{PRB} where the \acp{PRB} are separated by $\numberOfNulls$ tones in the frequency domain as illustrated in \figurename~\ref{fig:interlace}. We assume that each \ac{PRB}  is composed of $\RBsize$ subcarriers. For example, an interlace in \ac{NR-U} can be expressed as $\RBsize=12$ subcarriers, $\numberOfClusters=10$ \acp{PRB}, and  $\numberOfNulls = 9\times12=108$ subcarriers for 15~kHz subcarrier spacing. The interlace structure  in \ac{NR-U}  varies based on the subcarrier spacing and bandwidth \cite{nr_phy_2020}. 

\subsection{Polynomial Representation of a Sequence}
\label{subsec:poly}
The polynomial representation of the sequence $\seqGa$ can be given by
\begin{align}
\polySeq[\seqGa][\polyVariable] \triangleq \eleGa[\lengthGaGb-1]\polyVariable^{\lengthGaGb-1} + \eleGa[\lengthGaGb-2]\polyVariable^{\lengthGaGb-2}+ \dots + \eleGa[0]~,
\label{eq:polySeq}
\end{align} 
where $\polyVariable\in \complexNumbers$ is a complex number.  One can show that the following identities hold true:
\begin{align}
\polySeq[\seqGa][\polyVariable^\upsampleVarA]&=\polySeq[{\upsampleOp[{\upsampleVarA}][{\seqGa}]}][\polyVariable]\nonumber~,\\
\polySeq[\seqGa][\polyVariable^\upsampleVarA]\polySeq[\seqGb][\polyVariable^\upsampleVarB]&=\polySeq[{\upsampleOp[{\upsampleVarA}][{\seqGa}]}*{\upsampleOp[{\upsampleVarB}][{\seqGb}]}][\polyVariable]\nonumber~,\\
\polySeq[\seqGa][\polyVariable]\polyVariable^\paddingVar&=\polySeq[(\underbrace{0,0,\mydots,0}_{\paddingVar},\seqGa)][\polyVariable]\nonumber~,
\end{align} 
for $\upsampleVarB,\upsampleVarA\in\integersPositive$ and $\paddingVar\in\integersNonnegative$.
If a sequence consists of zero elements between two non-zero elements, it is a non-contiguous sequence. Otherwise, it is a contiguous sequence. 
{\color{\reviewColor}
The support of $\seqGa$ is $\{\varMonomial\in\integers_\lengthGaGb|\eleGa[\varMonomial]\neq 0\}$. The set $\{\eleGa[\varMonomial]|\eleGa[\varMonomial]\neq 0, \eleGa[i]=\eleGa[j]=0, \varMonomial\in\{i+1,i+2,\mydots,j-1\}\}$ is denoted as a non-zero cluster in $\seqGa$.}

The polynomial representation given in \eqref{eq:polySeq} corresponds to an \ac{OFDM} symbol in continuous for $\polyVariable\in\{\constante^{\constanti\frac{2\pi\timeVar}{\symbolDuration}}| 0\le\timeVar <\symbolDuration
\}$,  where  $\symbolDuration$ denotes the \ac{OFDM} symbol duration. The instantaneous envelope power can be expressed as
\begin{align}
|\polySeq[\seqGa][\polyVariable]|^2 = \apacPositive[\seqGa][0] + 
2\sum_{\lagForCorrelation=1}^{\lengthGaGb-1}|\apacPositive[\seqGa][\lagForCorrelation]|\cos\left(\frac{2\pi\timeVar}{\symbolDuration}\lagForCorrelation+\angle \apac[\seqGa][\lagForCorrelation]\right)~,
\label{eq:instantaneousPower}
\end{align}
where  $\apacPositive[\seqGa][\lagForCorrelation]=\sum_{\indexEleOfSeq=0}^{\lengthGaGb-\lagForCorrelation-1} \eleGa[\indexEleOfSeq]^*\eleGa[\indexEleOfSeq+\lagForCorrelation]$ is the \ac{APAC} of the sequence $\seqGa$ \cite{Sahin_2018}.

The minimization of the instantaneous envelope power of an \ac{OFDM} symbol generated through a non-contiguous sequence in the frequency domain is more constrained as compared to the one with a contiguous sequence. For example, consider the interlace model given in  \figurename~\ref{fig:interlace}. 
If the same number of non-zero elements in an interlace is utilized contiguously in the frequency domain, the number of constraints that need to be met for $0$ dB \ac{PAPR} (i.e., $\apacPositive[\seqGa][\lagForCorrelation]=0$  for $\lagForCorrelation\neq0$) is $\numberOfClusters\RBsize-1$ based on \eqref{eq:instantaneousPower}. On the other hand, for $\numberOfNulls\ge\RBsize$, the number of  constraints  increases to  $2\numberOfClusters\RBsize-\RBsize-\numberOfClusters$. As a numerical example,  while the number of constraints for a contiguous resource allocation with $120$ subcarriers is $119$, it increases to $218$ for an interlace in \ac{NR} for $15$~kHz subcarrier spacing, which can be more challenging to satisfy for a low-\ac{PAPR} design.

\subsection{Complementary Sequences}
The sequence pair  $(\seqGa,\seqGb)$ of length  $\lengthGaGb$ is called a \ac{GCP} if
\begin{align}
\apac[\seqGa][\lagForCorrelation]+\apac[\seqGb][\lagForCorrelation] = 0,~~ \text{for}~ \lagForCorrelation~\neq0~,~
\label{eq:GCP}
\end{align}
where the sequences $\seqGa$ and $\seqGb$ are \acp{CS}. By using the definition, one can show that the GCP $(\seqGa,\seqGb)$ satisfies 
\begin{align}
|\polySeq[\seqGa][\polyVariable]|^2+|\polySeq[\seqGb][\polyVariable]|^2\bigg\rvert_{\polyVariable=\constante^{\constanti\frac{2\pi\timeVar}{\symbolDuration}}} =\underbrace{\apac[\seqGa][0]+\apac[\seqGb][0]}_{\text{constant}}~.
\label{eq:timeDomainGCP}
\end{align}
The main property that we inherited from \acp{GCP} in this study is that the instantaneous peak power of the corresponding \ac{OFDM} signal generated through a \ac{CS} $\seqGa$ is bounded, i.e.,
$\max_{\timeVar}|\polySeq[\seqGa][{
	\constante^{\constantj\frac{2\pi\timeVar}{\symbolDuration}}
}]|^2 \le \apac[\seqGa][0]+\apac[\seqGb][0]$. Therefore, based on \eqref{eq:timeDomainGCP}, the \ac{PAPR} of the \ac{OFDM} signal is less than or equal to $10\log_{10}(2)\approx3$~dB if $\apac[\seqGa][0]=\apac[\seqGb][0]$. For the other properties of \acp{GCP}, we refer the reader to the survey given in \cite{parker_2003}.

\subsection{Unimodular Sequences}
\label{sec:unimodular}
Let $\seqGx=(\elex[0],\elex[1],\dots, \elex[\lengthGaGb-1])\in \complexNumbers^\lengthGaGb$ be a sequence of length $\lengthGaGb$. If $|\elex[i]|=c$ for  $i=0,1,\mydots,\lengthGaGb-1$, $\seqGx$ is referred to as a unimodular or constant-amplitude sequence of length $\lengthGaGb$.   Without loss of generality, we assume $c=1$ in this study.
For a unimodular sequence $\seqGx$, one can show that
$
\langle
\seqGx \odot \seqGsShift[i],\seqGx \odot \seqGsShift[j]\rangle = 0 ~ \text{if}~ i \neq j
$, 
where $\seqGsShift[{\shift[]}] = {(\constante^{{\shift[]}\frac{2\pi\constanti}{\lengthGaGb}\times0}, \constante^{{\shift[]}\frac{2\pi\constanti}{\lengthGaGb}\times1}, \dots, \constante^{{\shift[]}\frac{2\pi\constanti}{\lengthGaGb}\times(\lengthGaGb-1)} )}$ for ${\shift[]}=0,1,\mydots,\lengthGaGb-1$ \cite{Benedetto_2009}.
Thus, $\{\seqGx \odot \seqGsShift[{\shift[]}]| {\shift[]}=0,1,\mydots,\lengthGaGb-1\}$ is an orthogonal basis where each sequence can be synthesized in an \ac{OFDM} transmitter with low-complexity operations, i.e.,  shifting the useful duration of \ac{OFDM} signal generated through $\seqGx$ in time cyclically. The unimodular sequences are suitable for \ac{OCC} design, which have been used for increasing the number of users or transmitting more information bits on the same \acp{PRB} in both \ac{NR} \cite{nr_phy_2020} and \ac{LTE} \cite{Sesia2009}.

\subsection{Algebraic Representation of a Sequence}
\label{subsec:algebraic_sequence}
A generalized Boolean function is a function $\funcfForANF$ that maps from $\integers^\numberOfIterations_2=\{(\monomial[1],\monomial[2],\dots, \monomial[\numberOfIterations])| \monomial[\indexFirstOrderMonomial]\in\integers_2\}$ to $\integers_\numberOfPointsForPSK$ as $\funcfForANF:\integers^\numberOfIterations_2\rightarrow\integers_\numberOfPointsForPSK$ where $\numberOfPointsForPSK$ is an integer. We associate a sequence $\seqf$ of length $2^\numberOfIterations$ with the function $\funcfForANF(\monomial[1],\monomial[2],\dots, \monomial[\numberOfIterations])$ by listing its values as $(\monomial[1],\monomial[2],\dots, \monomial[\numberOfIterations])$ ranges over its $2^\numberOfIterations$ values in lexicographic order (i.e., the most significant bit is $\monomial[1]$). In other words, the $(\varMonomial +1)$th element of the sequence $\seqf$ is equal to $\funcfForANF(\monomial[1],\monomial[2],\dots, \monomial[\numberOfIterations])$ where $\varMonomial = \sum_{\indexFirstOrderMonomial=1}^{\numberOfIterations}\monomial[\indexFirstOrderMonomial]2^{\numberOfIterations-\indexFirstOrderMonomial}$. 
Note that different generalized Boolean functions yield  different sequences as each generalized Boolean function can be uniquely expressed as a linear combination of the monomials over $\integers_\numberOfPointsForPSK$ \cite{davis_1999}.
For the sake of simplifying the notation, the sequence $(\monomial[1],\monomial[2],\dots, \monomial[\numberOfIterations])$ and the function $\funcfForANF(\monomial[1],\monomial[2],\dots, \monomial[\numberOfIterations])$ are denoted by $\seqx$ and  $\funcfForANF(\seqx)$, respectively.

{\color{\reviewColor}
\section{Theoretical Framework}
\label{sec:theorems}
In this section, we introduce  Theorem~\ref{th:golayIterative} and Theorem~\ref{th:reduced} to generate \acp{CS} with flexible support and explain the origin of the proposed schemes in Section~\ref{sec:csPucch}. Our first theorem generalizes Golay's concatenation and interleaving methods \cite{Golay_1961} as follows:
}
\begin{theorem}
	\label{th:golayIterative}
	Let $(\seqGa,\seqGb)$ and $(\seqGc,\seqGd)$ be \acp{GCP} of length $\lengthGaGb$ and $\lengthGcGd$, respectively,  $\angleGolay[1],\angleGolay[2]\in \{u:u\in\complexNumbers, |u|=1\}$,  and $\upsampleVarA, \upsampleVarB, \paddingVar\in\integers$. Then, the sequences $\seqGf$ and $\seqGg$ where their polynomial representations are given by
	\begin{align}
	\polySeq[{\seqGf}][\polyVariable] =& \angleGolay[1]\polySeq[{\seqGa}][\polyVariable^\upsampleVarA]\polySeq[{\seqGc}][\polyVariable^\upsampleVarB]   
	+ \angleGolay[2]\polySeq[{\seqGb}][\polyVariable^\upsampleVarA]\polySeq[{\seqGd}][\polyVariable^\upsampleVarB]  \polyVariable^{\paddingVar} ~, 
	\label{eq:gcpGf}
	\\
	\polySeq[{\seqGg}][\polyVariable] =& \angleGolay[1]\polySeq[{\seqGa}][\polyVariable^\upsampleVarA]\polySeq[{\flipConjugate[\seqGd]}][\polyVariable^\upsampleVarB] 
	- \angleGolay[2]\polySeq[{\seqGb}][\polyVariable^\upsampleVarA]\polySeq[{\flipConjugate[\seqGc]}][\polyVariable^\upsampleVarB]  \polyVariable^{\paddingVar} ~, 
	\label{eq:gcpGg}
	\end{align}
	construct a \ac{GCP}.
\end{theorem}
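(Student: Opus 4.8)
The plan is to verify the Golay complementary condition \eqref{eq:GCP} for the pair $(\seqGf,\seqGg)$ directly, working entirely in the polynomial/evaluation picture of Section~\ref{subsec:poly}. The key observation is that \eqref{eq:GCP} is equivalent to the statement that $|\polySeq[\seqGf][\polyVariable]|^2 + |\polySeq[\seqGg][\polyVariable]|^2$ is constant when $\polyVariable = \constante^{\constanti 2\pi\timeVar/\symbolDuration}$ ranges over the unit circle, exactly as in \eqref{eq:timeDomainGCP}. So I would evaluate both polynomials at a unimodular $\polyVariable$, expand the two squared magnitudes, and show the cross terms cancel while the remaining terms sum to a constant.

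\textbf{Key steps.} First I would introduce shorthand $A = \polySeq[\seqGa][\polyVariable^\upsampleVarA]$, $B = \polySeq[\seqGb][\polyVariable^\upsampleVarA]$, $C = \polySeq[\seqGc][\polyVariable^\upsampleVarB]$, $D = \polySeq[\seqGd][\polyVariable^\upsampleVarB]$, and $\tilde C = \polySeq[\flipConjugate[\seqGc]][\polyVariable^\upsampleVarB]$, $\tilde D = \polySeq[\flipConjugate[\seqGd]][\polyVariable^\upsampleVarB]$, for $\polyVariable$ on the unit circle. The crucial elementary fact I need about the flip-conjugate operation is that, for unimodular $\polyVariable$, $\polySeq[\flipConjugate[\seqGc]][\polyVariable] = \polyVariable^{-(\lengthGcGd-1)}\overline{\polySeq[\seqGc][\polyVariable]}$ (reversal turns $\polyVariable$ into $\polyVariable^{-1}$ up to a monomial shift, and conjugation of coefficients together with $\overline{\polyVariable} = \polyVariable^{-1}$ gives the complex conjugate of the evaluation); hence $|\tilde C| = |C|$ and $|\tilde D| = |D|$ pointwise, and products like $A\overline{\tilde D}$ relate to $\overline{AD}$ up to a unimodular monomial factor. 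Then $|\polySeq[\seqGf][\polyVariable]|^2 = |A|^2|C|^2 + |B|^2|D|^2 + 2\,\mathrm{Re}\!\left(\angleGolay[1]\overline{\angleGolay[2]}\,\overline{\polyVariable^{\paddingVar}}\,A\,C\,\overline{B}\,\overline{D}\right)$ using $|\angleGolay[1]| = |\angleGolay[2]| = 1$, and similarly $|\polySeq[\seqGg][\polyVariable]|^2 = |A|^2|\tilde D|^2 + |B|^2|\tilde C|^2 - 2\,\mathrm{Re}(\dots)$. Using $|\tilde C| = |C|$, $|\tilde D| = |D|$, the first four terms combine into $(|A|^2+|B|^2)(|C|^2+|D|^2)$, which by \eqref{eq:timeDomainGCP} applied to both input \acp{GCP} equals $(\apac[\seqGa][0]+\apac[\seqGb][0])(\apac[\seqGc][0]+\apac[\seqGd][0])$, a constant. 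The remaining task is to show the two cross terms are negatives of each other, so they cancel.

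\textbf{Main obstacle.} The cross-term cancellation is the only nontrivial point, and it is where the specific placement of the conjugate-reversals in \eqref{eq:gcpGg} (i.e. $\tilde D$ paired with $\seqGa$ and $\tilde C$ paired with $\seqGb$, with a minus sign) matters. I expect that writing the $\seqGg$-cross-term as $-2\,\mathrm{Re}\!\left(\angleGolay[1]\overline{\angleGolay[2]}\,\overline{\polyVariable^{\paddingVar}}\,A\,\overline{B}\,\tilde D\,\overline{\tilde C}\right)$ and then substituting $\tilde D = \polyVariable^{-(\lengthGcGd-1)}\overline{D}$, $\tilde C = \polyVariable^{-(\lengthGcGd-1)}\overline{C}$ makes the monomial factors cancel, leaving $-2\,\mathrm{Re}\!\left(\angleGolay[1]\overline{\angleGolay[2]}\,\overline{\polyVariable^{\paddingVar}}\,A\,\overline{B}\,\overline{D}\,C\right)$, which is exactly the negative of the $\seqGf$-cross-term; the $\angleGolay[1],\angleGolay[2]$ phases and the padding monomial $\polyVariable^{\paddingVar}$ appear identically in both, so they play no role in the cancellation. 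Finally I would remark that the identities from Section~\ref{subsec:poly} guarantee $\seqGf$ and $\seqGg$ are genuine sequences (the upsampling-by-$\upsampleVarA,\upsampleVarB$ and the padding by $\paddingVar$ are realized as honest integer-indexed coefficient sequences), so the constancy of $|\polySeq[\seqGf][\polyVariable]|^2 + |\polySeq[\seqGg][\polyVariable]|^2$ on the unit circle is equivalent, via \eqref{eq:instantaneousPower}, to $\apac[\seqGf][\lagForCorrelation] + \apac[\seqGg][\lagForCorrelation] = 0$ for all $\lagForCorrelation \neq 0$, which is the claim.
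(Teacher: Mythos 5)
Your proposal is correct and follows essentially the same route as the paper's proof: evaluate on the unit circle, expand $|\polySeq[{\seqGf}][\polyVariable]|^2+|\polySeq[{\seqGg}][\polyVariable]|^2$, cancel the cross terms via the flip-conjugate identity, and factor the surviving terms into $(|\polySeq[{\seqGa}][\polyVariable^\upsampleVarA]|^2+|\polySeq[{\seqGb}][\polyVariable^\upsampleVarA]|^2)(|\polySeq[{\seqGc}][\polyVariable^\upsampleVarB]|^2+|\polySeq[{\seqGd}][\polyVariable^\upsampleVarB]|^2)=C_1C_2$. The only blemish is the sign of the monomial in your flip-conjugate identity, which should read $\polySeq[{\flipConjugate[\seqGc]}][\polyVariable]=\polyVariable^{\lengthGcGd-1}\overline{\polySeq[{\seqGc}][\polyVariable]}$ for unimodular $\polyVariable$; this is harmless since the monomial factors cancel in $\tilde D\,\overline{\tilde C}$ either way.
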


\begin{proof}
	Since the sequence pairs $(\seqGa,\seqGb)$ and $(\seqGc,\seqGd)$ are \acp{GCP}, by the definition, $|\polySeq[{\seqGa}][\polyVariable]|^2 + |\polySeq[{\seqGb}][\polyVariable]|^2=C_1$ and $|\polySeq[{\seqGc}][\polyVariable]|^2 + |\polySeq[{\seqGd}][\polyVariable]|^2=C_2$, where $C_1$ and $C_2$ are some constants.
	Thus, we  need to show that $|\polySeq[{\seqGf}][\polyVariable]|^2+|\polySeq[{\seqGg}][\polyVariable]|^2$ is also a constant. 
	Since $\polySeq[{\flipConjugate[\seqGa]}][\polyVariable^\upsampleVarA]=\polySeq[{\seqGa^*}][\polyVariable^{-\upsampleVarA}]\polyVariable^{\upsampleVarA\lengthGaGb-\upsampleVarA}$,  $|\polySeq[{\seqGf}][\polyVariable]|^2+|\polySeq[{\seqGg}][\polyVariable]|^2$ can be calculated as
	\begin{align}
	|\polySeq[{\seqGf}][\polyVariable]&|^2+|\polySeq[{\seqGg}][\polyVariable]|^2 \nonumber
	\\=
	&~~~
	(\angleGolay[1]\polySeq[{\seqGa}][\polyVariable^\upsampleVarA]\polySeq[{\seqGc}][\polyVariable^\upsampleVarB] 
	+ \angleGolay[2]\polySeq[{\seqGb}][\polyVariable^\upsampleVarA]\polySeq[{\seqGd}][\polyVariable^\upsampleVarB]  \polyVariable^{\paddingVar})
	\nonumber\\
	&\times 
	(\angleGolay[1]^*\polySeq[{\seqGa^*}][\polyVariable^{-\upsampleVarA}]\polySeq[{\seqGc^*}][\polyVariable^{-\upsampleVarB}]   
	+ \angleGolay[2]^*\polySeq[{\seqGb^*}][\polyVariable^{-\upsampleVarA}]\polySeq[{\seqGd^*}][\polyVariable^{-\upsampleVarB}]  \polyVariable^{-\paddingVar})\nonumber\\
	&+
	(\angleGolay[1]\polySeq[{\seqGa}][\polyVariable^\upsampleVarA]\polySeq[{\flipConjugate[\seqGd]}][\polyVariable^\upsampleVarB] 
	- \angleGolay[2]\polySeq[{\seqGb}][\polyVariable^\upsampleVarA]\polySeq[{\flipConjugate[\seqGc]}][\polyVariable^\upsampleVarB]  \polyVariable^{\paddingVar} )
	\nonumber\\&\times
	(\angleGolay[1]^*\polySeq[{\seqGa^*}][\polyVariable^{-\upsampleVarA}]\polySeq[{\flipConjugate[\seqGd^*]}][\polyVariable^{-\upsampleVarB}] 
	- \angleGolay[2]^*\polySeq[{\seqGb^*}][\polyVariable^\upsampleVarA]\polySeq[{\flipConjugate[\seqGc]^*}][\polyVariable^{-\upsampleVarB}]  \polyVariable^{-\paddingVar} )
	\nonumber \\
	\stackrel{(a)}{=}&~~~ 
	\polySeq[{\seqGa}][\polyVariable^\upsampleVarA]\polySeq[{\seqGa^*}][\polyVariable^{-\upsampleVarA}]
	\polySeq[{\seqGc}][\polyVariable^\upsampleVarB]\polySeq[{\seqGc^*}][\polyVariable^{-\upsampleVarB}]
	\nonumber \\
	&+\polySeq[{\seqGa}][\polyVariable^\upsampleVarA]\polySeq[{\seqGa^*}][\polyVariable^{-\upsampleVarA}]
	\polySeq[\flipConjugate[{\seqGd}]][\polyVariable^\upsampleVarB]\polySeq[\flipConjugate[{\seqGd}]^*][\polyVariable^{-\upsampleVarB}]
	\nonumber \\
	&+\polySeq[{\seqGb}][\polyVariable^\upsampleVarA]\polySeq[{\seqGb^*}][\polyVariable^{-\upsampleVarA}]
	\polySeq[\flipConjugate[{\seqGc}]][\polyVariable^\upsampleVarB]\polySeq[\flipConjugate[{\seqGc}]^*][\polyVariable^{-\upsampleVarB}]
	\nonumber \\
	&+\polySeq[{\seqGb}][\polyVariable^\upsampleVarA]\polySeq[{\seqGb^*}][\polyVariable^{-\upsampleVarA}]
	\polySeq[{\seqGd}][\polyVariable^\upsampleVarB]\polySeq[{\seqGd^*}][\polyVariable^{-\upsampleVarB}]
	\nonumber\\
	\stackrel{(b)}{=}&~~~ 
	(\polySeq[{\seqGa}][\polyVariable^\upsampleVarA]\polySeq[{\seqGa^*}][\polyVariable^{-\upsampleVarA}]
	+\polySeq[{\seqGb}][\polyVariable^\upsampleVarA]\polySeq[{\seqGb^*}][\polyVariable^{-\upsampleVarA}])
	\nonumber \\
	&\times (\polySeq[{\seqGc}][\polyVariable^\upsampleVarB]\polySeq[{\seqGc^*}][\polyVariable^{-\upsampleVarB}]+
	\polySeq[{\seqGd}][\polyVariable^\upsampleVarB]\polySeq[{\seqGd^*}][\polyVariable^{-\upsampleVarB}])
	= C_1C_2~,\nonumber
	\end{align}
	where (a) follows from $\polySeq[\flipConjugate[{\seqGc}]^*][\polyVariable^{-\upsampleVarB}]\polySeq[{\flipConjugate[\seqGd]}][\polyVariable^{\upsampleVarB}]=\polySeq[{\seqGc}][\polyVariable^{\upsampleVarB}]\polySeq[{\seqGd^*}][\polyVariable^{-\upsampleVarB}]$ and (b) is because  $\polySeq[\flipConjugate[{\seqGc}]][\polyVariable^\upsampleVarB]\polySeq[\flipConjugate[{\seqGc}]^*][\polyVariable^{-\upsampleVarB}]=\polySeq[{\seqGc^*}][\polyVariable^{-\upsampleVarB}]\polySeq[{\seqGc}][\polyVariable^\upsampleVarB]$ and $\polySeq[\flipConjugate[{\seqGd}]][\polyVariable^\upsampleVarB]\polySeq[\flipConjugate[{\seqGd}]^*][\polyVariable^{-\upsampleVarB}] = \polySeq[{\seqGd^*}][\polyVariable^{-\upsampleVarB}]\polySeq[{\seqGd}][\polyVariable^\upsampleVarB]
	$.
\end{proof}
Note that the special cases of Theorem~\ref{th:golayIterative} are available in earlier work. For example, binary contiguous \acp{CS}  or multi-level contiguous \acp{CS} are discussed when $\lengthGcGd=1$ \cite{Turyn_1974,parker_2003,Garcia_2010_ml}. {\color{\reviewColor}However, Theorem~\ref{th:golayIterative} also plays a central role for generating non-contiguous \acp{CS} which is not widely discussed in the literature. For example, based on the identities given in Section \ref{subsec:poly}, the factor $\polyVariable^{\paddingVar}$ increases the degree of the polynomial $\angleGolay[2]\polySeq[{\seqGb}][\polyVariable^\upsampleVarA]\polySeq[{\seqGd}][\polyVariable^\upsampleVarB] $ by $\paddingVar$, which yields two clusters in the sequence $\seqGf$ where the number of zeroes between them can be chosen arbitrarily. 
 This is one of key observations that we exploit in this study to limit the \ac{PAPR} of \ac{OFDM} symbol for flexible non-contiguous allocations. Similarly, $\upsampleVarA>\lengthGcGd$ or $\upsampleVarB>\lengthGaGb$ can generate non-contiguous \acp{CS} due to the convolutions of up-sampled sequences.}

To support more information bits, it is important to generate distinct \acp{CS}. However, Theorem~\ref{th:golayIterative} does not show how to generate distinct \acp{CS}, systematically. To address this issue, we introduce a new theorem  as follows:
\begin{theorem}
	\label{th:reduced}
	Let $\seqPermutationCompShift=(\permutationMono[\indexIteration])_{\indexIteration=1}^{\numberOfIterations}$ and $\seqPermutationOrder=(\permutationOrderEle[\indexIteration])_{\indexIteration=1}^{\numberOfIterations}$ be two sequences defined by permutations of $\{1,2,\dots,\numberOfIterations\}$. For any  GCP $(\seqGcRecursion[\indexIteration],\seqGdRecursion[\indexIteration])$ of length $\lengthGcGdIterative[\indexIteration]\in\integersPositive$, $\varUpsample\in\integersNonnegative$, $\separationGolay[\indexIteration]\in \integersNonnegative$, and $\angleexp[\indexIteration],\arbitraryPhaseA,\arbitraryPhaseB \in[0,\numberOfPointsForPSK)$ for $\indexIteration=1,2,\mydots,\numberOfIterations$, let 
	\begin{align}
	\funcfForCommonPhase(\seqx)
	=& {\frac{\numberOfPointsForPSK}{2}\sum_{\indexIteration=1}^{\numberOfIterations-1}\monomial[{\permutationMono[{\indexIteration}]}]\monomial[{\permutationMono[{\indexIteration+1}]}]}+\sum_{\indexIteration=1}^\numberOfIterations \angleexp[\indexIteration]\monomial[{\permutationMono[{\indexIteration}]}]\label{eq:imagPartReduced}~,
	\\
	\funcForCommonOrder(\seqx,\polyVariable)=
	&\prod_{\indexIteration=1}^{\numberOfIterations-1}	\polySeq[{\seqGcRecursion[{\permutationOrderEle[\indexIteration]}]}][\polyVariable]((1- \monomial[{\permutationMono[{\indexIteration}]}])(1 - \monomial[{\permutationMono[{\indexIteration+1}]}]))_2 \nonumber\\&~~~~+ 
	\polySeq[{\seqGdRecursion[{\permutationOrderEle[\indexIteration]}]}][\polyVariable]({\monomial[{\permutationMono[{\indexIteration}]}]}(1 - \monomial[{\permutationMono[{\indexIteration+1}]}]))_2 \nonumber\\&~~~~+
	\polySeq[{\seqGdTildeRecursion[{\permutationOrderEle[\indexIteration]}]}][\polyVariable]((1 - \monomial[{\permutationMono[{\indexIteration}]}])\monomial[{\permutationMono[{\indexIteration+1}]}])_2 \nonumber\\&~~~~+
	{ \polySeq[{\seqGcTildeRecursion[{\permutationOrderEle[\indexIteration]}]}][\polyVariable]  (\monomial[{\permutationMono[{\indexIteration}]}]}\monomial[{\permutationMono[{\indexIteration+1}]}])_2
~,
	\end{align}
	and
	\begin{align}
	\funcfForCommonPhaseA(\seqx)&= \funcfForCommonPhase(\seqx)+\arbitraryPhaseA\nonumber~,\\
	\funcfForCommonPhaseB(\seqx)&= \funcfForCommonPhase(\seqx)+\arbitraryPhaseB \nonumber~, \\	
	\funcfForCommonOrder(\seqx,\polyVariable) &=  	\funcForCommonOrder(\seqx,\polyVariable) (\polySeq[{\seqGcRecursion[{\permutationOrderEle[\numberOfIterations]}]}][\polyVariable](1 - \monomial[{\permutationMono[{\numberOfIterations}]}])_2 + 
	\polySeq[{\seqGdRecursion[{\permutationOrderEle[\numberOfIterations]}]}][\polyVariable]\monomial[{\permutationMono[{\numberOfIterations}]}])~,\nonumber\\
	\funcgForCommonOrder(\seqx,\polyVariable) &=  	\funcForCommonOrder(\seqx,\polyVariable) (\polySeq[{\seqGdTildeRecursion[{\permutationOrderEle[\numberOfIterations]}]}][\polyVariable](1 - \monomial[{\permutationMono[{\numberOfIterations}]}])_2 + 
	\polySeq[{\seqGcTildeRecursion[{\permutationOrderEle[\numberOfIterations]}]}][\polyVariable]\monomial[{\permutationMono[{\numberOfIterations}]}])~,\nonumber
	\\
	\funcfForCommonShift(\seqx)&=\sum_{\indexIteration=1}^\numberOfIterations\separationGolay[\indexIteration]\monomial[{\permutationMono[{\indexIteration}]}]
	\nonumber~.
	\end{align}
	Then, the sequences $\seqGf$ and $\seqGg$ where their polynomial representations are given by
	\begin{align}
	\polySeq[{\seqGf}][\polyVariable] &= 
	\sum_{\varMonomial=0}^{2^\numberOfIterations-1} 
	\funcfForCommonOrder(\seqx,\polyVariable)\times 
	\exponentialBase^{ \constantj \funcfForFinalPhase(\seqx)}\times
	\polyVariable^{\funcfForCommonShift(\seqx) + \varMonomial\varUpsample}\label{eq:encodedFOFDM}~,
	\\
	\polySeq[{\seqGg}][\polyVariable] &= 
	\sum_{\varMonomial=0}^{2^\numberOfIterations-1} 
	\funcgForCommonOrder(\seqx,\polyVariable) \times
	\exponentialBase^{ \constantj \funcfForCommonPhaseB(\seqx)}\times
	\polyVariable^{\funcfForCommonShift(\seqx) +\varMonomial\varUpsample}\label{eq:encodedGOFDM}
	\end{align}
	construct a \ac{GCP}, where $\exponentialBase=\constante^{\frac{2\pi}{\numberOfPointsForPSK}}$.
\end{theorem}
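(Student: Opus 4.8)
The plan is to prove Theorem~\ref{th:reduced} by showing it is an iterated application of Theorem~\ref{th:golayIterative}, unwound $\numberOfIterations$ times. The key realization is that the double sum over $\varMonomial\in\{0,\dots,2^\numberOfIterations-1\}$ with the monomial indicator products $((1-\monomial[\permutationMono[\indexIteration]])(1-\monomial[\permutationMono[\indexIteration+1]]))_2$, etc., is nothing but a compact way of encoding a binary recursion tree: at each level $\indexIteration$ one decides, according to the bit pair $(\monomial[\permutationMono[\indexIteration]],\monomial[\permutationMono[\indexIteration+1]])$, which of the four sequences $\seqGcRecursion[\permutationOrderEle[\indexIteration]], \seqGdRecursion[\permutationOrderEle[\indexIteration]], \seqGdTildeRecursion[\permutationOrderEle[\indexIteration]], \seqGcTildeRecursion[\permutationOrderEle[\indexIteration]]$ to multiply in. So the first step is to set up an explicit recursion: define partial pairs $(\seqGf^{(\indexIterationANF)},\seqGg^{(\indexIterationANF)})$ built from the first $\indexIterationANF$ seed pairs, with $(\seqGf^{(0)},\seqGg^{(0)})$ a trivial length-one pair, and show that the step from $\indexIterationANF$ to $\indexIterationANF+1$ is exactly an instance of \eqref{eq:gcpGf}--\eqref{eq:gcpGg} with a suitable choice of $\seqGa\leftarrow\seqGf^{(\indexIterationANF)}$-type block, $\seqGc,\seqGd\leftarrow$ the next seed pair $(\seqGcRecursion[\permutationOrderEle[\indexIterationANF+1]],\seqGdRecursion[\permutationOrderEle[\indexIterationANF+1]])$, phases $\angleGolay[1],\angleGolay[2]$ absorbing the linear terms $\angleexp[\indexIteration]\monomial[\permutationMono[\indexIteration]]$ and the offsets $\arbitraryPhaseA,\arbitraryPhaseB$, and padding $\paddingVar\leftarrow\separationGolay[\indexIterationANF+1]$-type shifts. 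By Theorem~\ref{th:golayIterative}, if $(\seqGf^{(\indexIterationANF)},\seqGg^{(\indexIterationANF)})$ is a GCP then so is the next pair; induction then gives the result at $\indexIterationANF=\numberOfIterations$.

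\textbf{Key steps in order.} First I would fix the permutation $\seqPermutationCompShift$ and relabel variables so that $\permutationMono[\indexIteration]=\indexIteration$ (the general permutation only relabels which Boolean variable plays which role and does not affect the GCP property, since it is a bijection of the index set $\{0,\dots,2^\numberOfIterations-1\}$). Second, I would expand the quadratic phase $\funcfForCommonPhase$ and observe that $\frac{\numberOfPointsForPSK}{2}\monomial[\indexIteration]\monomial[\indexIteration+1]$ contributes the sign flip $(-1)^{\monomial[\indexIteration]\monomial[\indexIteration+1]}$ needed to match the $-\angleGolay[2]$ in \eqref{eq:gcpGg}; this is the standard Golay--Davis--Jedwab mechanism. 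Third, I would identify the four-way split in $\funcForCommonOrder$: at level $\indexIteration$ the pair $(\monomial[\indexIteration],\monomial[\indexIteration+1])$ takes one of four values $(0,0),(1,0),(0,1),(1,1)$, selecting $\seqGcRecursion[\permutationOrderEle[\indexIteration]}]$, $\seqGdRecursion[\permutationOrderEle[\indexIteration]}]$, $\seqGdTildeRecursion[\permutationOrderEle[\indexIteration]}]$, $\seqGcTildeRecursion[\permutationOrderEle[\indexIteration]}]$ respectively — exactly the set $\{\seqGc,\seqGd,\flipConjugate{\seqGd},\flipConjugate{\seqGc}\}$ appearing in \eqref{eq:gcpGf}--\eqref{eq:gcpGg}. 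Fourth, I would match the shift bookkeeping: the factor $\polyVariable^{\funcfForCommonShift(\seqx)+\varMonomial\varUpsample}$ combines the per-level separations $\separationGolay[\indexIteration]\monomial[\indexIteration]$ (these are the $\paddingVar$'s of Theorem~\ref{th:golayIterative}) with the uniform upsampling offset $\varMonomial\varUpsample$ (the $\polyVariable^\upsampleVarA$, $\polyVariable^\upsampleVarB$ exponents), and verify these assemble correctly under the polynomial identities of Section~\ref{subsec:poly}. Fifth, with the dictionary in place, apply Theorem~\ref{th:golayIterative} at each level and conclude by induction on $\numberOfIterations$; the base case $\numberOfIterations=1$ (or $\numberOfIterations=0$) is a direct check.

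\textbf{The main obstacle.} The hard part will not be the algebra of any single Golay step — that is already done in Theorem~\ref{th:golayIterative} — but rather verifying that the roles of $\seqGf$ and $\seqGg$ propagate consistently through the recursion, i.e. that the ``$\flipConjugate{\cdot}$'' and sign structure in \eqref{eq:gcpGg} is precisely reproduced by the $\seqGcTildeRecursion{}$, $\seqGdTildeRecursion{}$ substitutions together with the quadratic phase $\frac{\numberOfPointsForPSK}{2}\monomial[\indexIteration]\monomial[\indexIteration+1]$ at \emph{every} level, not just the first. Concretely, one must check that after applying Theorem~\ref{th:golayIterative} to form $(\seqGf^{(\indexIterationANF+1)},\seqGg^{(\indexIterationANF+1)})$, the resulting $\seqGf^{(\indexIterationANF+1)}$ still has the form of the sum in \eqref{eq:encodedFOFDM} truncated to $\indexIterationANF+1$ levels — so that the induction hypothesis can be re-applied — and likewise that $\seqGg^{(\indexIterationANF+1)}$ matches \eqref{eq:encodedGOFDM}. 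This requires carefully tracking how $\polySeq[{\flipConjugate{\seqGd}}][\polyVariable]$ and $\polySeq[{\flipConjugate{\seqGc}}][\polyVariable]$ from the $\seqGg$-branch of level $\indexIterationANF$ interact with the new seed pair at level $\indexIterationANF+1$; one should identify a suitable invariant (the ``shape'' of the partial sums, including which seeds appear conjugated-reversed) that is preserved by the Golay step, and verify the $\numberOfIterations$th (final, unpaired) level handled by $\funcfForCommonOrder$ and $\funcgForCommonOrder$ closes the recursion correctly. I would devote most of the write-up to making this invariant explicit and checking it is preserved.
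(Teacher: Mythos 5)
Your strategy coincides with the paper's. Appendix~\ref{app:GCPrecursion} proves the theorem by exactly the recursion you describe: at step $\indexIteration$ the accumulated pair plays the role of $(\seqGa,\seqGb)$ and the seed pair $(\seqGcRecursion[{\permutationOrderEle[\indexIteration]}],\seqGdRecursion[{\permutationOrderEle[\indexIteration]}])$ the role of $(\seqGc,\seqGd)$ in Theorem~\ref{th:golayIterative}, with $\angleGolay[1]=1$, $\angleGolay[2]=\exponentialBase^{\constantj\angleexp[\indexIteration]}$ and padding $\separationGolay[\indexIteration]$, so that Theorem~\ref{th:golayIterative} certifies the \ac{GCP} property at every level. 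The ``invariant'' you propose to devote most of the write-up to is precisely what the paper formalizes through the construction-sequence machinery of Appendix~\ref{app:Algebraic} (imported from \cite{Sahin_2018}): Boolean indicators such as $\funcGfForC[\indexIteration](\seqx)$ record which of $\seqGcRecursion[{\permutationOrderEle[\indexIteration]}]$, $\seqGdRecursion[{\permutationOrderEle[\indexIteration]}]$, $\seqGdTildeRecursion[{\permutationOrderEle[\indexIteration]}]$, $\seqGcTildeRecursion[{\permutationOrderEle[\indexIteration]}]$ enters the $\varMonomial$th summand at level $\indexIteration$, and your four-way split $(0,0),(1,0),(0,1),(1,1)$ together with the sign mechanism coming from $\frac{\numberOfPointsForPSK}{2}\monomial[{\permutationMono[{\indexIteration}]}]\monomial[{\permutationMono[{\indexIteration+1}]}]$ matches it exactly.

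The one step that would fail as justified is the reduction ``without loss of generality $\permutationMono[\indexIteration]=\indexIteration$'' on the grounds that a bijection of the index set cannot affect the \ac{GCP} property. It can. The $\varMonomial$th summand of \eqref{eq:encodedFOFDM} sits at $\polyVariable^{\funcfForCommonShift(\seqx)+\varMonomial\varUpsample}$; the functions $\funcfForCommonOrder$, $\funcfForCommonPhase$, $\funcfForCommonShift$ transform covariantly when the Boolean variables are relabeled, but the offset $\varMonomial\varUpsample$ uses the fixed encoding $\varMonomial=\sum_{\indexFirstOrderMonomial=1}^{\numberOfIterations}\monomial[\indexFirstOrderMonomial]2^{\numberOfIterations-\indexFirstOrderMonomial}$. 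Changing $\seqPermutationCompShift$ therefore permutes which composite block occupies which slot in the $\polyVariable$-domain, and permuting the blocks of a sequence does not preserve its aperiodic autocorrelation in general; that every permutation nevertheless yields a \ac{CS} is part of what the theorem asserts (it is how the distinct codewords are generated), so it cannot be assumed. The repair stays inside your framework: carry $\seqPermutationCompShift$ through the induction by letting the branch chosen at level $\indexIteration$ control bit $\permutationShift[{\indexIteration}]=\numberOfIterations-\permutationMono[\indexIteration]$ of the block index, which is what the factor $\polyVariableIt^{2^{\permutationShift[{\indexIteration}]}}$ accomplishes in the paper's recursion \eqref{eq:iterationGolay}. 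A further small point: $\arbitraryPhaseA$ and $\arbitraryPhaseB$ cannot be absorbed into $\angleGolay[1],\angleGolay[2]$ of any single Golay step, since those phases are shared by both outputs of Theorem~\ref{th:golayIterative} whereas $\arbitraryPhaseA,\arbitraryPhaseB$ act on $\seqGf$ and $\seqGg$ separately; they must be applied at the end, which is harmless because a constant phase leaves the \ac{APAC} unchanged.
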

The proof of Theorem~\ref{th:reduced} is given in Appendix~\ref{app:GCPrecursion}. 

{\color{\reviewColor}
Theorem~\ref{th:reduced} contains the results in \cite{davis_1999, Sahin_2018}, and \cite{paterson_2000}:
\begin{itemize}
\item The functions  $\funcfForCommonPhaseA(\seqx)$ and $\funcfForCommonPhaseB(\seqx)$ in Theorem~\ref{th:reduced} are identical to the ones in \cite{davis_1999} for $\angleexp[\indexIteration],\arbitraryPhaseA,\arbitraryPhaseB \in \integers_{2^h}$ and  \cite{paterson_2000} for $\angleexp[\indexIteration],\arbitraryPhaseA,\arbitraryPhaseB \in\integers_\numberOfPointsForPSK$, where $h\ge1$ is an integer. It was shown that  $\funcfForCommonPhaseA(\seqx)$ and $\funcfForCommonPhaseB(\seqx)$ yield the codewords in the cosets of the first-order \ac{RM} code within the second-order \ac{RM} code where the Hamming distance between two codewords is at least $2^{\numberOfIterations-2}$.
\item 
The function $\funcfForCommonShift(\seqx)$ in Theorem~\ref{th:reduced} appears in \cite{Sahin_2018} to generate non-contiguous \acp{CS} by increasing the degrees of the polynomials in the summands as in \eqref{eq:encodedFOFDM} and \eqref{eq:encodedGOFDM}. The number of non-zero clusters in the \acp{CS} can reach up to $2^\numberOfIterations$ with $\funcfForCommonShift(\seqx)$.
\end{itemize}
On the other hand, Theorem~\ref{th:reduced} introduces a new term which can be utilized for obtaining the number of non-zero clusters different than $2^\numberOfIterations$ through multiple seed \acp{GCP}:
\begin{itemize}
	\item  In Theorem~\ref{th:reduced}, $\funcfForCommonOrder(\seqx,\polyVariable)$ and $\funcgForCommonOrder(\seqx,\polyVariable)$ are the products of  $\numberOfIterations$ polynomials determined systematically based on the permutations of $\seqPermutationOrder$ and $\seqPermutationCompShift$ for $\indexIteration=1,2,\mydots,\numberOfIterations$	whereas they are generated through a single \ac{GCP} of length $\lengthGaGb$ and are not functions of $\indexIteration$ in \cite{Sahin_2018}. While $\seqPermutationOrder$ determines the sequences, $\seqPermutationCompShift$ defines the order of the sequences in $(\funcfForCommonOrder(\seqx,\polyVariable))_{\varMonomial=0}^{2^3-1}$. 
\end{itemize}
For example, let $\numberOfIterations=3$, $\seqPermutationCompShift=(3,2,1)$, and $\seqPermutationOrder=(1,2,3)$. The values of function $\funcfForCommonOrder(\seqx,\polyVariable)$ can be enumerated as
\begin{align}
(\funcfForCommonOrder(\seqx,\polyVariable))_{\varMonomial=0}^{2^3-1} =& (
\polySeq[{\seqGcRecursion[3]}][\polyVariable]\polySeq[{\seqGcRecursion[2]}][\polyVariable]\polySeq[{\seqGcRecursion[1]}][\polyVariable],
\polySeq[{\seqGcRecursion[3]}][\polyVariable]\polySeq[{\seqGcRecursion[2]}][\polyVariable]\polySeq[{\seqGdRecursion[1]}][\polyVariable],
\nonumber\\&
\polySeq[{\seqGcRecursion[3]}][\polyVariable]\polySeq[{\seqGdRecursion[2]}][\polyVariable]\polySeq[{\seqGdTildeRecursion[1]}][\polyVariable],
\polySeq[{\seqGcRecursion[3]}][\polyVariable]\polySeq[{\seqGdRecursion[2]}][\polyVariable]\polySeq[{\seqGcTildeRecursion[1]}][\polyVariable],
\nonumber\\&
\polySeq[{\seqGdRecursion[3]}][\polyVariable]\polySeq[{\seqGdTildeRecursion[2]}][\polyVariable]\polySeq[{\seqGcRecursion[1]}][\polyVariable],
\polySeq[{\seqGdRecursion[3]}][\polyVariable]\polySeq[{\seqGdTildeRecursion[2]}][\polyVariable]\polySeq[{\seqGdRecursion[1]}][\polyVariable],
\nonumber\\&
\polySeq[{\seqGdRecursion[3]}][\polyVariable]\polySeq[{\seqGcTildeRecursion[2]}][\polyVariable]\polySeq[{\seqGdTildeRecursion[1]}][\polyVariable],
\polySeq[{\seqGdRecursion[3]}][\polyVariable]\polySeq[{\seqGcTildeRecursion[2]}][\polyVariable]\polySeq[{\seqGcTildeRecursion[1]}][\polyVariable]
).
\label{eq:examplefo}
\end{align}
If  $\seqPermutationOrder$ is changed to $(3,2,1)$, the enumeration leads to 
\begin{align}
(\funcfForCommonOrder(\seqx,\polyVariable))_{\varMonomial=0}^{2^3-1} =& (
\polySeq[{\seqGcRecursion[1]}][\polyVariable]\polySeq[{\seqGcRecursion[2]}][\polyVariable]\polySeq[{\seqGcRecursion[3]}][\polyVariable],
\polySeq[{\seqGcRecursion[1]}][\polyVariable]\polySeq[{\seqGcRecursion[2]}][\polyVariable]\polySeq[{\seqGdRecursion[3]}][\polyVariable],
\nonumber\\&
\polySeq[{\seqGcRecursion[1]}][\polyVariable]\polySeq[{\seqGdRecursion[2]}][\polyVariable]\polySeq[{\seqGdTildeRecursion[3]}][\polyVariable],
\polySeq[{\seqGcRecursion[1]}][\polyVariable]\polySeq[{\seqGdRecursion[2]}][\polyVariable]\polySeq[{\seqGcTildeRecursion[3]}][\polyVariable],
\nonumber\\&
\polySeq[{\seqGdRecursion[1]}][\polyVariable]\polySeq[{\seqGdTildeRecursion[2]}][\polyVariable]\polySeq[{\seqGcRecursion[3]}][\polyVariable],
\polySeq[{\seqGdRecursion[1]}][\polyVariable]\polySeq[{\seqGdTildeRecursion[2]}][\polyVariable]\polySeq[{\seqGdRecursion[3]}][\polyVariable],
\nonumber\\&
\polySeq[{\seqGdRecursion[1]}][\polyVariable]\polySeq[{\seqGcTildeRecursion[2]}][\polyVariable]\polySeq[{\seqGdTildeRecursion[3]}][\polyVariable],
\polySeq[{\seqGdRecursion[1]}][\polyVariable]\polySeq[{\seqGcTildeRecursion[2]}][\polyVariable]\polySeq[{\seqGcTildeRecursion[3]}][\polyVariable]
)~,
\label{eq:examplefo2}
\end{align}
where the different sequences are chosen, but their distribution in $(\funcfForCommonOrder(\seqx,\polyVariable))_{\varMonomial=0}^{2^3-1}$ remains the same as compared to the one in \eqref{eq:examplefo}. Since $\funcfForCommonOrder(\seqx,\polyVariable)$ is the product of the $\numberOfIterations$ polynomials generated through the seed sequences, it is also equal to the polynomial representation of the convolutions of the corresponding sequences  based on the identities given in Section \ref{subsec:poly}. The length of the  $\varMonomial$th composite sequence after the convolutions can be calculated as $\lengthPillar = (\sum_{\indexIteration=1}^{\numberOfIterations}\lengthGcGdIterative[\indexIteration])-\numberOfIterations+1$. 

For $\angleexp[\indexIteration],\arbitraryPhaseA,\arbitraryPhaseB \in\integers_\numberOfPointsForPSK$, $\funcfForCommonOrder(\seqx,\polyVariable)$  is multiplied with $\varMonomial$th \ac{PSK} symbol determined by $\funcfForCommonPhaseA(\seqx)$, i.e., a \ac{RM} code over $\integers_\numberOfPointsForPSK$ in \eqref{eq:encodedFOFDM}.
In addition,  the degree of the  polynomial composed by $\funcfForCommonOrder(\seqx,\polyVariable)$ is also increased by $\polyVariable^{\funcfForCommonShift(\seqx) +\varMonomial\varUpsample}$. Therefore,  the overall operation can be represented as a shift of the $\varMonomial$th phase-rotated composite sequence in the $\polyVariable$-domain,  where the amount of shift is $\funcfForCommonShift(\seqx) +\varMonomial\varUpsample$.  Hence, the final sequence $\seqGf$ is then obtained by summing the $2^\numberOfIterations$ shifted and phase-rotated composite sequences. The length of the final sequence can be calculated as $\lengthPillar+\varUpsample(2^\numberOfIterations-1)+\sum_{\indexIteration=1}^{\numberOfIterations}\separationGolay[\indexIteration]$. 
\begin{figure}[t]
	\centering
	{\includegraphics[width =3.0in]{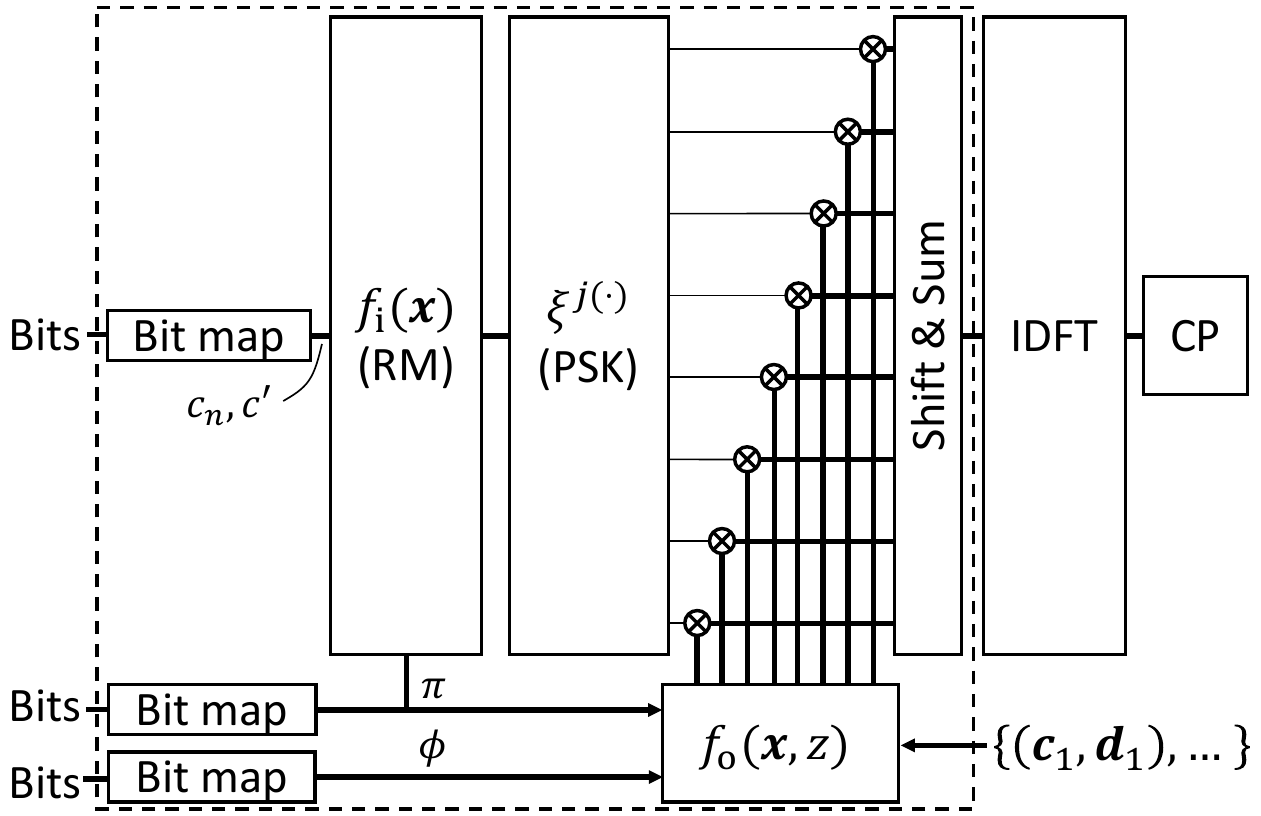}
	}
	\caption{{\color{\reviewColor}Interpretation of \eqref{eq:encodedFOFDM} as an \ac{OFDM} transmitter for $\numberOfIterations=3$.}}
	\label{fig:txCoherent}
\end{figure}
In \figurename~\ref{fig:txCoherent}, we illustrate these steps as an \ac{OFDM} transmitter with \ac{CP} where we configure the parameters  $\seqPermutationCompShift$, $\seqPermutationOrder$, $\angleexp[\indexIteration]$, and $\arbitraryPhaseA$ based on information bits. 

The number of non-zero elements and the number of non-zero clusters in the \acp{CS} are limited to $2^\numberOfIterations\cdot\lengthGaGb$ and $2^\numberOfIterations$ in \cite{Sahin_2018}, respectively. However, they can be chosen  more flexibly as a result of polynomial multiplications in $\funcfForCommonOrder(\seqx,\polyVariable)$. To see this, let 
$\seqGcRecursion[\indexIteration]=\upsampleOp[{\upsampleVarA_{\indexIteration}}][{\seqGcContRecursion[\indexIteration]}]$ and 
$\seqGdRecursion[\indexIteration]=\upsampleOp[{\upsampleVarA_{\indexIteration}}][{\seqGdContRecursion[\indexIteration]}]$ where $(\seqGcContRecursion[\indexIteration],\seqGdContRecursion[\indexIteration])$ is a contiguous \ac{GCP} of length $\lengthGcGdContIterative[\indexIteration]$. 
Therefore, $\funcfForCommonOrder(\seqx,\polyVariable)$ is equal to the product of the polynomial representation of upsampled \acp{CS}, e.g.,  $\polySeq[{\upsampleOp[{\upsampleVarA_1}][{\seqGcRecursion[1]}]}][\polyVariable]
\polySeq[{\upsampleOp[{\upsampleVarA_2}][{\seqGcRecursion[2]}]}][\polyVariable]
\polySeq[{\upsampleOp[{\upsampleVarA_3}][{\seqGcRecursion[3]}]}][\polyVariable]$. Let
\begin{align}
	\upsampleVarA_{{\permutationUpsample[{\indexIteration+1}]}} &\ge 
		\lengthGcGdContIterative[{\permutationUpsample[{\indexIteration}]}]\upsampleVarA_{{\permutationUpsample[{\indexIteration}]}}
		\label{eq:condition}
\end{align}
for $\indexIteration=1,2\mydots,\numberOfIterations-1$ and $\upsampleVarA_{\permutationUpsample[1]}\ge1$, where 
$\seqPermutationUpsample=(\permutationUpsample[\indexIteration])_{\indexIteration=1}^{\numberOfIterations}$ is a sequence defined by a permutation of $\{1,2,\dots,\numberOfIterations\}$. The seed \acp{CS} then spread each other (as Kronecker products) and the composite sequences can be non-contiguous. When \eqref{eq:condition} is tight, the length of the composite sequence is $\lengthPillar = \prod_{\indexIteration=1}^{\numberOfIterations}\lengthGcGdContIterative[\indexIteration]$.

Let $\numberSequencesDifferentThanLEngthOne$ and $\numberSequencesNotColinear$ be the number of \acp{CS} of length larger than $1$ and the number of seed \acp{CS} that are not co-linear with each other, respectively. 
Assuming that the supports of the composite sequences do not overlap in the $\polyVariable$-domain, the number of distinct \acp{CS} generated with Theorem~\ref{th:reduced} is  $\numberSequencesDifferentThanLEngthOne!\frac{(\numberOfIterations!)^2}{(\numberOfIterations-\numberSequencesNotColinear+1)!}\numberOfPointsForPSK^{\numberOfIterations+1}$  under the condition \eqref{eq:condition} and the presence of at least one seed \ac{CS} of length larger than $1$. This result is substantially different from the numbers provided in \cite{davis_1999} and \cite{Sahin_2018}.  It can be obtained from 
  $\numberSequencesDifferentThanLEngthOne!$  permutations of $\seqPermutationUpsample$ (i.e., results in different spreading configurations), 
 ${\numberOfIterations!}/{(\numberOfIterations-\numberSequencesNotColinear+1)!}$ permutations of $\seqPermutationOrder$ (i.e., gives different composite sequences),
 $\numberOfIterations!$ permutations of $\seqPermutationCompShift$ (i.e., alters the order of the composite sequence in $(\funcfForCommonOrder(\seqx,\polyVariable))_{\varMonomial=0}^{2^\numberOfIterations-1}$) in the presence of at least one seed \ac{GCP}, and $\numberOfPointsForPSK$ different values for $\angleexp[\indexIteration=1,2,\mydots,\numberOfIterations]$ and $\arbitraryPhaseA$. 
  Note that the minimum Euclidean distance for the sequences $\{(\exponentialBase^{ \constantj \funcfForFinalPhase(\seqx)})_{\varMonomial=0}^{2^\numberOfIterations-1}\}$ is equal to $2^{\numberOfIterations/2}\sin{(\pi/\numberOfPointsForPSK)}$ as the codewords are in the second-order \ac{RM} code. The Boolean functions that determine seed sequences and their positions are also from the second-order \ac{RM} code. However, the minimum Euclidean distance between \acp{CS} is a function of the elements of seed \acp{GCP} in general.
}

\section{CS-Based UL Control Channel}
\label{sec:csPucch}
In this section, we derive two modulation schemes for \ac{UCI} transmission by relying Theorem~\ref{th:golayIterative} and Theorem~\ref{th:reduced} discussed in Section~\ref{sec:theorems}. 
We generate non-contiguous \acp{CS} compatible with an interlace through the parameters $\paddingVar$, $\upsampleVarA$, and $\upsampleVarB$ in  Theorem~\ref{th:golayIterative} for the first scheme supporting up to 2 \ac{UCI} bits. We exploit the permutations of  $\seqPermutationCompShift$ and $\seqPermutationOrder$, $\angleexp[\indexIteration=1,2,\mydots,\numberOfIterations]$, and $\arbitraryPhaseA$  in Theorem~\ref{th:reduced} to  transmit more than 2 \ac{UCI} bits. We also discuss the user multiplexing with these schemes.

\subsection{Transmission for up to 2 \ac{UCI} bits}
\label{subsec:twobitsAndusermultiplexing}
\begin{figure}[t]
	\centering
	{\includegraphics[width =2.5in]{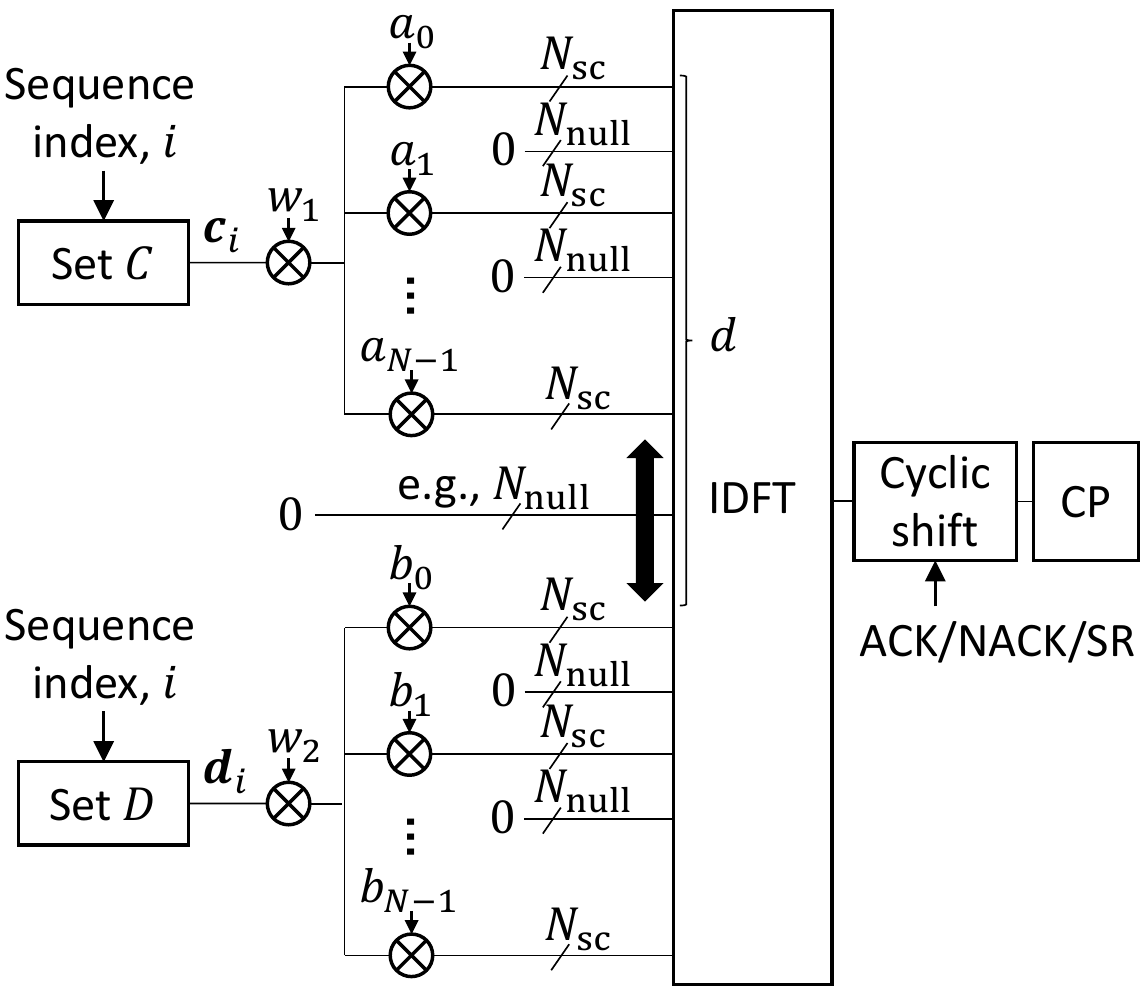}
	}
	\caption{Transmitter for up to 2 \ac{UCI} bits.}
	\label{fig:tx}
\end{figure}

Consider the interlace model in \figurename~\ref{fig:interlace}. Let  $\setC\triangleq\{\seqGc_1,\seqGc_2,\mydots,\seqGc_\numberOfSequences\}$ and $\setD\triangleq\{\seqGd_1,\seqGd_2,\mydots,\seqGd_\numberOfSequences\}$ where  $(\seqGc_\indexSequence,\seqGd_\indexSequence)$ is a \ac{GCP} of length $\RBsize$ for $\indexSequence=1,2,\mydots,\numberOfSequences$.
We first  choose a \ac{GCP} $(\seqGa,\seqGb)$ of length $\numberOfClusters/2$  where the elements of $\seqGa$, $\seqGb$, $\seqGc_\indexSequence$ and $\seqGd_\indexSequence$ are in the set $\setQ\triangleq\{\constantOne,\constantMinusOne,\constanti,\constantMinusi\}$.  We then generate the interlace through \eqref{eq:gcpGf} in Theorem \ref{th:golayIterative} by setting $\seqGc=\seqGc_\indexSequence$, $\seqGd=\seqGd_\indexSequence$, $\angleGolay[1]=\angleGolay[2]=\constante^{\frac{\constanti\pi}{4}}$, $\upsampleVarA = \RBsize+\numberOfNulls$, $\upsampleVarB =1$, and $\paddingVar = (\RBsize+\numberOfNulls)\times\numberOfClusters/2$. 
With this choice, $\seqGa$ and $\seqGb$ act as sequences spreading $\seqGc_\indexSequence$ and $\seqGd_\indexSequence$. This can be seen from the identities given in Section \ref{subsec:poly} as
\begin{align}
\polySeq[{\seqGa}][\polyVariable^{\RBsize+\numberOfNulls}]\polySeq[{\seqGc}][\polyVariable]=\polySeq[{\upsampleOp[{\RBsize+\numberOfNulls}][\seqGa]*\seqGc}][\polyVariable],
\end{align}
and
\begin{align}
\polySeq[{\seqGb}][\polyVariable^{\RBsize+\numberOfNulls}]\polySeq[{\seqGd}][\polyVariable]=\polySeq[{\upsampleOp[{\RBsize+\numberOfNulls}][\seqGb]*\seqGd}][\polyVariable].
\end{align}
In other words, the \acp{PRB} are constructed with the phased-rotated versions of $\seqGc_\indexSequence$ and $\seqGd_\indexSequence$ and the phase rotations are determined by the elements of $\seqGa$ and $\seqGb$ as shown in \figurename~\ref{fig:tx}. Based on the second part of \eqref{eq:gcpGf} in Theorem \ref{th:golayIterative}, $\paddingVar$ can be chosen as $(\RBsize+\numberOfNulls)\times\numberOfClusters/2$ to pad the sequence ${\upsampleOp[{\RBsize+\numberOfNulls}][\seqGb]*\seqGd_\indexSequence}$ with $(\RBsize+\numberOfNulls)\times\numberOfClusters/2$ zeros. Hence, while the first half of the interlace is a function of $\seqGa$ and $\seqGc_\indexSequence$, the second part is generated through $\seqGb$ and $\seqGd_\indexSequence$ as illustrated in  \figurename~\ref{fig:tx}. For instance, by considering the interlace parameters in \ac{NR-U} for $15$~kHz subcarrier spacing, the interlace can be constructed when $\upsampleVarA =120$, $\upsampleVarB =1$, and $\paddingVar = 600$ and the sequences $\seqGa$, $\seqGb$, $\seqGc_\indexSequence$, $\seqGd_\indexSequence$ can be arbitrarily chosen such as $\seqGa = (\constantOne,\constantOne,\constantOne,\constantMinusi,\constanti)$, $\seqGb = (\constantOne,\constanti,\constantMinusOne,\constantOne,\constantMinusi)$, $\seqGc_\indexSequence =  (\constantOne,\constantOne,\constantOne,\constantOne,\constantMinusOne,\constantMinusOne,\constantMinusOne,\constantOne,\constanti,\constantMinusi,\constantMinusOne,\constantOne)$, and $\seqGd_\indexSequence = (\constantOne,\constantOne,\constanti,\constanti,\constantOne,\constantOne,\constantMinusOne,\constantOne,\constantOne,\constantMinusOne,\constantOne,\constantMinusOne)$ \cite{holzmann_1991}.

{\color{\reviewColor}
One of the main benefits of the proposed approach is that $\setC$ and $\setD$ can be reused by  changing $\numberOfNulls$, $\paddingVar$, or defining  a new single \ac{GCP} $(\seqGa,\seqGb)$, which remarkably simplifies the design involving different interlace configurations.
For example, the \ac{GCP} $(\seqGa,\seqGb)$ can be configured based on subcarrier spacing to maintain signal bandwidth. In another example, a larger $\paddingVar$  can generate a gap in the frequency domain, which can be utilized for contiguous random access signals. In both examples, 
 \ac{PAPR} is still maintained to be less than or equal to $3$~dB without modifying the sequences in $\setC$ and $\setD$. Note that we design $\setC$ and $\setD$ in Section~\ref{subsec:cciset} to address the \ac{CCI} (i.e., inter-cell interference) minimization problem independently of the spreading sequences $\seqGa$ and $\seqGb$ without losing the low-\ac{PAPR} benefit.
}

{\color{\reviewColor}
\subsubsection{User multiplexing and Transmitter}
\label{subsec:formatzeroTX}
For this scheme, we consider orthogonal sequence selection for user multiplexing and \ac{UCI} transmission. We assume that all scheduled users utilize the $\indexSequence$th \ac{GCP} $(\seqGc_\indexSequence,\seqGd_\indexSequence)$ and \ac{GCP} $(\seqGa,\seqGb)$. No overhead due to the reference symbols is introduced as in \ac{NR} \ac{PUCCH} Format 0.
The interlace has $\RBsize\numberOfClusters$ non-zero elements. Hence, it is possible to generate $\RBsize\numberOfClusters$  orthogonal resources which can be shared by up to $\RBsize\numberOfClusters$ users. 
By exploiting the property of unimodular sequences as discussed in Section~\ref{sec:unimodular}, the $\shift[]$th orthogonal resource can be generated by multiplying the $\indexEleOfSeq$th non-zero element of the \ac{CS}  with  $\constante^{{\shift[]}\frac{2\pi\constanti}{\RBsize\numberOfClusters}\times\indexEleOfSeq}$ for ${\shift[]},\indexEleOfSeq\in\{0,1,\mydots,\RBsize\numberOfClusters-1\}$. 
 However, the orthogonality between the sequences cannot be kept under dispersive channels. 
 To circumvent this issue,  the phase rotation is restricted to be periodic with the period of $\RBsize$, i.e.,  ${\shift[]}=\{0,\numberOfClusters,2\numberOfClusters,\mydots,(\RBsize-1)\numberOfClusters\}$. Therefore, the orthogonality  {\em within} the \ac{PRB} can still be maintained if the channel between  each user and the base station is assumed to be flat  within the bandwidth of the \ac{PRB}.
 Under this restriction, the maximum number of users that can be supported reduces to $\RBsize$, but a low-complexity receiver can be employed. Note that corresponding modulation operation in the frequency domain can be effectively implemented through uniformly separated cyclic shifts in time as shown in \figurename~\ref{fig:tx}.  If each \ac{PRB} consists of $\RBsize=12$ subcarriers, 12 orthogonal resources in the interlace can be shared by 6 users to transmit 1-bit information (e.g., \ac{ACK}/\ac{NACK}) or 3 users to transmit 2-bit information (e.g., \ac{ACK}/\ac{NACK} and scheduling request) with $M$-ary orthogonal signaling, where $M=2$ for 1 \ac{UCI} bit or $M=4$ for 2 \ac{UCI} bits. Each user selects one of the orthogonal resources to indicate \ac{UCI}.  

\subsubsection{Receiver Design}
The receiver exploits the orthogonality of the sequences in each \ac{PRB} to decode the information. Since there is no reference symbol, the receiver first calculates the absolute square of the matched filter output for the $\shift[]$th orthogonal sequence in each \ac{PRB}. It then combines the results for both \ac{ACK} and \ac{NACK} at different \acp{PRB} to obtain the test statistic. If the test statistic is lower than a threshold determined by Neyman-Pearson criterion, it is considered as \ac{DTX}, i.e., the user could not decode the signal in the downlink. Otherwise, by comparing the matched filter results, it determines if it is \ac{ACK} or \ac{NACK}. Note that the detected sequence can indicate \ac{ACK}/\ac{NACK} and/or  \ac{SR}, depending on the network configuration. 
}

\subsubsection{Co-channel interference mitigation}
\label{subsec:cciset}
For the \ac{CCI} minimization, the cross-correlation between any two sequences in $\setC$ and $\setD$ should be as low as possible.
Due to the multipath channel, the signals may be exposed to additional shift in time within the \ac{CP}. Therefore, the cross-correlation analysis should consider the largest possible correlation in {\em time}. 
{\color{\reviewColor}
In the frequency domain, the peak cross-correlation between $\seqGc_\indexSequence$ and $\seqGc_j$ can be defined as
\begin{align}
\peakCrossCorrelatation[(\seqGc_\indexSequence,\seqGc_j)]\triangleq\max_{{\shift[]{}\in[0,\RBsize-1]}}\langle\seqGc_\indexSequence,\seqGc_j\odot \seqGsShift[{\shift[]}] \rangle
\end{align}
where $\seqGsShift[{\shift[]}] = {(\constante^{\shift[]\frac{2\pi{\constanti}}{\RBsize}\times0}, \constante^{\shift[]\frac{2\pi{\constanti}}{\RBsize}\times1}, \dots, \constante^{\shift[]\frac{2\pi{\constanti}}{\RBsize}\times(\RBsize-1)} )}$. Therefore, the maximum peak cross-correlation for both sequences in $\setC$ and $\setD$ should be minimized, i.e., $\peakCrossCorrelatation[(\seqGc_\indexSequence,\seqGc_j)]\le\thershold$ and $\peakCrossCorrelatation[(\seqGd_\indexSequence,\seqGd_j)]\le\thershold$
for $\indexSequence\neq j$ and $\indexSequence,j\in \{1,2,\mydots,\numberOfSequences\}$, where $\thershold$ is a threshold.}
In \ac{NR}, the number of available base sequences  is set to $\numberOfSequences=30$ for $\RBsize = 12$ \cite{nr_phy_2020} and the maximum peak cross-correlation is $0.8$. Hence, we also target the same number of sequences in  $\setC$ and $\setD$ and a similar or better maximum peak cross-correlation with \acp{CS}. This naturally leads to the following question for the proposed scheme: Do there exist $\setC$ and $\setD$ for $\numberOfSequences=30$ and $\RBsize=12$ such that the maximum peak cross-correlation between any two \acp{CS} is less than $0.8$?

To answer this question, we propose a simple search algorithm which exploits the exhaustively generated \acp{GCP}  in \cite{holzmann_1991} for length 12 to obtain $\setC$ and $\setD$. We initialize the algorithm with $\seedSize=52$ \acp{GCP} of length 12 listed in  \cite{holzmann_1991} and populate as $\seedA=\{\seedSeqGcCandidate[1],\mydots,\seedSeqGcCandidate[\seedSize]\}$ and $\seedB=\{\seedSeqGdCandidate[1],\mydots,\seedSeqGdCandidate[\seedSize]\}$. For the $\indexForFirst$th seed \ac{GCP} $({\seedSeqGcCandidate[\indexForFirst]},{\seedSeqGdCandidate[\indexForFirst]})$, we first enumerate $\seedSizeEq=8$ equivalent \acp{GCP} by interchanging, reflecting both (i.e., reversing the order of the elements of the sequences), and conjugate reflecting original sequences in the seed \ac{GCP}, which lead to the sets $\seedAp=\{\seqGcCandidate[1],\mydots,\seqGcCandidate[\seedSizeEq]\}$ and  $\seedBp=\{\seqGdCandidate[1],\mydots,\seqGdCandidate[\seedSizeEq]\})$. Because of the properties of \ac{GCP}, the $(\seqGcCandidate[\indexForSecond],\seqGdCandidate[\indexForSecond])$ still constructs \acp{GCP} for $\indexForSecond=1,\mydots,\seedSizeEq$. For a given candidate \ac{GCP} $(\seqGcCandidate[\indexForSecond],\seqGdCandidate[\indexForSecond])$, we calculate $\langle\seqGc_\indexSequence,\seqGcCandidate[\indexForSecond]\odot \seqGsShift[{\shift[]}] \rangle$ and  $\langle\seqGd_\indexSequence,\seqGdCandidate[\indexForSecond]\odot \seqGsShift[{\shift[]}] \rangle$  for $\seqGc_\indexSequence\in\setC$ and $\seqGd_\indexSequence\in\setD$ and  $\shift[]\in\{0,1/\lengthGcGd\upsampleValueU,\mydots,(\lengthGcGd\upsampleValueU-1)/\lengthGcGd\upsampleValueU\}$ and  $\upsampleValueU>1$. If the results are less than or equal to $\thershold$ for all $\shift[]$, we update $\setC$ and $\setD$ by including the sequences in the candidate \ac{GCP} to the sets. 

\begin{table}[t]
    \centering
	\caption{The sequences in $\setC$ and $\setD$}
	\label{table:seqC}
	\if \IEEEsubmission1
		\resizebox{2.6in}{!}
		{
			\begin{tabular}{r|cc}
				$\indexSequence$  & $\seqGc_\indexSequence$ & $\seqGd_\indexSequence$ \\
				\hline
				1  & ({+},{-},{i},{j},{+},{-},{-},{-},{+},{+},{+},{+}) & ({-},{+},{-},{+},{+},{-},{+},{+},{j},{j},{+},{+}) \\
				2  & ({+},{-},{j},{i},{+},{-},{-},{-},{+},{+},{+},{+}) & ({-},{+},{-},{+},{+},{-},{+},{+},{i},{i},{+},{+}) \\
				3  & ({+},{+},{+},{+},{i},{+},{-},{j},{+},{-},{-},{+}) & ({+},{+},{-},{-},{i},{+},{+},{i},{+},{-},{+},{-}) \\
				4  & ({+},{-},{-},{+},{i},{-},{+},{j},{+},{+},{+},{+}) & ({-},{+},{-},{+},{j},{+},{+},{j},{-},{-},{+},{+}) \\
				5  & ({+},{+},{+},{+},{j},{+},{-},{i},{+},{-},{-},{+}) & ({+},{+},{-},{-},{j},{+},{+},{j},{+},{-},{+},{-}) \\
				6  & ({+},{-},{-},{+},{j},{-},{+},{i},{+},{+},{+},{+}) & ({-},{+},{-},{+},{i},{+},{+},{i},{-},{-},{+},{+}) \\
				7  & ({+},{+},{+},{+},{i},{-},{+},{j},{+},{-},{-},{+}) & ({+},{+},{-},{-},{i},{-},{-},{i},{+},{-},{+},{-}) \\
				8  & ({+},{-},{-},{+},{i},{+},{-},{j},{+},{+},{+},{+}) & ({-},{+},{-},{+},{j},{-},{-},{j},{-},{-},{+},{+}) \\
				9  & ({+},{+},{+},{+},{j},{-},{+},{i},{+},{-},{-},{+}) & ({+},{+},{-},{-},{j},{-},{-},{j},{+},{-},{+},{-}) \\
				10 & ({+},{-},{-},{+},{j},{+},{-},{i},{+},{+},{+},{+}) & ({-},{+},{-},{+},{i},{-},{-},{i},{-},{-},{+},{+}) \\
				11 & ({+},{+},{-},{+},{-},{j},{+},{j},{-},{+},{+},{+}) & ({-},{-},{+},{-},{j},{-},{i},{-},{-},{+},{+},{+}) \\
				12 & ({+},{+},{-},{+},{-},{i},{+},{i},{-},{+},{+},{+}) & ({-},{-},{+},{-},{i},{-},{j},{-},{-},{+},{+},{+}) \\
				13 & ({+},{+},{+},{-},{-},{i},{-},{j},{-},{+},{-},{-}) & ({+},{+},{+},{-},{j},{+},{j},{-},{+},{-},{+},{+}) \\
				14 & ({+},{+},{+},{-},{-},{j},{-},{i},{-},{+},{-},{-}) & ({+},{+},{+},{-},{i},{+},{i},{-},{+},{-},{+},{+}) \\
				15 & ({+},{+},{-},{+},{+},{j},{-},{j},{-},{+},{+},{+}) & ({-},{-},{+},{-},{j},{+},{i},{+},{-},{+},{+},{+}) \\
				16 & ({+},{+},{-},{+},{+},{i},{-},{i},{-},{+},{+},{+}) & ({-},{-},{+},{-},{i},{+},{j},{+},{-},{+},{+},{+}) \\
				17 & ({+},{+},{+},{-},{+},{i},{+},{j},{-},{+},{-},{-}) & ({+},{+},{+},{-},{j},{-},{j},{+},{+},{-},{+},{+}) \\
				18 & ({+},{+},{+},{-},{+},{j},{+},{i},{-},{+},{-},{-}) & ({+},{+},{+},{-},{i},{-},{i},{+},{+},{-},{+},{+}) \\
				19 & ({+},{+},{+},{-},{i},{i},{+},{-},{+},{+},{-},{+}) & ({+},{+},{+},{-},{-},{-},{j},{i},{-},{-},{+},{-}) \\
				20 & ({+},{-},{+},{+},{-},{+},{j},{j},{-},{+},{+},{+}) & ({-},{+},{-},{-},{j},{i},{-},{-},{-},{+},{+},{+}) \\
				21 & ({+},{+},{+},{-},{j},{j},{-},{+},{+},{+},{-},{+}) & ({+},{+},{+},{-},{+},{+},{i},{j},{-},{-},{+},{-}) \\
				22 & ({+},{-},{+},{+},{+},{-},{i},{i},{-},{+},{+},{+}) & ({-},{+},{-},{-},{i},{j},{+},{+},{-},{+},{+},{+}) \\
				23 & ({+},{+},{+},{i},{-},{+},{-},{-},{i},{+},{-},{+}) & ({+},{+},{+},{i},{-},{+},{+},{+},{j},{-},{+},{-}) \\
				24 & ({+},{+},{+},{j},{-},{+},{-},{-},{j},{+},{-},{+}) & ({+},{+},{+},{j},{-},{+},{+},{+},{i},{-},{+},{-}) \\
				25 & ({+},{+},{-},{+},{+},{+},{j},{i},{-},{-},{-},{+}) & ({+},{+},{-},{+},{j},{j},{+},{-},{+},{+},{+},{-}) \\
				26 & ({+},{-},{-},{-},{j},{i},{+},{+},{+},{-},{+},{+}) & ({-},{+},{+},{+},{-},{+},{i},{i},{+},{-},{+},{+}) \\
				27 & ({+},{+},{-},{+},{-},{-},{i},{j},{-},{-},{-},{+}) & ({+},{+},{-},{+},{i},{i},{-},{+},{+},{+},{+},{-}) \\
				28 & ({+},{-},{-},{-},{i},{j},{-},{-},{+},{-},{+},{+}) & ({-},{+},{+},{+},{+},{-},{j},{j},{+},{-},{+},{+}) \\
				29 & ({+},{+},{-},{+},{i},{+},{-},{i},{-},{-},{+},{+}) & ({+},{+},{-},{+},{i},{+},{+},{j},{+},{+},{-},{-}) \\
				30 & ({+},{+},{-},{-},{j},{-},{+},{j},{+},{-},{+},{+}) & ({-},{-},{+},{+},{i},{+},{+},{j},{+},{-},{+},{+})
			\end{tabular}
		}
	\else
		\begin{tabular}{r|cc}
			$\indexSequence$  & $\seqGc_\indexSequence$ & $\seqGd_\indexSequence$ \\
			\hline
			1  & ({+},{-},{i},{j},{+},{-},{-},{-},{+},{+},{+},{+}) & ({-},{+},{-},{+},{+},{-},{+},{+},{j},{j},{+},{+}) \\
			2  & ({+},{-},{j},{i},{+},{-},{-},{-},{+},{+},{+},{+}) & ({-},{+},{-},{+},{+},{-},{+},{+},{i},{i},{+},{+}) \\
			3  & ({+},{+},{+},{+},{i},{+},{-},{j},{+},{-},{-},{+}) & ({+},{+},{-},{-},{i},{+},{+},{i},{+},{-},{+},{-}) \\
			4  & ({+},{-},{-},{+},{i},{-},{+},{j},{+},{+},{+},{+}) & ({-},{+},{-},{+},{j},{+},{+},{j},{-},{-},{+},{+}) \\
			5  & ({+},{+},{+},{+},{j},{+},{-},{i},{+},{-},{-},{+}) & ({+},{+},{-},{-},{j},{+},{+},{j},{+},{-},{+},{-}) \\
			6  & ({+},{-},{-},{+},{j},{-},{+},{i},{+},{+},{+},{+}) & ({-},{+},{-},{+},{i},{+},{+},{i},{-},{-},{+},{+}) \\
			7  & ({+},{+},{+},{+},{i},{-},{+},{j},{+},{-},{-},{+}) & ({+},{+},{-},{-},{i},{-},{-},{i},{+},{-},{+},{-}) \\
			8  & ({+},{-},{-},{+},{i},{+},{-},{j},{+},{+},{+},{+}) & ({-},{+},{-},{+},{j},{-},{-},{j},{-},{-},{+},{+}) \\
			9  & ({+},{+},{+},{+},{j},{-},{+},{i},{+},{-},{-},{+}) & ({+},{+},{-},{-},{j},{-},{-},{j},{+},{-},{+},{-}) \\
			10 & ({+},{-},{-},{+},{j},{+},{-},{i},{+},{+},{+},{+}) & ({-},{+},{-},{+},{i},{-},{-},{i},{-},{-},{+},{+}) \\
			11 & ({+},{+},{-},{+},{-},{j},{+},{j},{-},{+},{+},{+}) & ({-},{-},{+},{-},{j},{-},{i},{-},{-},{+},{+},{+}) \\
			12 & ({+},{+},{-},{+},{-},{i},{+},{i},{-},{+},{+},{+}) & ({-},{-},{+},{-},{i},{-},{j},{-},{-},{+},{+},{+}) \\
			13 & ({+},{+},{+},{-},{-},{i},{-},{j},{-},{+},{-},{-}) & ({+},{+},{+},{-},{j},{+},{j},{-},{+},{-},{+},{+}) \\
			14 & ({+},{+},{+},{-},{-},{j},{-},{i},{-},{+},{-},{-}) & ({+},{+},{+},{-},{i},{+},{i},{-},{+},{-},{+},{+}) \\
			15 & ({+},{+},{-},{+},{+},{j},{-},{j},{-},{+},{+},{+}) & ({-},{-},{+},{-},{j},{+},{i},{+},{-},{+},{+},{+}) \\
			16 & ({+},{+},{-},{+},{+},{i},{-},{i},{-},{+},{+},{+}) & ({-},{-},{+},{-},{i},{+},{j},{+},{-},{+},{+},{+}) \\
			17 & ({+},{+},{+},{-},{+},{i},{+},{j},{-},{+},{-},{-}) & ({+},{+},{+},{-},{j},{-},{j},{+},{+},{-},{+},{+}) \\
			18 & ({+},{+},{+},{-},{+},{j},{+},{i},{-},{+},{-},{-}) & ({+},{+},{+},{-},{i},{-},{i},{+},{+},{-},{+},{+}) \\
			19 & ({+},{+},{+},{-},{i},{i},{+},{-},{+},{+},{-},{+}) & ({+},{+},{+},{-},{-},{-},{j},{i},{-},{-},{+},{-}) \\
			20 & ({+},{-},{+},{+},{-},{+},{j},{j},{-},{+},{+},{+}) & ({-},{+},{-},{-},{j},{i},{-},{-},{-},{+},{+},{+}) \\
			21 & ({+},{+},{+},{-},{j},{j},{-},{+},{+},{+},{-},{+}) & ({+},{+},{+},{-},{+},{+},{i},{j},{-},{-},{+},{-}) \\
			22 & ({+},{-},{+},{+},{+},{-},{i},{i},{-},{+},{+},{+}) & ({-},{+},{-},{-},{i},{j},{+},{+},{-},{+},{+},{+}) \\
			23 & ({+},{+},{+},{i},{-},{+},{-},{-},{i},{+},{-},{+}) & ({+},{+},{+},{i},{-},{+},{+},{+},{j},{-},{+},{-}) \\
			24 & ({+},{+},{+},{j},{-},{+},{-},{-},{j},{+},{-},{+}) & ({+},{+},{+},{j},{-},{+},{+},{+},{i},{-},{+},{-}) \\
			25 & ({+},{+},{-},{+},{+},{+},{j},{i},{-},{-},{-},{+}) & ({+},{+},{-},{+},{j},{j},{+},{-},{+},{+},{+},{-}) \\
			26 & ({+},{-},{-},{-},{j},{i},{+},{+},{+},{-},{+},{+}) & ({-},{+},{+},{+},{-},{+},{i},{i},{+},{-},{+},{+}) \\
			27 & ({+},{+},{-},{+},{-},{-},{i},{j},{-},{-},{-},{+}) & ({+},{+},{-},{+},{i},{i},{-},{+},{+},{+},{+},{-}) \\
			28 & ({+},{-},{-},{-},{i},{j},{-},{-},{+},{-},{+},{+}) & ({-},{+},{+},{+},{+},{-},{j},{j},{+},{-},{+},{+}) \\
			29 & ({+},{+},{-},{+},{i},{+},{-},{i},{-},{-},{+},{+}) & ({+},{+},{-},{+},{i},{+},{+},{j},{+},{+},{-},{-}) \\
			30 & ({+},{+},{-},{-},{j},{-},{+},{j},{+},{-},{+},{+}) & ({-},{-},{+},{+},{i},{+},{+},{j},{+},{-},{+},{+})
		\end{tabular}
	\fi
	\vspace{-10pt}
\end{table}

We list the sets obtained for $\seqGc_\indexSequence$ and $\seqGd_\indexSequence$ in Table \ref{table:seqC} for $\thershold=0.715$ and $\upsampleValueU=128$. With the aforementioned procedure, 
we could not obtain $\setC$ and $\setD$ when $\thershold<0.715$ for $\numberOfSequences=30$ and $\RBsize=12$. However, the numerical results given in Section~\ref{sec:numerical} show that the maximum peak cross-correlation is still less than the ones for \ac{ZC} sequences and the sequences adopted in \ac{NR} \cite{nr_phy_2020}. 
{\color{\reviewColor}
In \cite{interDigitalGCSeval}, a comparison for the peak cross-correlation for different sequences sets is provided. The comparison shows that reducing maximum peak cross-correlation less than $0.7$ is challenging under \ac{PAPR} and \ac{QPSK} alphabet constraints.}
It is also worth noting that the sets obtained for $\seqGc_\indexSequence$ and $\seqGd_\indexSequence$ are not unique and are based on the initial seed sequences.

{\color{\reviewColor}
\subsection{Transmission for more than 2 \ac{UCI} bits}
\label{subsec:morebits}
In \ac{NR},  $\RBsize$ is fixed to $12=2^2\times3$ subcarriers. Assuming an even number of non-zero clusters, e.g., $\numberOfClusters=10$ clusters, we set $\numberOfIterations=3$.} 
Let  $(\seqGa,\seqGb)$ be a \ac{GCP} of length $3$, and $(\seqGc,\seqGd)$ be a \ac{GCP} of length $\numberOfClusters/2$. Based on Theorem~\ref{th:reduced}, the following configurations result in \acp{CS} compatible with the interlace in \figurename~\ref{fig:interlace}:
\begin{itemize}
\item Configuration~1: $(\seqGcRecursion[1],\seqGdRecursion[1])=((\constantOne),(\constantOne))$, $(\seqGcRecursion[2],\seqGdRecursion[2])=(\seqGa,\seqGb)$, $(\seqGcRecursion[3],\seqGdRecursion[3])=(\upsampleOp[{2(\numberOfNulls+\RBsize)}][\seqGc],\upsampleOp[{2(\numberOfNulls+\RBsize)}][\seqGd])$, $\varUpsample=3$, 
$\separationGolay[{\permutationMono[{\indexIteration}]=1}]=\numberOfNulls+\RBsize-4\varUpsample$, and $\separationGolay[{\permutationMono[{\indexIteration}]\neq1}]=0$
\item Configuration~2: $(\seqGcRecursion[1],\seqGdRecursion[1])=((\constantOne),(\constantOne))$, $(\seqGcRecursion[2],\seqGdRecursion[2])=(\seqGa,\seqGb)$, $(\seqGcRecursion[3],\seqGdRecursion[3])=(\upsampleOp[{\numberOfNulls+\RBsize}][\seqGc],\upsampleOp[{\numberOfNulls+\RBsize}][\seqGd])$, $\varUpsample=3$, $\separationGolay[{\permutationMono[{\indexIteration}]=1}]=(\numberOfNulls+\RBsize)\numberOfClusters/2-4\varUpsample$, and $\separationGolay[{\permutationMono[{\indexIteration}]\neq1}]=0$
\item Configuration~3: $(\seqGcRecursion[1],\seqGdRecursion[1])=((\constantOne),(\constantOne))$, $(\seqGcRecursion[2],\seqGdRecursion[2])=(\upsampleOp[{4}][\seqGa],\upsampleOp[{4}][\seqGb])$, $(\seqGcRecursion[3],\seqGdRecursion[3])=(\upsampleOp[{2(\numberOfNulls+\RBsize)}][\seqGc],\upsampleOp[{2(\numberOfNulls+\RBsize)}][\seqGd])$, $\varUpsample=1$, $\separationGolay[{\permutationMono[{\indexIteration}]=1}]=\numberOfNulls+\RBsize-4\varUpsample$, and $\separationGolay[{\permutationMono[{\indexIteration}]\neq1}]=0$
\item Configuration~4: $(\seqGcRecursion[1],\seqGdRecursion[1])=((\constantOne),(\constantOne))$, $(\seqGcRecursion[2],\seqGdRecursion[2])=(\upsampleOp[{4}][\seqGa],\upsampleOp[{4}][\seqGb])$, $(\seqGcRecursion[3],\seqGdRecursion[3])=(\upsampleOp[{\numberOfNulls+\RBsize}][\seqGc],\upsampleOp[{\numberOfNulls+\RBsize}][\seqGd])$, $\varUpsample=1$, $\separationGolay[{\permutationMono[{\indexIteration}]=1}]=(\numberOfNulls+\RBsize)\numberOfClusters/2-4\varUpsample$, and $\separationGolay[{\permutationMono[{\indexIteration}]\neq1}]=0$
\end{itemize}

While Configuration 1 and Configuration 2 cascade sequences $\seqGa$, $\seqGb$, $\tilde{\seqGa}$, and $\tilde{\seqGb}$ in each \ac{PRB}, Configuration 3 and Configuration 4 interleave the elements of these sequences. The difference between Configuration 1 and Configuration 2 is that they shuffle the sequences in \acp{PRB} in the interlace in a different order because of the choices of up-sampling factors and $\separationGolay[{\permutationMono[{\indexIteration}]=1}]$. Similarly, Configuration 3 and Configuration 4 yield different orders in the interlace. For these configurations, $\seqGa$, $\seqGb$, $\tilde{\seqGa}$, and $\tilde{\seqGb}$ are multiplied the elements of $\seqGc$, $\seqGd$, $\tilde{\seqGc}$, and $\tilde{\seqGd}$ and $2^3$ \ac{QPSK} symbols based on $\seqPermutationCompShift$ and $\seqPermutationOrder$ for $\numberOfPointsForPSK=4$. {\color{\reviewColor} The distribution of the \ac{QPSK} symbols to \acp{PRB} for different configurations are illustrated in \figurename~\ref{fig:configurations}.}
For all configurations, $\numberOfNulls$ and $\numberOfClusters$ can chosen flexibly without  concern of increasing the \ac{PAPR}.

\begin{figure}[t]
	\centering
	\subfloat[\color{\reviewColor}Configuration 1.]{\includegraphics[width =3.4in]{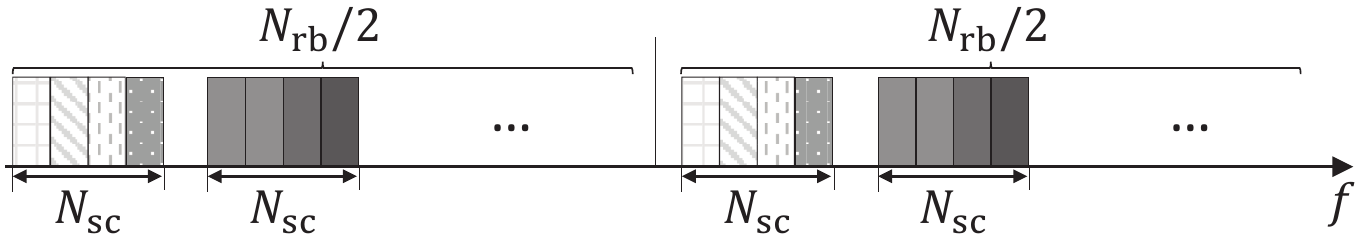}\label{subfig:c1}}\\
	\subfloat[\color{\reviewColor}Configuration 2.]{\includegraphics[width =3.4in]{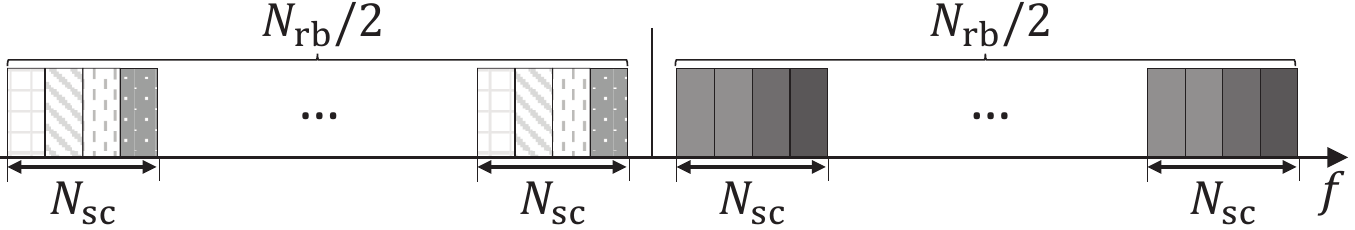}\label{subfig:c2}}\\	
	\subfloat[\color{\reviewColor}Configuration 3.]{\includegraphics[width =3.4in]{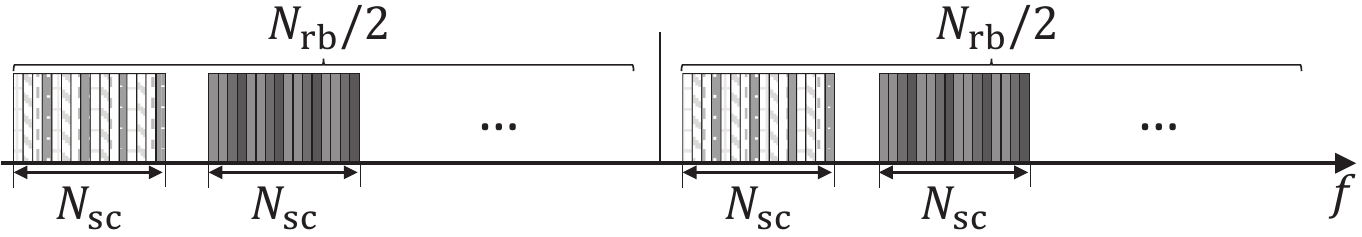}\label{subfig:c3}}\\
\subfloat[\color{\reviewColor}Configuration 4.]{\includegraphics[width =3.4in]{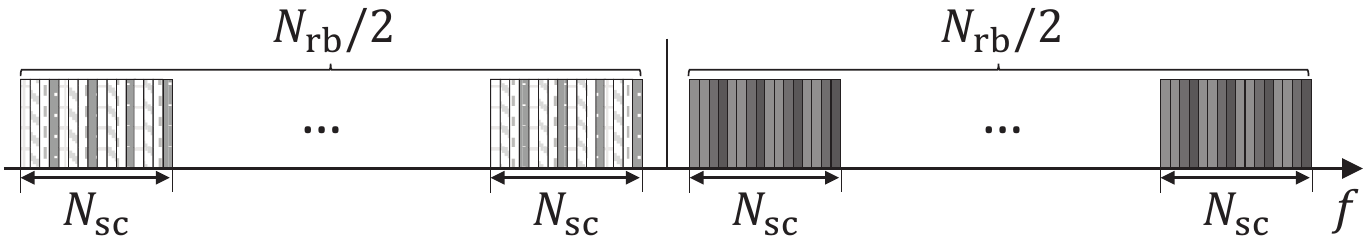}\label{subfig:c4}}	
	\caption{{\color{\reviewColor}Configurations. Each color tone represents one of the $2^3$ \ac{QPSK} symbols distributed to \acp{PRB} through $\seqGa$, $\seqGb$, $\tilde{\seqGa}$, and $\tilde{\seqGb}$ of length $3$.}}
	\label{fig:configurations}
\end{figure}

{\color{\reviewColor}
Each configuration   leads to $(3!)^2\numberOfPointsForPSK^{\numberOfIterations+1}$ \acp{CS} compatible with the interlace structure in \ac{NR-U} since the number of seed \acp{CS} that are not co-linear $\numberSequencesNotColinear$ is  $3$ and $\numberOfIterations=3$. For $\numberOfPointsForPSK=4$, it gives $9216$ \acp{CS} with a \ac{QPSK} alphabet. }
By including the distinct combinations of $(\seqGa,\seqGb)$ and $(\seqGc,\seqGd)$, i.e., interchanging the sequences in a \ac{GCP}, e.g.,  $(\seqGb,\seqGa)$, or conjugate reflecting one of the sequences in a \ac{GCP}, e.g., $(\tilde{\seqGc},\seqGd)$, the number of distinct \acp{CS} increases by a factor 64.
Therefore, $\lfloor\log_2 64\times9216 \rfloor =19$ information bits can be transmitted for each configuration. As a result, overall, there exist at least $4\times64\times9216=2359296$ \acp{CS}, which can carry a maximum of $21$ \ac{UCI} bits.



\subsubsection{User multiplexing and Transmitter}
\label{subsec:morebitsAndusermultiplexing}
To allow user multiplexing in the interlace while enabling a low-complexity receiver, we keep the orthogonality of the sequences from different users in each \ac{PRB}. To meet this condition, we exploit the property of the unimodular sequences and consider only one of the configurations, e.g., Configuration 1. We obtain three orthogonal sequences by modulating unimodular $\seqGa$ and $\seqGb$ as $\seqGa \odot \seqGsShift[{\shift[]}]$ and $\seqGb \odot \seqGsShift[{\shift[]}]$, respectively, where $\seqGsShift[{\shift[]}] = {(\constante^{\shift[]\frac{2\pi{\constanti}}{3}\times0}, \constante^{\shift[]\frac{2\pi{\constanti}}{3}\times1}, \constante^{\frac{\shift[]2\pi{\constanti}}{3}\times2} )}$ and $\shift[]\in\{0,1,2\}$.
In addition, we fix the locations of $\seqGa \odot \seqGsShift[{\shift[]}]$ and $\seqGb \odot \seqGsShift[{\shift[]}]$ on each \ac{PRB} by setting $\permutationOrderEle[\numberOfIterations=3]=2$ and $\permutationMono[{\numberOfIterations=3}]=1$, which result in  $\seqPermutationCompShift\in\{(3,2,1),(2,3,1)\}$ and  $\seqPermutationOrder\in\{(3,1,2),(1,3,2)\}$. The rationale behind this choice can be understood by expressing $\funcfForCommonOrder(\seqx,\polyVariable)$ for $\permutationOrderEle[\numberOfIterations=3]=2$ and $\permutationMono[{\numberOfIterations=3}]=1$ as
\begin{align}
\funcfForCommonOrder(\seqx,\polyVariable) &=  	\funcForCommonOrder(\seqx,\polyVariable) (\polySeq[{\seqGa \odot \seqGs}][\polyVariable](1 - \monomial[{1}])_2 + 
\polySeq[{\seqGb \odot \seqGs}][\polyVariable]\monomial[{1}])~. \label{eq:userMultiplex}
\end{align}
While $\funcForCommonOrder(\seqx,\polyVariable)$ takes different values depending on $\seqGc$, $\seqGd$, and the first two elements of $\seqPermutationCompShift$ and $\seqPermutationOrder$, the remaining term in \eqref{eq:userMultiplex} places $\seqGa \odot \seqGs$ and $\seqGb \odot \seqGsShift[{\shift[]}]$ in a fixed order. Therefore, the sequences $\seqGa \odot \seqGsShift[{\shift[]}]$ and $\seqGb \odot \seqGsShift[{\shift[]}]$  are multiplied with the elements of $\seqGc$, $\seqGd$,  $\tilde{\seqGc}$, $\tilde{\seqGd}$ and the outcome of $\funcfForCommonPhaseA(\seqx)$. 
The proposed scheme enables three users to transmit  $\log_2({2^2\numberOfPointsForPSK^{\numberOfIterations+1}}) = 10$ bits on the same interlace for a given \ac{GCP} $(\seqGc,\seqGd)$. 
The number of bits can be increased if the seed sequences $\seqGc$ and $\seqGd$ are modified. For example, if the sequences in $(\seqGc,\seqGd)$ are interchanged, the number of bits can be increased to 11 bits. Note that modifying $\seqGc$ and $\seqGd$ does not destroy the orthogonality between the sequences for different users as  $\seqGa \odot \seqGsShift[{\shift[]}]$ and $\seqGb \odot \seqGsShift[{\shift[]}]$  are multiplied with scalars depending on the elements of $\seqGc$, $\seqGd$,  $\tilde{\seqGc}$, and $\tilde{\seqGd}$. 

\begin{figure*}[t]
	\centering
	{\includegraphics[width =6in]{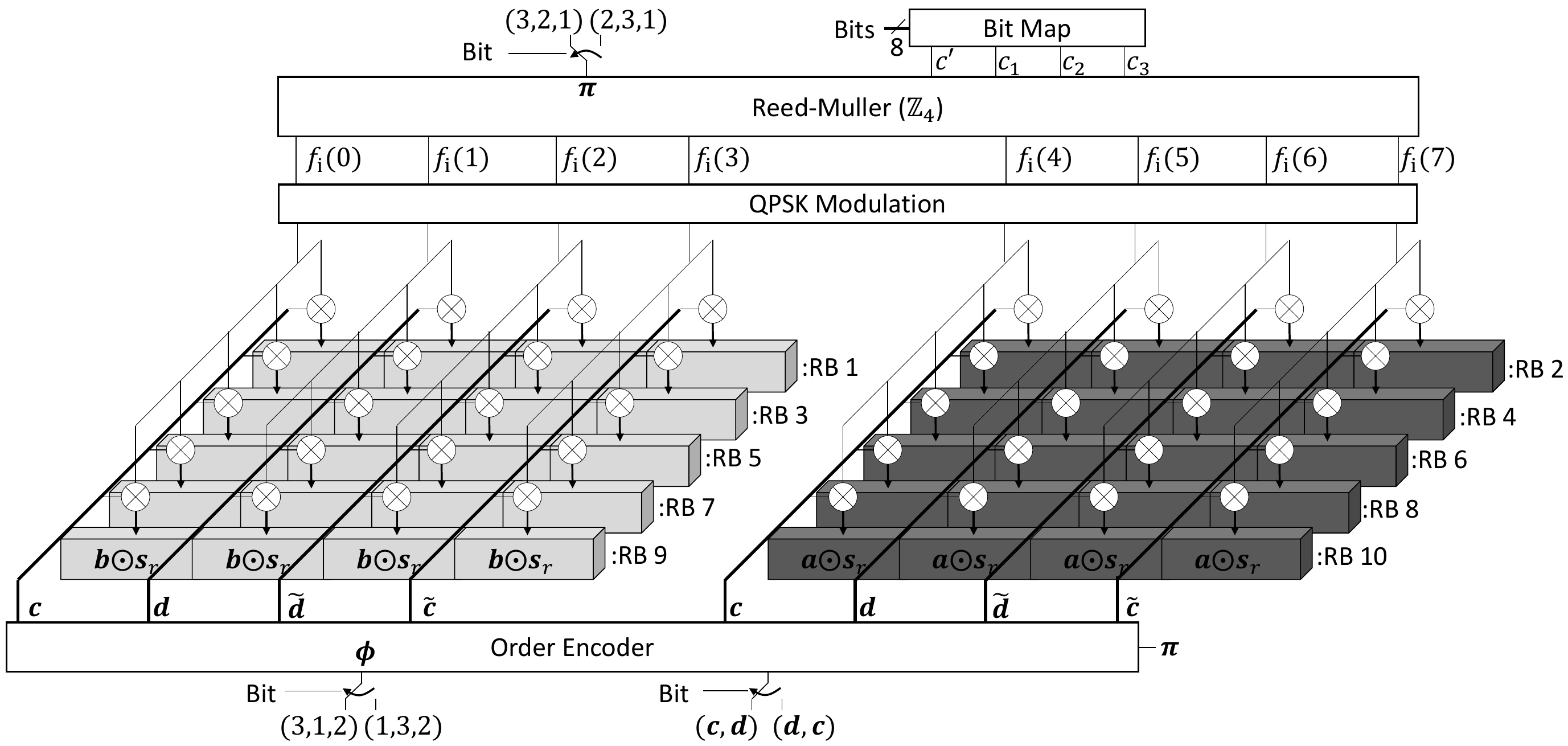}
	}
	\caption{Transmitter for more than 2 \ac{UCI} bits based on Configuration~1. The phases of $\seqGa$ and $\seqGb$ are modified based on the order encoder and the \ac{RM} code.}
	\label{fig:txCoherenta}
\end{figure*}

In \figurename~\ref{fig:txCoherenta}, we illustrate how $\seqGa$ and $\seqGb$ are placed in each \ac{PRB} and modified based on  $\funcfForCommonPhaseA(\seqx)$ and $\funcfForCommonOrder(\seqx,\polyVariable)$ for $\numberOfClusters=10$ with the proposed scheme. Interchanging $\seqGa$ and $\seqGb$, the values of $\seqPermutationCompShift$ and $\seqPermutationOrder$ are controlled with 3 \ac{UCI} bits. The output of the order encoder is exemplified for $\seqPermutationCompShift=(3,2,1)$ and $\seqPermutationOrder=(3,1,2)$, which gives $(\seqGc,\seqGd,\tilde{\seqGd},\tilde{\seqGc},\seqGc,\seqGd,\tilde{\seqGd},\tilde{\seqGc})$. The parameters $\angleexp[\indexIteration=1,2,3],\arbitraryPhaseA \in\integers_4$ are set based on 8 \ac{UCI} bits. The bit mapping is done based on a Gray mapping, e.g.,  $00\rightarrow0$, $01\rightarrow1$, $10\rightarrow3$, and $11\rightarrow2$, to improve the error rate performance. Since the output of the \ac{RM} code over $\integers_4$ is distributed to 5 different \acp{PRB} as in \figurename~\ref{fig:txCoherenta}, the proposed scheme inherently exploits the frequency diversity in the frequency selective channels.

\subsubsection{Receiver Design}
\label{subsec:receiverCoherent}
The proposed scheme is compatible with the \ac{RM} code in \cite{davis_1999}. Thus, a simple receiver can be developed by re-using the \ac{ML} decoder proposed in \cite{Schmidt2005} for first-order \ac{RM} codes. At the receiver, we first separate the users by applying the matched filters for $\seqGa \odot \seqGsShift[{\shift[]}]$ and $\seqGb \odot \seqGsShift[{\shift[]}]$ for different $\shift[]$, which lead to 4 complex values for each \ac{PRB} and user. We then coherently combine $\numberOfClusters/2$ complex values distributed to $\numberOfClusters/2$ different \acp{PRB} based on {\color{\reviewColor}\ac{MRC}}. Subsequently, we use the \ac{ML} decoder in \cite{Schmidt2005} to obtain $\angleexp[\indexIteration=1,2,3]$ and $\arbitraryPhaseA$. We perform this operation for 8 different hypotheses due to the  combinations of interchange of $\seqGa$ and $\seqGb$, $\seqPermutationCompShift$, and $\seqPermutationOrder$. We choose the best hypothesis based on  \ac{ML}. 

{\color{\reviewColor}
The receiver for this scheme requires the estimate of the channel between the receiver and each user. The channel estimation for multiple users can be achieved by using a dedicated \ac{OFDM} symbol constructed with the proposed scheme with a set of fixed parameters. Since $\{\seqGa \odot \seqGsShift[{0}],\seqGa \odot \seqGsShift[{1}],\seqGa \odot \seqGsShift[{2}]\}$ and $\{\seqGb \odot \seqGsShift[{0}],\seqGb \odot \seqGsShift[{1}],\seqGb \odot \seqGsShift[{2}]\}$ are orthogonal sets, the receiver can estimate the channel for each user with a set of  matched filters in the frequency domain for fixed $\seqPermutationCompShift$, $\seqPermutationOrder$, $\seqGc$, $\seqGd$, $\angleexp[\indexIteration]$ and $\arbitraryPhaseA$.
}

Our receiver introduces $\numberOfClusters\RBsize+8\times\numberOfClusters\RBsize/3$ complex multiplications and $2/3\numberOfClusters\RBsize+8\times8(\numberOfClusters/2-1)$ complex summations for the user separation  and the hypothesis testing in addition to the complexity of the \ac{ML} decoder which is low for $\numberOfIterations=3$ as reported in \cite{Schmidt2005}.

\section{Numerical Analysis}\label{sec:numerical}
In this section, we evaluate the proposed modulation schemes numerically. We consider the interlace parameters in  \ac{NR-U}  for $15$~kHz. For the first scheme, we employ the sequences given in Table~\ref{table:seqC} and the spreading sequences  $\seqGa = (\constantOne,\constantOne,\constantOne,\constantMinusi,\constanti)$ and $\seqGb = (\constantOne,\constanti,\constantMinusOne,\constantOne,\constantMinusi)$. 
For comparison, we consider the sequences adopted in  \cite{nr_phy_2020} for \ac{PUCCH} Format 0 and 1 and introduce three \ac{PAPR} minimization techniques for interlaced transmission. The first two methods rely on the optimal phase rotation (i.e., \ac{PTS}) with the \ac{QPSK} alphabet for each \ac{PRB} for a given sequence, which prioritize either \ac{CM} or \ac{PAPR}.
The third approach is {\em cyclic-shift hopping} adopted in \ac{NR} for interlaced transmission with \ac{PUCCH} Format 0 \cite{nr_phy_2020}. The sequence on $\indexRB$th occupied \ac{PRB} in the interlace is multiplied with the sequence
$ {(\constante^{\indexRB\frac{2\pi{\constanti}}{\RBsize}\times0}, \constante^{\indexRB\frac{2\pi{\constanti}}{\RBsize}\times1}, \dots, \constante^{\indexRB\frac{2\pi{\constanti}}{\RBsize}\times(\RBsize-1)})}$ 
for $\indexRB=0,1,\mydots,9$. For the fourth design, we generate all possible \ac{ZC} sequences of length 113 (cyclically padded to 120) and select the best $30$ sequences based on the \ac{PAPR} of the corresponding signals after they are mapped to the interlace. For all schemes, we assume that 6 users transmit orthogonal sequences to indicate \ac{ACK}  or \ac{NACK} on the same interlace as discussed in Section~\ref{subsec:formatzeroTX}.
  
  For a larger \ac{UCI} payload, we consider the modulation scheme introduced in Section~\ref{subsec:morebitsAndusermultiplexing} and set $\seqGa=(\constantOne,\constanti,\constantOne)$, $\seqGb=(\constantOne,\constantOne,\constantMinusOne)$, $\seqGc = (\constantOne,\constantOne,\constantOne,\constantMinusi,\constanti)$, and $\seqGd = (\constantOne,\constanti,\constantMinusOne,\constantOne,\constantMinusi)$. {\color{\reviewColor} We compare the proposed scheme with two other approaches. The first approach uses  \ac{OCC}  on each \ac{PRB} \cite{ericssonNRUproposal}. For this scheme, we consider 10 \ac{QPSK} symbols for each user. To reduce the \ac{PAPR}, each \ac{QPSK} symbol is multiplied with a distinct column of a \ac{DFT} matrix of size 12 (i.e., the \ac{OCC})  and the resulting vectors are mapped to the \acp{PRB} of interlace. For the second approach, we consider the approach used in Format 3 in \ac{NR} for interlaced transmission, i.e., pre-DFT OCC. We first generate 30 $\pi/2$-\ac{BPSK} symbols for each user and expand it with  an \ac{OCC} sequence of length 4. After we calculate the \ac{DFT} of the spread sequence, the output is mapped to the interlace \cite{qualcommNRUproposal}.  For the sake of fair comparison, we consider the same spectral efficiency for all schemes and transmit 11 \ac{UCI} bits per user.  For the competing schemes, we use the (32,11) linear block code with the rate matching in \cite{nr_coding_2020}.  At the receiver side, we assume 2 antennas and the received signals are combined with \ac{MRC} and processed with \ac{MMSE} equalizer and \ac{ML} decoder. The receiver for the proposed scheme is given in Section~\ref{subsec:receiverCoherent}. }

\subsection{PAPR/CM Distribution}
In \figurename~\ref{fig:papr}, the \ac{PAPR} distributions for all aforementioned approaches are provided. For \ac{ACK}/\ac{NACK} indication,  the optimal phase rotations prioritizing  \ac{PAPR} and \ac{CM} for \ac{NR} sequences result in a maximum \ac{PAPR} of $5.3$ dB and $5.7$ dB, respectively, while the \ac{ZC} sequences limit the \ac{PAPR} to $6$ dB. The cyclic-shift hopping also reduces the maximum \ac{PAPR} to $6$ dB. The \ac{PAPR} for the schemes in \cite{ericssonNRUproposal} and \cite{qualcommNRUproposal} for 11 \ac{UCI} bits reach to 8.1~dB and 7.3~dB, respectively. The proposed schemes offer limit the \ac{PAPR} to 3 dB as they exploit \acp{CS}. The  \ac{PAPR} gains with the proposed schemes for up-to 2 \ac{UCI} bits and 11 \ac{UCI} bits are in the range of $2.7$-$3$ dB and $4.3$-$5.1$~dB, respectively.
\begin{figure}[t]
	\centering
	{\includegraphics[width =3.4in]{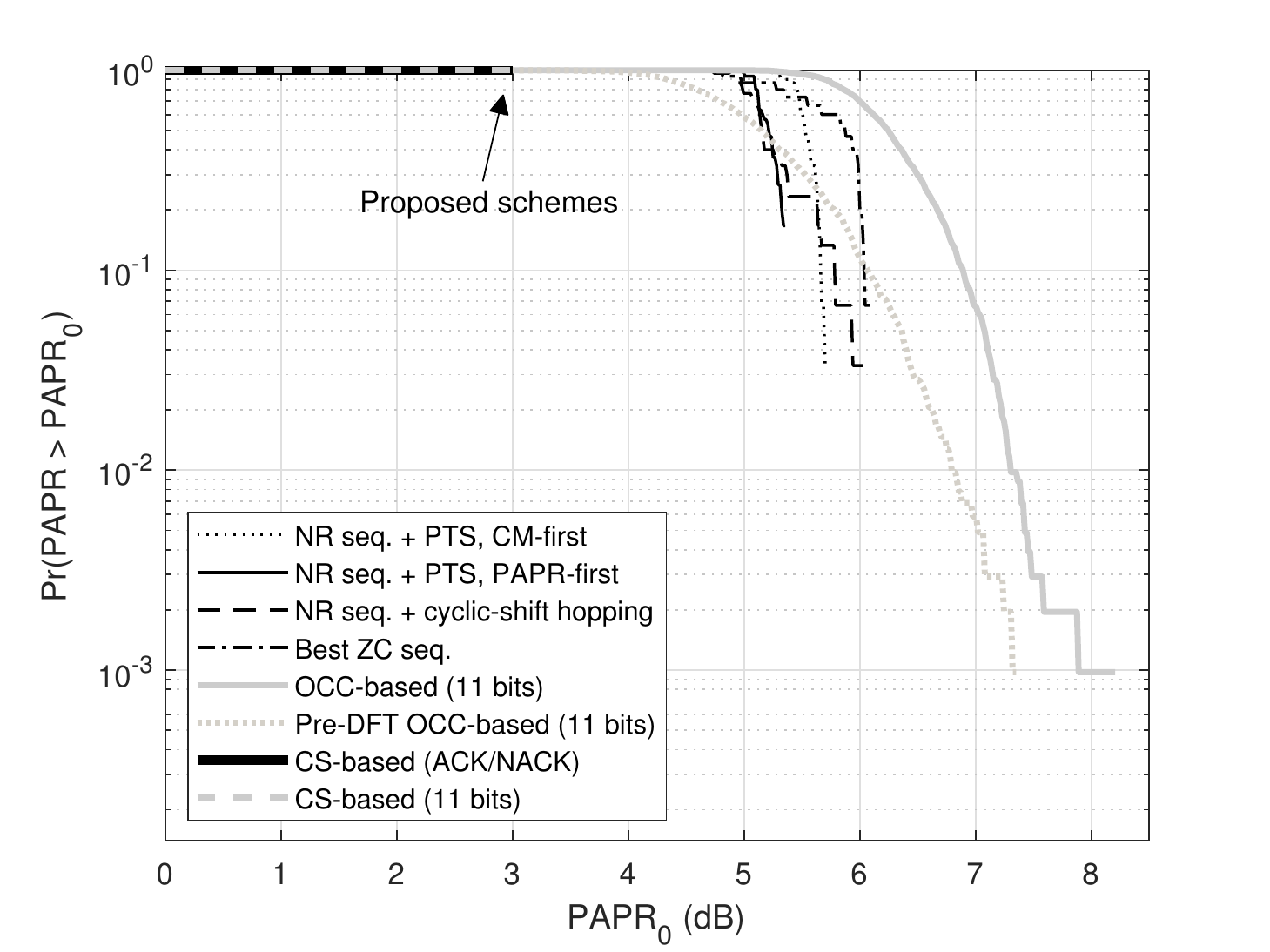}
	\vspace{-10pt}
	}
	\caption{PAPR distribution.}
	\label{fig:papr}
\end{figure} 

Another metric that characterizes the fluctuation of the resulting signal is the \ac{CM}. We calculate the \ac{CM} in dB  as
$\cubicMetric = {20 \log_{10}(\operatorRMS[{\vNormCubic}])} / {1.56}$,
where $\vNorm$ is the synthesized signal in time with the power of $1$ \cite{eval_NR}. In \figurename~\ref{fig:cm}, we compare the \ac{CM} distributions for the aforementioned schemes. Similar to the \ac{PAPR} results, the proposed schemes  improve the \ac{CM} within the range of $0.8$-$1.7$~dB over the schemes considered in this study.
\begin{figure}[t]
	\centering
	{\includegraphics[width =3.4in]{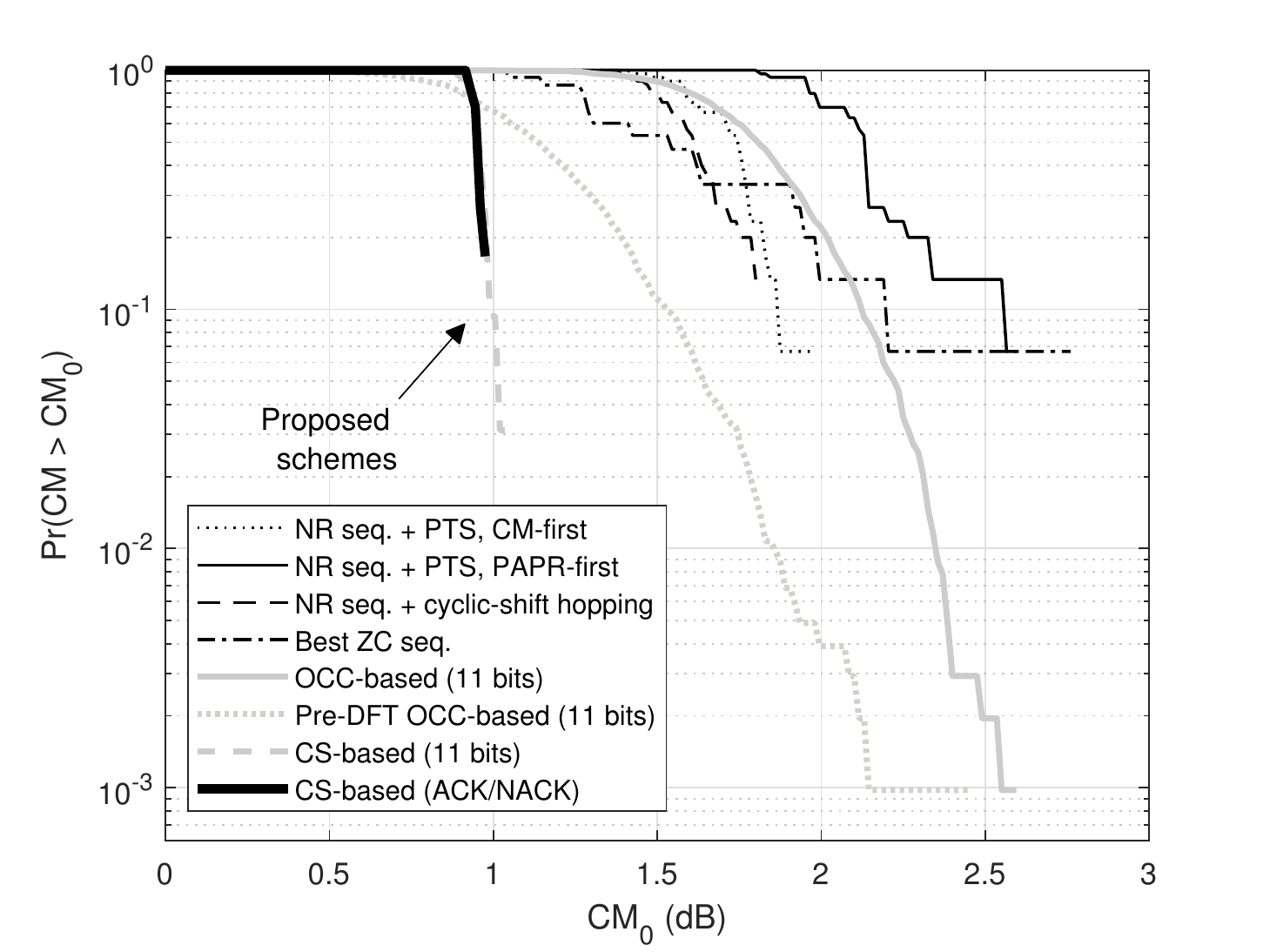}
	}
	\vspace{-10pt}
	\caption{CM distribution.}
	\label{fig:cm}
\end{figure}

\subsection{Peak Cross-correlation Distribution}
We evaluate the peak cross-correlation distribution of \acp{CS} designed for the first scheme by calculating
$    \peakCrossCorrelatation[] = {\operationMAX[|{\operationIDFT[\seqGx_\indexSequence\odot\seqGx_j^*][\dftSize]}|]} / {\RBsize},
$
where $\seqGx_\indexSequence$ is the $\indexSequence$th sequence in the set, $\indexSequence\neq j$ and $\operationIDFT[\cdot][\dftSize]$ is the \ac{DFT} operation of size $\dftSize$ \cite{eval_NR}. To achieve a large oversampling in time, we choose $\dftSize=4096$. In \figurename~\ref{fig:correlation}, we provide the distribution of $\peakCrossCorrelatation[]$ for different schemes. The \ac{ZC} sequences fail as the maximum peak cross-correlation reaches up to $0.95$ although 50 percentile performance is better than the other methods. The set of \ac{NR} sequences results in a maximum of $0.8$. On the other hand,
they are $0.715$  for the both sets $\setC$ and $\setD$  as we set $\thershold=0.715$. Hence, the proposed set is superior to the sequences adopted in \ac{NR} in terms of the maximum peak cross-correlation.
\begin{figure}[t]
	\centering
	{\includegraphics[width =3.4in]{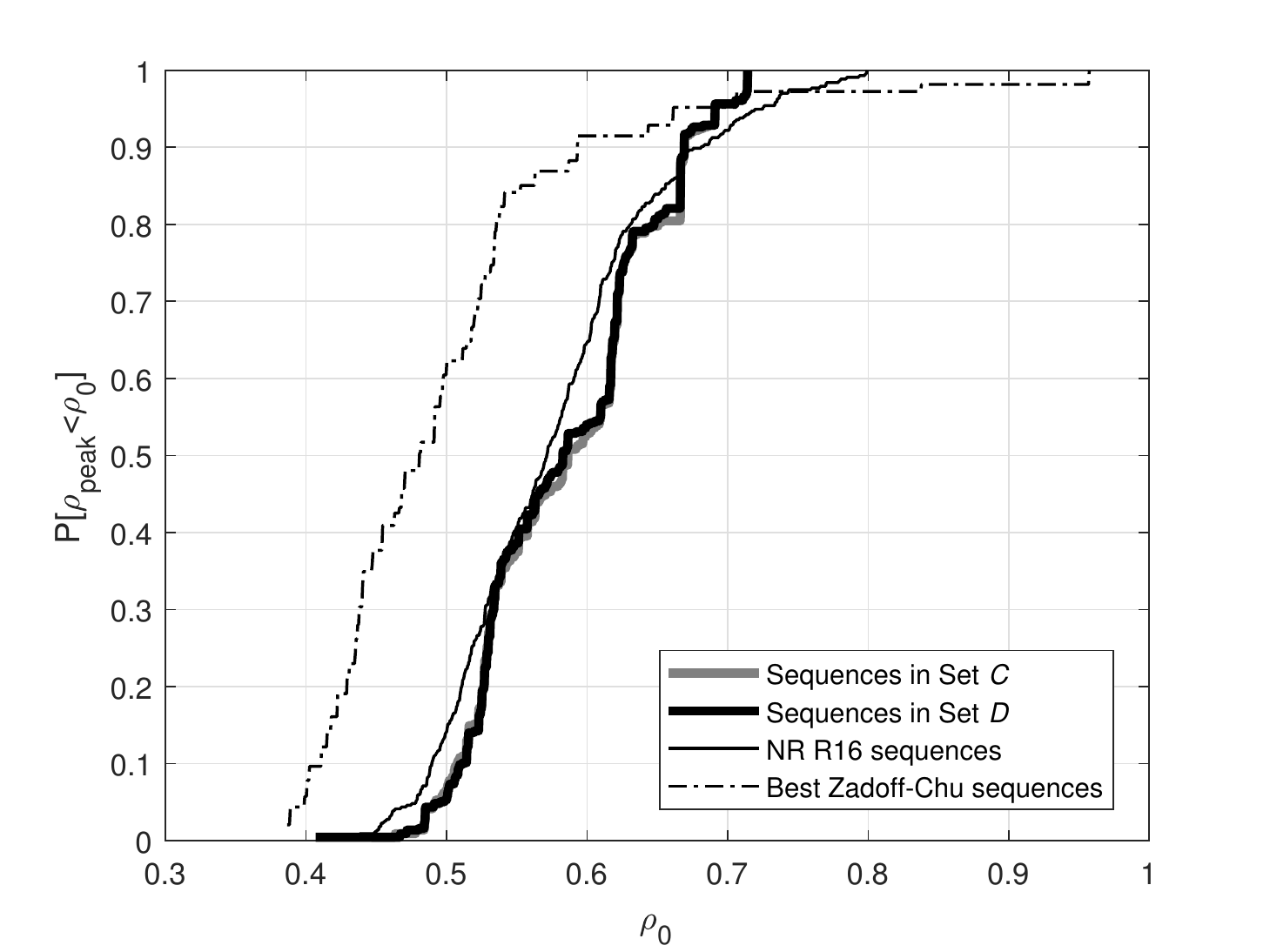}
	}
	\vspace{-10pt}
	\caption{Peak cross-correlation distribution.}
	\label{fig:correlation}
\end{figure}

\subsection{False Alarm and Miss-Detection Performance}
In this analysis, we demonstrate the impact of interlacing on the \ac{ACK}-to-\ac{NACK} rate and the \ac{ACK} miss-detection rate for a given \ac{DTX}-to-\ac{ACK} probability. The \ac{DTX}-to-\ac{ACK} and \ac{NACK}-to-\ac{ACK} rates correspond to the probability of  \ac{ACK} detection when there is no signal or a \ac{NACK} is being transmitted, respectively. The \ac{ACK} miss-detection rate is the probability of not detecting \ac{ACK} when \ac{ACK} is actually being transmitted. For the single-\ac{PRB} approach, we consider \ac{NR} \ac{PUCCH} Format~0 with interlaced transmission. 
To show the limits, we consider two extreme channel conditions where the occupied \acp{PRB} in an interlace experience the same  fading coefficients, i.e., flat fading, or \ac{i.i.d.} Rayleigh fading  to model selective fading. In practice, there is always correlation between channel coefficients. However, the correlation can decrease significantly for a large spacing between the occupied \acp{PRB} in an interlace. 

In the simulation, we set \ac{DTX}-to-\ac{ACK} probability to be $1\%$ at the detector based on Neyman-Pearson criterion and consider  \ac{MRC} of signals from 2 receive antennas. We provide curves based on \ac{SNR} per subcarrier as it reveals the benefit of interlace under the \ac{PSD} requirement in the unlicensed band as compared to single \ac{PRB} transmission. The results in \figurename~\ref{fig:errorSequencebased} show that the interlaced transmission improves the performance as compared to the single-RB approach due to the increased signal power under the \ac{PSD} requirement. When the channel is frequency-selective, the slopes of the \ac{NACK}-to-\ac{ACK} and \ac{ACK} miss-detection rates are much larger as the non-contiguous resource allocation exploits the  diversity due the frequency selectivity. Note that all the schemes for ACK/NACK indication provide the same error-rate performance. {\color{\reviewColor}However, the proposed scheme achieves it with low-\ac{PAPR} and \ac{CM}, which potentially increases the reliability in unlicensed channels for cell-edge users.}
\begin{figure}[t]
	\centering
	{\includegraphics[width =3.4in]{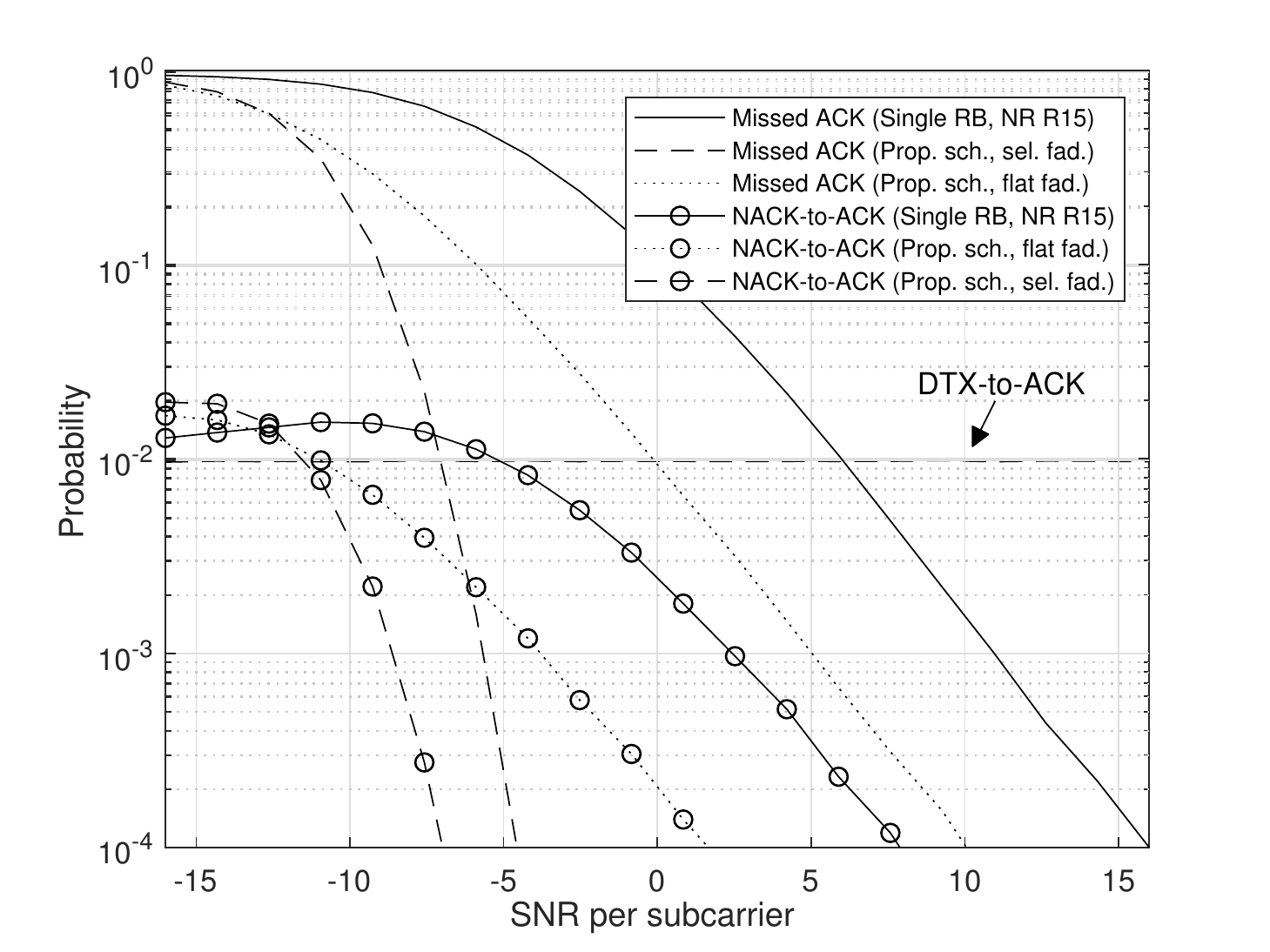}
	}
	\vspace{-10pt}
	\caption{The receiver performance for ACK/NACK transmission.}
	\label{fig:errorSequencebased}
\end{figure} 

{\color{\reviewColor}
\subsection{Error-rate Comparison}
In \figurename~\ref{fig:blerber}\subref{subfig:ber} and \figurename~\ref{fig:blerber}\subref{subfig:bler}, we compare the \ac{BER} and \ac{BLER} performance  of the second  proposed modulation scheme  and the methods in  \cite{qualcommNRUproposal} and \cite{ericssonNRUproposal} for moderate \ac{UCI} payload. While the minimum Euclidean distance between sequences are 11.3137 and 9.798 for the pre-DFT OCC-based and the OCC-based approaches, respectively, it is 10.9545 for the proposed scheme. Hence, for \ac{AWGN} channel, the pre-DFT \ac{OCC}-based scheme \cite{qualcommNRUproposal} offers approximately 0.3~dB and 1~dB gain as compared to proposed scheme and \ac{OCC}-based scheme, respectively. 
For flat-fading, the difference between the schemes is negligible at 0.01 \ac{BLER}.
On the other hand,  for selective channels,  the proposed approach is $2$~dB better than the \ac{OCC}-based scheme and similar to the one in \cite{qualcommNRUproposal}, respectively. The performance difference between the proposed method and \ac{OCC}-based scheme is because the receiver for the proposed scheme coherently combine the symbols on different \acp{PRB} with \ac{MRC}. The receiver for \ac{OCC}-based scheme cannot exploit the frequency selectivity as the data symbols are not spread to different \acp{PRB}. However, the pre-DFT \ac{OCC} spreads the information to different \acp{PRB} through \ac{DFT} operation and harnesses the selectivity better. However, it does not maintain the flatness in the frequency and \ac{FDE} slightly deteriorates its performance. 
}


\begin{figure}[t]
	\centering
	\subfloat[{\color{\reviewColor}{\ac{BER}}.}]{\includegraphics[width =3.4in]{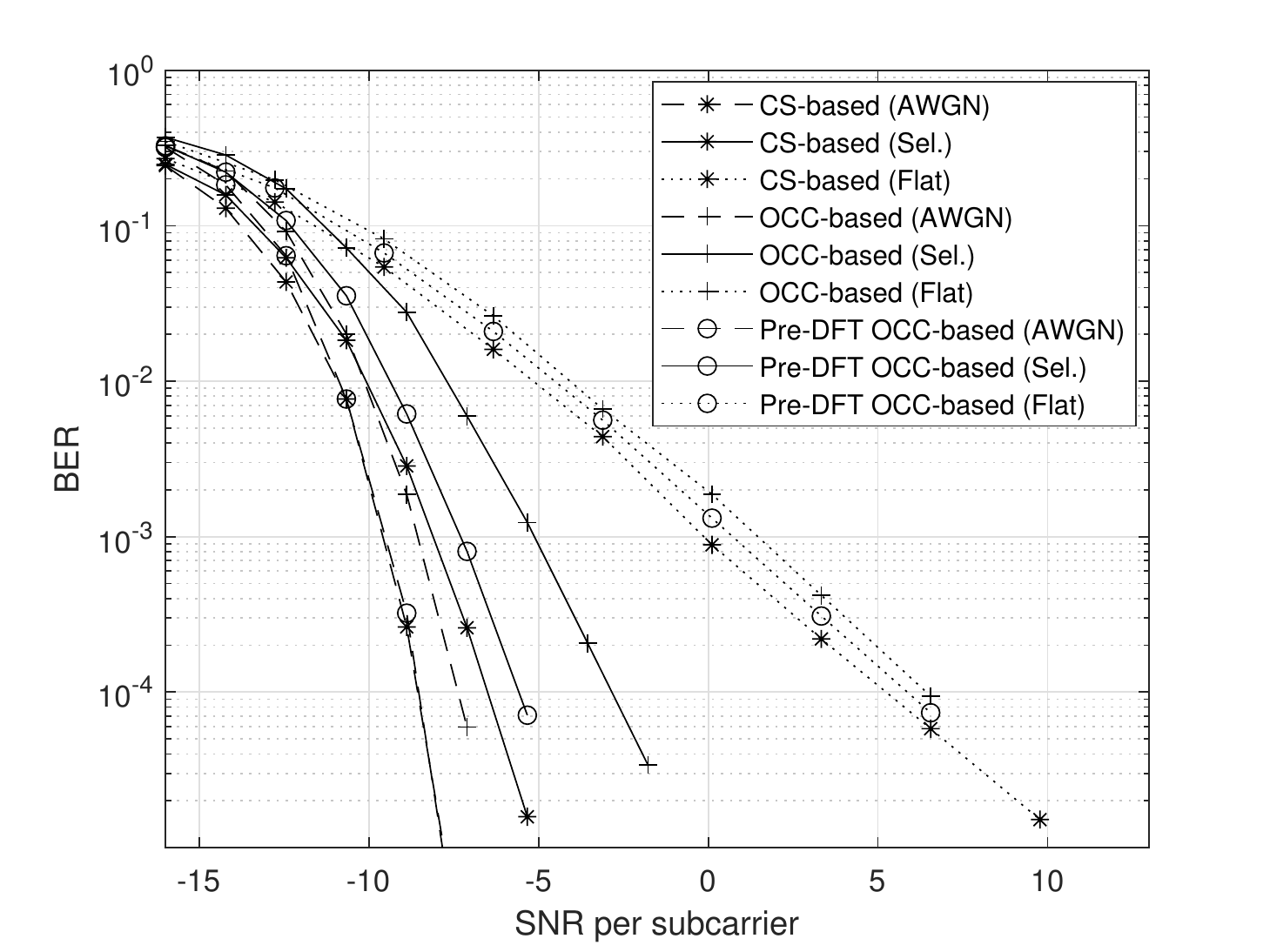}
		\label{subfig:ber}}
		\if\IEEEsubmission0
	  	\\
		\fi
	\subfloat[\ac{BLER}.]{\includegraphics[width =3.4in]{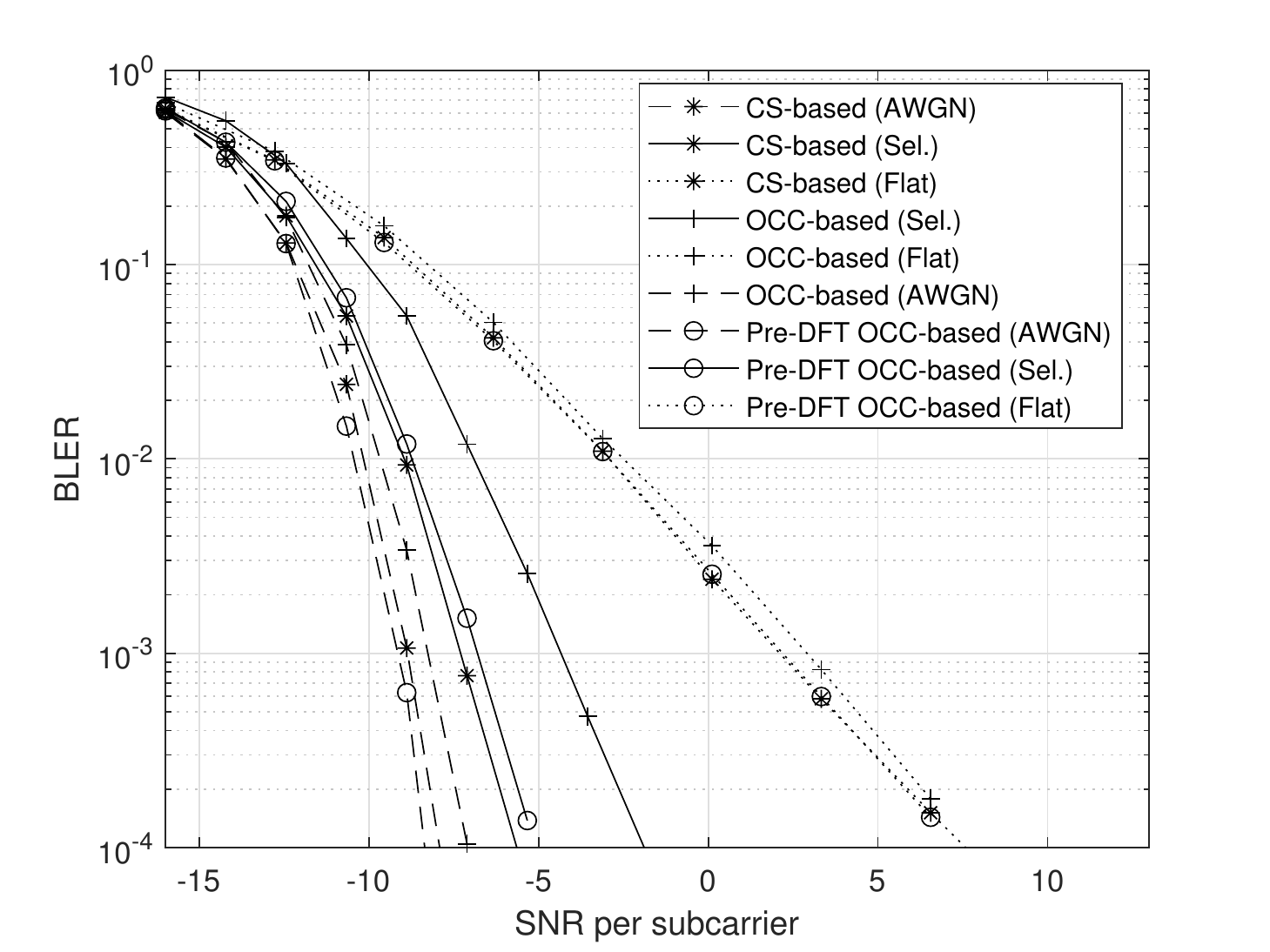}
		\label{subfig:bler}}
	\caption{{\color{\reviewColor}Error rate comparison of the proposed \ac{CS}-based scheme, \ac{OCC}-based scheme  \cite{ericssonNRUproposal}, and pre-\ac{DFT} \ac{OCC}-based schemes \cite{qualcommNRUproposal} for 11 \ac{UCI} bits.}}
	\label{fig:blerber}
\end{figure}

\section{Concluding Remarks}\label{sec:conclusion}
In this study, we propose two modulation schemes for \ac{UL} control channels  which consist of non-contiguous \acp{PRB} in the frequency by exploiting \acp{GCP} and introduce Theorem~\ref{th:golayIterative} and Theorem~\ref{th:reduced}. 
The main benefit of the proposed approaches is that they address the \ac{PAPR} problem of \ac{OFDM} signals while allowing a flexible non-contiguous resource allocation. For example, the number of null symbols between the \acp{PRB}  or the number of \acp{PRB} in an interlace can be chosen flexibly with minor modifications in both proposed schemes. In all cases, the \ac{PAPR} of the corresponding signal is less than or equal to $3$~dB. With comprehensive numerical analysis, we show that the \ac{PAPR} gains are in the range of $2.7$-$3$ dB and $4.3$-$5.1$~dB for the first scheme and the second scheme, respectively,  as compared other schemes considered in this study.

The first modulation scheme is similar to the \ac{NR} \ac{PUCCH} Format 0 for 1 or 2 \ac{UCI} bits. It separates the \ac{PAPR}  and  \ac{CCI} minimization problems by utilizing the properties of \acp{CS}. While the first challenge is solved by choosing the sequences for \acp{PRB} as a \ac{GCP} in the light of Theorem~\ref{th:golayIterative}, the second problem is addressed by design a set of \acp{GCP} with low peak cross-correlation with a search algorithm. {\color{\reviewColor}While our algorithm generate a set of \acp{GCP} better than the sequences in \ac{NR} in terms of maximum peak cross-correlation, a systematic solution for a \ac{GCP} set with low peak cross-correlation  is still an open problem.
For the second scheme, we develop a new theorem, i.e., Theorem~\ref{th:reduced}, which is capable of generating a wide-variety of \acp{CS} though multiple seed sequences. It can generate up to $\numberSequencesDifferentThanLEngthOne!\frac{(\numberOfIterations!)^2}{(\numberOfIterations-\numberSequencesNotColinear+1)!}\numberOfPointsForPSK^{\numberOfIterations+1}$ distinct \acp{CS}, which is a function of the  multiplicity and lengths of the seed \acp{GCP}.
We show that this joint coding-and-modulation scheme allows 3 users to transmit 11 bits on the same interlace while providing  4.3~dB \ac{PAPR} gain and similar \ac{BLER} performance as compared to the approach used in \ac{NR} \ac{PUCCH} Format 3. Hence, the proposed schemes can be beneficial for cell-edge users and complement the existing approaches in wireless standards.}

\appendices

\section{Proof of Theorem~\ref{th:reduced}}
\label{app:GCPrecursion}
\begin{proof}
	
	By using Theorem~\ref{th:golayIterative}, a recursion which generates a \ac{GCP} $(\seqGaIt[\numberOfIterations],\seqGbIt[\numberOfIterations])$ for $\numberOfIterations\ge 1$ can be given by
	\begin{align}
	\polySeq[{\seqGaIt[\indexIteration]}][\polyVariable] =& \polySeq[{\seqGcRecursion[{\permutationOrderEle[\indexIteration]}]}][\polyVariable]  \polySeq[{\seqGaIt[\indexIteration-1]}][\polyVariable]  
	+ \angleGolay[\indexIteration]\polySeq[{\seqGdRecursion[{\permutationOrderEle[\indexIteration]}]}][\polyVariable]
	\polySeq[{\seqGbIt[\indexIteration-1]}][\polyVariable]  \polyVariable^{\separationGolay[\indexIteration]}\polyVariableIt^{2^\permutationShift[{\indexIteration}]}~, \nonumber\\
	\polySeq[{\seqGbIt[\indexIteration]}][\polyVariable] =& \polySeq[{\seqGdTildeRecursion[{\permutationOrderEle[\indexIteration]}]}][\polyVariable]\polySeq[{\seqGaIt[\indexIteration-1]}][\polyVariable]  
	-\angleGolay[\indexIteration]\polySeq[{\seqGcTildeRecursion[{\permutationOrderEle[\indexIteration]}]}][\polyVariable]\polySeq[{\seqGbIt[\indexIteration-1]}][\polyVariable]  \polyVariable^{\separationGolay[\indexIteration]}\polyVariableIt^{2^\permutationShift[{\indexIteration}]}~, 
	\label{eq:iterationGolay}
	\end{align}
	where $\seqGaIt[0]=\seqGbIt[0]=1$, $(\seqGcRecursion[{\permutationOrderEle[\indexIteration]}],\seqGdRecursion[{\permutationOrderEle[\indexIteration]}])$ is the ${\permutationOrderEle[\indexIteration]}$th \ac{GCP} of length $\lengthGcGdIterative[{\permutationOrderEle[\indexIteration]}]\in\integersPositive$,  $\angleGolay[\indexIteration],\polyVariableIt\in \{u:u\in\complexNumbers, |u|=1\}$ are arbitrary complex numbers of unit magnitude, $\separationGolay[\indexIteration]\in  \integers$  for $\indexIteration=1,2,\mydots,\numberOfIterations$, and $\permutationShift[{\indexIteration}]$ is the $\indexIteration$th element of the sequence $\seqPermutationShift=(\permutationShift[{\indexIteration}])_{i=1}^{\numberOfIterations}$ defined by the permutation of $\{0,1,\dots,\numberOfIterations-1\}$. 
	The recursion in	\eqref{eq:iterationGolay} can be re-expressed as
	\if \IEEEsubmission0
		\begin{align}
		\functionf[\indexIteration]  =& 
		\operatorOrderC[1][{\indexIteration}]
		\operatorOrderD[0][{\indexIteration}]
		\operatorOrderCtilde[0][{\indexIteration}]
		\operatorOrderDtilde[0][{\indexIteration}]
		\operatorSeparation[0][{\indexIteration}]
		\operatorAngleScaleB[0][{\indexIteration}]
		\operatorSign[0][{\indexIteration}]
		(\functionf[\indexIteration-1] )\nonumber\\
		&+ 
		\operatorOrderC[0][{\indexIteration}]
		\operatorOrderD[1][{\indexIteration}]
		\operatorOrderCtilde[0][{\indexIteration}]
		\operatorOrderDtilde[0][{\indexIteration}]
		\operatorSeparation[1][{\indexIteration}]
		\operatorAngleScaleB[1][{\indexIteration}]
		\operatorSign[0][{\indexIteration}]
		(\functiong[\indexIteration-1] )  
		\polyVariableIt^{2^\permutationShift[{\indexIteration}]}~, \nonumber
		\\
		\functiong[\indexIteration]  =& 
		\operatorOrderC[0][{\indexIteration}]
		\operatorOrderD[0][{\indexIteration}]
		\operatorOrderCtilde[0][{\indexIteration}]
		\operatorOrderDtilde[1][{\indexIteration}]
		\operatorSeparation[0][{\indexIteration}]
		\operatorAngleScaleB[0][{\indexIteration}]
		\operatorSign[0][{\indexIteration}]
		(\functionf[\indexIteration-1] )\nonumber\\
		&+ 
		\operatorOrderC[0][{\indexIteration}]
		\operatorOrderD[0][{\indexIteration}]
		\operatorOrderCtilde[1][{\indexIteration}]
		\operatorOrderDtilde[0][{\indexIteration}]
		\operatorSeparation[1][{\indexIteration}]
		\operatorAngleScaleB[1][{\indexIteration}]
		\operatorSign[1][{\indexIteration}]
		(\functiong[\indexIteration-1] )  
		\polyVariableIt^{2^\permutationShift[{\indexIteration}]}~, 
		\label{eq:iterationGolayOperators}
		\end{align}
	\else
		\begin{align}
		\functionf[\indexIteration]  =& 
		\operatorOrderC[1][{\indexIteration}]
		\operatorOrderD[0][{\indexIteration}]
		\operatorOrderCtilde[0][{\indexIteration}]
		\operatorOrderDtilde[0][{\indexIteration}]
		\operatorSeparation[0][{\indexIteration}]
		\operatorAngleScaleB[0][{\indexIteration}]
		\operatorSign[0][{\indexIteration}]
		(\functionf[\indexIteration-1] )+ 
		\operatorOrderC[0][{\indexIteration}]
		\operatorOrderD[1][{\indexIteration}]
		\operatorOrderCtilde[0][{\indexIteration}]
		\operatorOrderDtilde[0][{\indexIteration}]
		\operatorSeparation[1][{\indexIteration}]
		\operatorAngleScaleB[1][{\indexIteration}]
		\operatorSign[0][{\indexIteration}]
		(\functiong[\indexIteration-1] )  
		\polyVariableIt^{2^\permutationShift[{\indexIteration}]}~, \nonumber
		\\
		\functiong[\indexIteration]  =& 
		\operatorOrderC[0][{\indexIteration}]
		\operatorOrderD[0][{\indexIteration}]
		\operatorOrderCtilde[0][{\indexIteration}]
		\operatorOrderDtilde[1][{\indexIteration}]
		\operatorSeparation[0][{\indexIteration}]
		\operatorAngleScaleB[0][{\indexIteration}]
		\operatorSign[0][{\indexIteration}]
		(\functionf[\indexIteration-1] )+ 
		\operatorOrderC[0][{\indexIteration}]
		\operatorOrderD[0][{\indexIteration}]
		\operatorOrderCtilde[1][{\indexIteration}]
		\operatorOrderDtilde[0][{\indexIteration}]
		\operatorSeparation[1][{\indexIteration}]
		\operatorAngleScaleB[1][{\indexIteration}]
		\operatorSign[1][{\indexIteration}]
		(\functiong[\indexIteration-1] )  
		\polyVariableIt^{2^\permutationShift[{\indexIteration}]}~, 
		\label{eq:iterationGolayOperators}
		\end{align}
	\fi
	where $\functionf[0]=\functiong[0]=1$, the operators $\operatorOrderC[0][{\indexIteration}](\functionh)$, $\operatorOrderD[0][{\indexIteration}](\functionh)$, $\operatorOrderCtilde[0][{\indexIteration}](\functionh)$, $\operatorOrderDtilde[0][{\indexIteration}](\functionh)$,
	$\operatorSeparation[0][{\indexIteration}](\functionh)$
	$\operatorSign[0][{\indexIteration}](\functionh)$, $\operatorAngleScaleB[0][{\indexIteration}](\functionh) $ are equal to $\functionh$, and
	the operators
	$\operatorOrderC[1][{\indexIteration}](\functionh)$,
	$\operatorOrderD[1][{\indexIteration}](\functionh)$, 
	$\operatorOrderCtilde[1][{\indexIteration}](\functionh)$, 
	$\operatorOrderDtilde[1][{\indexIteration}](\functionh)$, 
	$\operatorSeparation[1][{\indexIteration}](\functionh)$,
	$\operatorAngleScaleB[0][{\indexIteration}](\functionh) $, 
	$\operatorSign[0][{\indexIteration}](\functionh)$ 	
	are set to  
	$\polySeq[{\seqGcRecursion[{\permutationOrderEle[\indexIteration]}]}][\polyVariable] \functionh $,
	$\polySeq[{\seqGdRecursion[{\permutationOrderEle[\indexIteration]}]}][\polyVariable] \functionh $,
	$\polySeq[{\seqGcTildeRecursion[{\permutationOrderEle[\indexIteration]}]}][\polyVariable] \functionh $,
	$\polySeq[{\seqGdTildeRecursion[{\permutationOrderEle[\indexIteration]}]}][\polyVariable] \functionh $, and
	$\polyVariable^{\separationGolay[\indexIteration]}  \functionh $,
	$\exponentialBase^{\constantj\angleexp[\indexIteration]} \functionh $,  and $\exponentialBase^ {\constantj\frac{\numberOfPointsForPSK}{2}}\functionh $,
	 respectively. 	
	 
	  By utilizing an approach that represents the outcome of a recursion concisely \cite{Sahin_2018} (summarized in Appendix~\ref{app:Algebraic} for the sake of completeness) and investigating the position of the operators in \eqref{eq:iterationGolayOperators}, we obtain 	 
	 the configuration vectors, i.e.,  $\vecArrangement[\indexIteration]^{\rm T}$  for $\indexIteration=1,2,\mydots,\numberOfIterations$, for $\operatorOrderC[0,1][{\indexIteration}](\functionh)$,
	$\operatorOrderD[0,1][{\indexIteration}](\functionh)$, 
	$\operatorOrderCtilde[0,1][{\indexIteration}](\functionh)$, and
	$\operatorOrderDtilde[0,1][{\indexIteration}](\functionh)$  as  $[1~0~0~0]$, $[0~1~0~0]$, $[0~0~0~1]$, and $[0~0~1~0]$, respectively,  . Therefore, by plugging the configuration vectors into \eqref{eq:closedformANFsF} and \eqref{eq:closedformANFsG}, the Boolean functions associated with the construction sequences (or indication sequences) for $\operatorOrderC[0,1][{\indexIteration}](\functionh)$ are obtained as
	\begin{align}
	\funcGfForC[\indexIteration](\seqx) &=  
	\begin{cases}
	\displaystyle
	(1 - \monomial[{\permutationMono[{\indexIteration}]}])
	& \indexIteration=\numberOfIterations \nonumber\\
	\displaystyle
	(1- \monomial[{\permutationMono[{\indexIteration}]}])(1 - \monomial[{\permutationMono[{\indexIteration+1}]}]) 
	& \indexIteration<\numberOfIterations \nonumber\\
	\end{cases}~,
	\end{align}
	\begin{align}
	\funcGgForC[\indexIteration](\seqx) &= 
	\begin{cases}
	\displaystyle
	0
	& \indexIteration=\numberOfIterations \nonumber\\
	\displaystyle
	(1- \monomial[{\permutationMono[{\indexIteration}]}])(1 - \monomial[{\permutationMono[{\indexIteration+1}]}])
	& \indexIteration<\numberOfIterations \nonumber\\
	\end{cases}~,
	\end{align}
	respectively. Similarly, for $\operatorOrderD[0,1][{\indexIteration}](\functionh)$,
	\begin{align}
	\funcGfForD[\indexIteration](\seqx) &=  
	\begin{cases}
	\displaystyle
	\monomial[{\permutationMono[{\indexIteration}]}]
	& \indexIteration=\numberOfIterations \nonumber\\
	\displaystyle
	{\monomial[{\permutationMono[{\indexIteration}]}]}(1 - \monomial[{\permutationMono[{\indexIteration+1}]}])  & \indexIteration<\numberOfIterations \nonumber\\
	\end{cases}~,
	\end{align}
	\begin{align}
	\funcGgForD[\indexIteration](\seqx) &= 
	\begin{cases}
	\displaystyle
	0 & \indexIteration=\numberOfIterations \nonumber\\
	\displaystyle
	{\monomial[{\permutationMono[{\indexIteration}]}]}(1 - \monomial[{\permutationMono[{\indexIteration+1}]}]) & \indexIteration<\numberOfIterations \nonumber\\
	\end{cases}~.
	\end{align}
	For $\operatorOrderDtilde[0,1][{\indexIteration}](\functionh)$,
	\begin{align}
	\funcGfForDtilde[\indexIteration](\seqx) &=  
	\begin{cases}
	\displaystyle
	0
	& \indexIteration=\numberOfIterations \nonumber\\
	\displaystyle
	(1 - \monomial[{\permutationMono[{\indexIteration}]}])\monomial[{\permutationMono[{\indexIteration+1}]}] & \indexIteration<\numberOfIterations \nonumber\\
	\end{cases}~,
	\end{align}
	\begin{align}
	\funcGgForDtilde[\indexIteration](\seqx) &= 
	\begin{cases}
	\displaystyle
	(1 - \monomial[{\permutationMono[{\indexIteration}]}]) 
	& \indexIteration=\numberOfIterations \nonumber\\
	\displaystyle
	(1 - \monomial[{\permutationMono[{\indexIteration}]}])\monomial[{\permutationMono[{\indexIteration+1}]}] 
	& \indexIteration<\numberOfIterations \nonumber\\
	\end{cases}~.
	\end{align}
	For $\operatorOrderCtilde[0,1][{\indexIteration}](\functionh)$,
	\begin{align}
	\funcGfForCtilde[\indexIteration](\seqx) &=  
	\begin{cases}
	\displaystyle
	0
	& \indexIteration=\numberOfIterations \nonumber\\
	\displaystyle
	{\monomial[{\permutationMono[{\indexIteration}]}]}\monomial[{\permutationMono[{\indexIteration+1}]}]
	& \indexIteration<\numberOfIterations \nonumber\\
	\end{cases}~,
	\end{align}
	\begin{align}
	\funcGgForCtilde[\indexIteration](\seqx) &= 
	\begin{cases}
	\displaystyle
	\monomial[{\permutationMono[{\indexIteration}]}]
	& \indexIteration=\numberOfIterations \nonumber\\
	\displaystyle
	{\monomial[{\permutationMono[{\indexIteration}]}]}\monomial[{\permutationMono[{\indexIteration+1}]}]
	& \indexIteration<\numberOfIterations \nonumber\\
	\end{cases}~.
	\end{align}
	Therefore, the combined effects of the operators $\operatorOrderC[0,1][{\indexIteration}](\functionh)$,
	$\operatorOrderD[0,1][{\indexIteration}](\functionh)$, 
	$\operatorOrderCtilde[0,1][{\indexIteration}](\functionh)$, and
	$\operatorOrderDtilde[0,1][{\indexIteration}](\functionh)$ on the coefficients of $\polyVariableIt^\varMonomial$ of $\functionf[\indexIteration]$ and $\functiong[\indexIteration]$ can calculated as
\if \IEEEsubmission0
\begin{align}
\compositeOperatorFcd[\varMonomial][\functionh]=&\functionh\prod_{\indexIteration=1}^{\numberOfIterations}
\polySeq[{\seqGcRecursion[{\permutationOrderEle[\indexIteration]}]}][\polyVariable] ^{\funcGfForC[\indexIteration](\seqx)}
\polySeq[{\seqGcTildeRecursion[{\permutationOrderEle[\indexIteration]}]}][\polyVariable] ^{\funcGfForCtilde[\indexIteration](\seqx)}\nonumber
\\&~~~~~~~~~~~
\times\polySeq[{\seqGdRecursion[{\permutationOrderEle[\indexIteration]}]}][\polyVariable] ^{\funcGfForD[\indexIteration](\seqx)}
\polySeq[{\seqGdTildeRecursion[{\permutationOrderEle[\indexIteration]}]}][\polyVariable] ^{\funcGfForDtilde[\indexIteration](\seqx)}\nonumber
\\\stackrel{(a)}{=}&
\functionh\prod_{\indexIteration=1}^{\numberOfIterations}
\polySeq[{\seqGcRecursion[{\permutationOrderEle[\indexIteration]}]}][\polyVariable] {\funcGfForC[\indexIteration](\seqx)}+
\polySeq[{\seqGcTildeRecursion[{\permutationOrderEle[\indexIteration]}]}][\polyVariable] {\funcGfForCtilde[\indexIteration](\seqx)}
\nonumber
\\&~~~~~~~~~~~
+\polySeq[{\seqGdRecursion[{\permutationOrderEle[\indexIteration]}]}][\polyVariable] {\funcGfForD[\indexIteration](\seqx)}+
\polySeq[{\seqGdTildeRecursion[{\permutationOrderEle[\indexIteration]}]}][\polyVariable] {\funcGfForDtilde[\indexIteration](\seqx)}\nonumber
\\=&
\functionh\times\funcfForCommonOrder(\seqx,\polyVariable)\nonumber~.
\end{align}
\else
\begin{align}
\compositeOperatorFcd[\varMonomial][\functionh]=&\functionh\prod_{\indexIteration=1}^{\numberOfIterations}
\polySeq[{\seqGcRecursion[{\permutationOrderEle[\indexIteration]}]}][\polyVariable] ^{\funcGfForC[\indexIteration](\seqx)}
\polySeq[{\seqGcTildeRecursion[{\permutationOrderEle[\indexIteration]}]}][\polyVariable] ^{\funcGfForCtilde[\indexIteration](\seqx)}\polySeq[{\seqGdRecursion[{\permutationOrderEle[\indexIteration]}]}][\polyVariable] ^{\funcGfForD[\indexIteration](\seqx)}
\polySeq[{\seqGdTildeRecursion[{\permutationOrderEle[\indexIteration]}]}][\polyVariable] ^{\funcGfForDtilde[\indexIteration](\seqx)}\nonumber
\\\stackrel{(a)}{=}&
\functionh\prod_{\indexIteration=1}^{\numberOfIterations}
\polySeq[{\seqGcRecursion[{\permutationOrderEle[\indexIteration]}]}][\polyVariable] {\funcGfForC[\indexIteration](\seqx)}+
\polySeq[{\seqGcTildeRecursion[{\permutationOrderEle[\indexIteration]}]}][\polyVariable] {\funcGfForCtilde[\indexIteration](\seqx)}
\nonumber
+\polySeq[{\seqGdRecursion[{\permutationOrderEle[\indexIteration]}]}][\polyVariable] {\funcGfForD[\indexIteration](\seqx)}+
\polySeq[{\seqGdTildeRecursion[{\permutationOrderEle[\indexIteration]}]}][\polyVariable] {\funcGfForDtilde[\indexIteration](\seqx)}\nonumber
\\=&
\functionh\times\funcfForCommonOrder(\seqx,\polyVariable)\nonumber~.
\end{align}
\fi
and
\if \IEEEsubmission0
\begin{align}
\compositeOperatorGcd[\varMonomial][\functionh]=&\functionh\prod_{\indexIteration=1}^{\numberOfIterations} \polySeq[{\seqGcRecursion[{\permutationOrderEle[\indexIteration]}]}][\polyVariable] ^{\funcGgForC[\indexIteration](\seqx)} \polySeq[{\seqGcTildeRecursion[{\permutationOrderEle[\indexIteration]}]}][\polyVariable] ^{\funcGgForCtilde[\indexIteration](\seqx)} \nonumber \\
												&~~~~~~~~~~~\times \polySeq[{\seqGdRecursion[{\permutationOrderEle[\indexIteration]}]}][\polyVariable] ^{\funcGgForD[\indexIteration](\seqx)} \polySeq[{\seqGdTildeRecursion[{\permutationOrderEle[\indexIteration]}]}][\polyVariable] ^{\funcGgForDtilde[\indexIteration](\seqx)}\nonumber \\
							   \stackrel{(b)}{=}&\functionh\prod_{\indexIteration=1}^{\numberOfIterations} \polySeq[{\seqGcRecursion[{\permutationOrderEle[\indexIteration]}]}][\polyVariable] {\funcGgForC[\indexIteration](\seqx)}+\polySeq[{\seqGcTildeRecursion[{\permutationOrderEle[\indexIteration]}]}][\polyVariable] {\funcGgForCtilde[\indexIteration](\seqx)}\nonumber\\
                                                &~~~~~~~~~~~ \polySeq[{\seqGdRecursion[{\permutationOrderEle[\indexIteration]}]}][\polyVariable] {\funcGgForD[\indexIteration](\seqx)}+\polySeq[{\seqGdTildeRecursion[{\permutationOrderEle[\indexIteration]}]}][\polyVariable] {\funcGgForDtilde[\indexIteration](\seqx)}\nonumber\\
                                               =&\functionh\times\funcgForCommonOrder(\seqx,\polyVariable)\nonumber~,
\end{align}
\else
\begin{align}
\compositeOperatorGcd[\varMonomial][\functionh]=&\functionh\prod_{\indexIteration=1}^{\numberOfIterations} \polySeq[{\seqGcRecursion[{\permutationOrderEle[\indexIteration]}]}][\polyVariable] ^{\funcGgForC[\indexIteration](\seqx)} \polySeq[{\seqGcTildeRecursion[{\permutationOrderEle[\indexIteration]}]}][\polyVariable] ^{\funcGgForCtilde[\indexIteration](\seqx)} 
												\polySeq[{\seqGdRecursion[{\permutationOrderEle[\indexIteration]}]}][\polyVariable] ^{\funcGgForD[\indexIteration](\seqx)} \polySeq[{\seqGdTildeRecursion[{\permutationOrderEle[\indexIteration]}]}][\polyVariable] ^{\funcGgForDtilde[\indexIteration](\seqx)}\nonumber \\
						       \stackrel{(b)}{=}&\functionh\prod_{\indexIteration=1}^{\numberOfIterations} \polySeq[{\seqGcRecursion[{\permutationOrderEle[\indexIteration]}]}][\polyVariable] {\funcGgForC[\indexIteration](\seqx)}+\polySeq[{\seqGcTildeRecursion[{\permutationOrderEle[\indexIteration]}]}][\polyVariable] {\funcGgForCtilde[\indexIteration](\seqx)} + \polySeq[{\seqGdRecursion[{\permutationOrderEle[\indexIteration]}]}][\polyVariable] {\funcGgForD[\indexIteration](\seqx)}+\polySeq[{\seqGdTildeRecursion[{\permutationOrderEle[\indexIteration]}]}][\polyVariable] {\funcGgForDtilde[\indexIteration](\seqx)}\nonumber\\
                                               =&\functionh\funcgForCommonOrder(\seqx,\polyVariable)\nonumber~,
\end{align}
\fi
%
respectively, where (a) ((b)) is because only one of the functions among
${\funcGfForC[\indexIteration](\seqx)}$,
${\funcGfForCtilde[\indexIteration](\seqx)}$,
${\funcGfForD[\indexIteration](\seqx)}$, and
${\funcGfForDtilde[\indexIteration](\seqx)}$
 (${\funcGgForC[\indexIteration](\seqx)}$,
${\funcGgForCtilde[\indexIteration](\seqx)}$,
${\funcGgForD[\indexIteration](\seqx)}$, and
${\funcGgForDtilde[\indexIteration](\seqx)}$) is 1 while the others are equal to 0.

By defining $\angleGolay[\indexIteration]\triangleq\exponentialBase^{\constantj\angleexp[\indexIteration]}$ and using the identity $\exponentialBase^{\constantj\frac{\numberOfPointsForPSK}{2}}=-1$, the coefficients of $\polyVariableIt^\varMonomial$ of $\functionf[\indexIteration]$ and $\functiong[\indexIteration]$ due to the operators $\operatorSign[0,1][{\indexIteration}](\functionh)$, $\operatorAngleScaleB[0,1][{\indexIteration}](\functionh)$, and $\operatorSeparation[0,1][{\indexIteration}](\functionh)$ are  obtained in \cite{Sahin_2018}. Their compositions can be expressed as
$
\compositeOperatorFangle[\varMonomial][\functionh]=\compositeOperatorGangle[\varMonomial][\functionh] =\functionh\exponentialBase^{\constantj	\funcfForCommonPhase(\seqx)}\times \polyVariable^{\funcfForCommonShift(\seqx)}.
$
Finally, by composing $\compositeOperatorFcd[\varMonomial][\functionh]$ and  $\compositeOperatorFangle[\varMonomial][\functionh]$, and
$\compositeOperatorGcd[\varMonomial][\functionh]$
and $\compositeOperatorGangle[\varMonomial][\functionh]$, $\polySeq[{\seqGaIt[\numberOfIterations]}][\polyVariable]  $ and $\polySeq[{\seqGbIt[\numberOfIterations]}][\polyVariable]$ can be calculated as
\begin{align}
\polySeq[{\seqGaIt[\numberOfIterations]}][\polyVariable] =&\functionf[\numberOfIterations] \nonumber \if\IEEEsubmission0
\\
\fi
 =& \sum_{\varMonomial=0}^{2^\numberOfIterations-1} \compositeOperatorFcd[\varMonomial][{\compositeOperatorFangle[\varMonomial][{\functionh}]}]\polyVariableIt^\varMonomial\big|_{\functionh=\functionf[0]=\functiong[0]=1}\nonumber \if\IEEEsubmission0
 \\
 \fi
=&
\sum_{\varMonomial=0}^{2^\numberOfIterations-1} \funcfForCommonOrder(\seqx,\polyVariable)\times 
\exponentialBase^{ \constantj \funcfForCommonPhase(\seqx)}\times \polyVariable^{\funcfForCommonShift(\seqx)}\times \polyVariableIt^\varMonomial \nonumber
\end{align}
and
\begin{align}
\polySeq[{\seqGbIt[\numberOfIterations]}][\polyVariable] =&\functiong[\numberOfIterations] \nonumber  \if\IEEEsubmission0
\\
\fi
 =& \sum_{\varMonomial=0}^{2^\numberOfIterations-1} \compositeOperatorGcd[\varMonomial][{\compositeOperatorGangle[\varMonomial][{\functionh}]}]\polyVariableIt^\varMonomial\big|_{\functionh=\functionf[0]=\functiong[0]=1}\nonumber
 \if\IEEEsubmission0
 \\
 \fi
=&
\sum_{\varMonomial=0}^{2^\numberOfIterations-1} \funcgForCommonOrder(\seqx,\polyVariable)\times 
\exponentialBase^{ \constantj \funcfForCommonPhase(\seqx)}\times \polyVariable^{\funcfForCommonShift(\seqx)}\times \polyVariableIt^\varMonomial~, \nonumber
\end{align}
respectively, where $\polyVariableIt$ can be  chosen arbitrarily as $\polyVariableIt=\polyVariable^\varUpsample$.

The sequences $\seqGaIt[\numberOfIterations]$ and $\seqGbIt[\numberOfIterations]$ construct a \ac{GCP} based on \eqref{eq:iterationGolay}. Since the phase rotation does not change the \ac{APAC} of a sequence,  $\seqGaIt[\numberOfIterations]\times\exponentialBase^{\constantj\arbitraryPhaseA}$ and $\seqGbIt[\numberOfIterations]\times\exponentialBase^{\constantj\arbitraryPhaseB}$  also construct a \ac{GCP}.	
	
\end{proof}

\section{Representation of a Recursion}
\label{app:Algebraic}
For $\indexIteration=1,2,\dots,\numberOfIterations$, consider a recursion given by
\begin{align}
\functionf[\indexIteration] = \operator[11][{\indexIteration}]({\functionf[\indexIteration-1]}) + \operator[12][{\indexIteration}]({\functiong[\indexIteration-1]})\polyVariableIt^{2^\permutationShift[{\indexIteration}]}~, \nonumber\\
\functiong[\indexIteration] = \operator[21][{\indexIteration}]({\functionf[\indexIteration-1]}) + \operator[22][{\indexIteration}]({\functiong[\indexIteration-1]})\polyVariableIt^{2^\permutationShift[{\indexIteration}]}~,
\label{eq:iterationBasic}
\end{align}
where $\polyVariableIt$ is an arbitrary complex number, $\permutationShift[{\indexIteration}]$ is the $\indexIteration$th element of the sequence $\seqPermutationShift\triangleq(\permutationShift[{\indexIteration}])_{i=1}^{\numberOfIterations}$ defined by the permutation of $\{0,1,\dots,\numberOfIterations-1\}$,  $\operator[ij][{\indexIteration}]\in\{\operatorBinary[0][\indexIteration],\operatorBinary[1][\indexIteration]\}$  is a linear operator which transforms one function to another function in $\functionSpace$, and $\functionf[0]=\functiong[0]=\functionh$. In  \cite{Sahin_2018}, it was shown that $\functionf[\numberOfIterations]$ and $\functiong[\numberOfIterations]$ can be obtained as 
\begin{align}
\functionf[\numberOfIterations] = \sum_{\varMonomial=0}^{2^\numberOfIterations-1} \overbrace{\operatorBinary[{\funcGfForANF[\numberOfIterations](\seqx)}][\numberOfIterations]\dots\operatorBinary[\funcGfForANF[\indexIteration](\seqx)][\indexIteration]\dots\operatorBinary[\funcGfForANF[2](\seqx)][2]\operatorBinary[\funcGfForANF[1](\seqx)][1](\functionh)}^{\compositeOperatorF[\varMonomial][\functionh] }\polyVariableIt^\varMonomial ~,
\label{eq:finalSequenceFBasic}
\end{align}
and
\begin{align}
\functiong[\numberOfIterations] = \sum_{\varMonomial=0}^{2^\numberOfIterations-1} \overbrace{ \operatorBinary[\funcGgForANF[\numberOfIterations](\seqx)][\numberOfIterations]\dots\operatorBinary[\funcGgForANF[\indexIteration](\seqx)][\indexIteration]\dots\operatorBinary[\funcGgForANF[2](\seqx)][2]\operatorBinary[\funcGgForANF[1](\seqx)][1](\functionh)}^{\compositeOperatorG[\varMonomial][\functionh]  }\polyVariableIt^\varMonomial~,
\end{align}
where  
\begin{align}
\funcGfForANF[\indexIteration](\seqx) &=  
\begin{cases}
\displaystyle
\binaryAsignment[11][{\indexIteration}] (1 - \monomial[{\permutationMono[{\indexIteration}]}]) + 
\binaryAsignment[12][{\indexIteration}]\monomial[{\permutationMono[{\indexIteration}]}]
& \indexIteration=\numberOfIterations \\
\displaystyle
\binaryAsignment[11][{\indexIteration}](1- \monomial[{\permutationMono[{\indexIteration}]}])(1 - \monomial[{\permutationMono[{\indexIteration+1}]}]) 
\if\IEEEsubmission0
\\~~~~+ 
\else
+
\fi
\binaryAsignment[12][{\indexIteration}]{\monomial[{\permutationMono[{\indexIteration}]}]}(1 - \monomial[{\permutationMono[{\indexIteration+1}]}]) \\~~~~+
\binaryAsignment[21][{\indexIteration}](1 - \monomial[{\permutationMono[{\indexIteration}]}])\monomial[{\permutationMono[{\indexIteration+1}]}] 
\if\IEEEsubmission0
\\~~~~+ 
\else
+
\fi
\binaryAsignment[22][{\indexIteration}]{\monomial[{\permutationMono[{\indexIteration}]}]}\monomial[{\permutationMono[{\indexIteration+1}]}]
& \indexIteration<\numberOfIterations \\
\end{cases}~,	\label{eq:closedformANFsF}
\end{align}
\begin{align}
\funcGgForANF[\indexIteration](\seqx) &= 
\begin{cases}
\displaystyle
\binaryAsignment[21][{\indexIteration}] (1 - \monomial[{\permutationMono[{\indexIteration}]}]) + 
\binaryAsignment[22][{\indexIteration}]\monomial[{\permutationMono[{\indexIteration}]}]
& \indexIteration=\numberOfIterations \\
\displaystyle
\binaryAsignment[11][{\indexIteration}](1- \monomial[{\permutationMono[{\indexIteration}]}])(1 - \monomial[{\permutationMono[{\indexIteration+1}]}])
\if\IEEEsubmission0
\\~~~~+ 
\else
+
\fi
\binaryAsignment[12][{\indexIteration}]{\monomial[{\permutationMono[{\indexIteration}]}]}(1 - \monomial[{\permutationMono[{\indexIteration+1}]}]) \\~~~~+
\binaryAsignment[21][{\indexIteration}](1 - \monomial[{\permutationMono[{\indexIteration}]}])\monomial[{\permutationMono[{\indexIteration+1}]}] \if\IEEEsubmission0
\\~~~~+ 
\else
+
\fi
\binaryAsignment[22][{\indexIteration}]{\monomial[{\permutationMono[{\indexIteration}]}]}\monomial[{\permutationMono[{\indexIteration+1}]}]
& \indexIteration<\numberOfIterations \\
\end{cases}~,
\label{eq:closedformANFsG}
\end{align}
for $\indexIteration=1,2,\dots,\numberOfIterations$, where $\permutationMono[\indexIteration]=\numberOfIterations - \permutationShift[\indexIteration]$ is the $\indexIteration$th element of the sequence $\seqPermutationCompShift\triangleq(\permutationMono[\indexIteration])_{\indexIteration=1}^{\numberOfIterations}$,   $\binaryAsignment[ij][{\indexIteration}]=0$ if  $\operator[ij][{\indexIteration}] = \operatorBinary[0][{\indexIteration}]$ and  $\binaryAsignment[ij][{\indexIteration}]=1$ if  $\operator[ij][{\indexIteration}] = \operatorBinary[1][{\indexIteration}]$. The vector $ \vecArrangement[\indexIteration]=
[\binaryAsignment[11][{\indexIteration}]~\binaryAsignment[12][{\indexIteration}]~
\binaryAsignment[21][{\indexIteration}]~\binaryAsignment[22][{\indexIteration}] 
]$ is denoted as the configuration vector.

The functions  $\funcGfForANF[\indexIteration](\seqx)$ and $\funcGgForANF[\indexIteration](\seqx)$  show which of the two operators, i.e., $\operatorBinary[0][\indexIteration]$ and $\operatorBinary[1][\indexIteration]$, are involved in the construction of  $\compositeOperatorF[\varMonomial][\functionh]$ and $\compositeOperatorG[\varMonomial][\functionh]$ by setting the indices as $\operatorBinary[{\funcGfForANF[\indexIteration](\seqx)}][\indexIteration]$ and $\operatorBinary[{\funcGgForANF[\indexIteration](\seqx)}][\indexIteration]$, respectively. The binary sequences associated with  $\funcGfForANF[\indexIteration](\seqx)$ and $\funcGgForANF[\indexIteration](\seqx)$ are referred to as the $\indexIteration$th construction sequences (or indication sequences) of   $\functionf[\numberOfIterations]$ and $\functiong[\numberOfIterations]$ for $\indexIteration=1,2,\dots,\numberOfIterations$.

\bibliographystyle{IEEEtran}
\bibliography{reliableControl}

\end{document}